\documentclass[11pt]{article}
\usepackage{mathpazo}
\usepackage[T1]{fontenc}
\usepackage[margin=1in]{geometry}
\usepackage{amsmath,amssymb,amsfonts,amsthm}
\usepackage{mathtools}
\usepackage{bm}
\usepackage{bbm}
\usepackage[dvipsnames]{xcolor}
\usepackage{xspace}
\usepackage{microtype}
\usepackage{csquotes}
\usepackage{graphicx}
\usepackage{subcaption}
\usepackage{adjustbox}
\usepackage{float}
\usepackage{enumitem}
\setlist{nosep,topsep=0pt,leftmargin=*}
\usepackage{url}
\usepackage{natbib}
\usepackage{alphabeta}

\usepackage{todonotes}
\usepackage{authblk}

\definecolor{myRed}{rgb}{0.82,0.13,0.56}
\definecolor{myBlue}{RGB}{13,55,174}
\usepackage{hyperref}
\hypersetup{
  colorlinks=true,
  citecolor=PineGreen,
  linkcolor=myBlue,
  urlcolor=PineGreen
}
\usepackage[noabbrev,capitalise]{cleveref}

\usepackage{thmtools}
\usepackage{thm-restate}

\usepackage{eqparbox}

\newtheorem{remark}{Remark}
\newtheorem{proofsketch}{Proof Sketch}

\newcommand{\RED}{\textsc{ReD}\xspace}
\def\LM{\textsc{LargeMarket}}
\def\NLM{\textsc{NM-LargeMarket}}
\def\Active{\text{active}}
\def\Brem{B_{\text{rem}}}

\newcommand{\vmax}{v_{\max}}
\def\ALG{\textsc{Alg}}
\def\alg{\textsc{Alg}}
\def\OPT{\textsc{Opt}}

\newcommand{\lp}{\left} 
\newcommand{\rp}{\right}
\newcommand{\Prob}[1]{\mathbb{P}\!\lp[#1\rp]} 
\newcommand{\E}[2]{\mathbb{E}_{#1\!}\lp[#2\rp]}

\usepackage{todonotes}

\theoremstyle{plain}
\newtheorem{theorem}{Theorem}[section]

\newtheorem{lemma}[theorem]{Lemma}

\newtheorem{corollary}[theorem]{Corollary}
\theoremstyle{definition}
\newtheorem{definition}[theorem]{Definition}

\usepackage[ruled]{algorithm2e} 

\SetAlFnt{\small}
\SetAlCapFnt{\small}
\SetAlCapNameFnt{\small}
\SetAlCapHSkip{0pt}
\IncMargin{-\parindent}
\DontPrintSemicolon
\SetKwInput{KwInput}{Input}
\SetKwInput{KwOutput}{Output}
\SetKw{KwRet}{return}
\SetKw{KwWith}{with probability}

\author[1]{Andreas Charalampopoulos}
\author[2,3]{Dimitris Fotakis}
\author[2,3]{Thanos Tolias}

\affil[1]{Columbia University, New York, USA}
\affil[2]{National Technical University of Athens, Greece}
\affil[3]{Archimedes RU, Athena RC, Greece}

\affil[ ]{\texttt{{
\href{mailto:ac5951@columbia.edu}{ac5951@columbia.edu} \quad
\href{mailto:fotakis@cs.ntua.gr}{fotakis@cs.ntua.gr} \quad
\href{mailto:thanostolias@mail.ntua.gr}{thanostolias@mail.ntua.gr}}}}

\title{Repeated Descent:\\ A Framework for Online Budget-Feasible Auctions}
\date{}

\begin{document}

\maketitle

\renewcommand\thefootnote{}\footnotetext{This work has been partially supported by project MIS 5154714 of the National Recovery and
Resilience Plan Greece 2.0 funded by the European Union under the NextGenerationEU Program.}

\begin{abstract}
We study budget feasible procurement auctions, in which $n$ agents, each with a privately held service cost, offer their services to an employer. The employer seeks to maximize a public submodular valuation function over the set of hired agents, while facing a hard budget constraint. We consider an online posted-price setting, in which agents arrive in a uniformly random order (a.k.a. \emph{secretary arrivals}) and the employer must make irrevocable take-it-or-leave-it offers upon their arrival. The employer does not get any feedback about the agent service costs other than whether they accept the offer or not. 

We introduce Repeated Descent (a.k.a. \RED), a deterministic  framework
based on adaptive linear posted pricing. \RED enforces budget feasibility by adaptively adjusting its pricing and balancing each pricing level with the number of agents considered in it. 
Using \RED as the main building block, we obtain a $1046$-competitive posted-price mechanism for online budget feasible auctions with secretary agent arrivals and submodular valuations, thus improving on the previously best known ratio of (Charalampopoulos et al., EC 2025) by several orders of magnitude.
Combining \RED with random subsampling, we obtain the first constant-competitive posted-price budget feasible mechanism for non-monotone submodular valuations. 
On the negative side, we show that every online budget feasible mechanism with XOS valuations has a competitive ratio of $\Omega\!\left(\tfrac{\log n}{(\log\log n)^2}\right)$.
\end{abstract}

\section{Introduction}
\label{s:intro}

We consider the problem of budget-feasible procurement auctions, introduced by~\citet{Singer10}, where a set \(N\) of agents offer their services to an employer. Each agent \(a\) has a private cost \(c(a)\) for offering their service. The employer has a publicly known valuation function \(v:2^N\to \mathbb{R}_{\ge 0}\) and a hard budget \(B\), and aims to procure a subset of agents so as to maximize \(v(S)\) subject to total payments not exceeding \(B\).

In this work we study \emph{online} budget-feasible procurement in the random-order (secretary) model. The instance (costs and valuation) is fixed adversarially, and agents arrive sequentially in a uniformly random permutation. Upon each arrival, the mechanism must make irrevocable decisions without knowledge of future agents.

At its core, the underlying algorithmic problem is an online maximization problem under knapsack (budget) constraints. Given agents' costs \(c(a)\) for each \(a\in N\), the employer aims to maximize \(v(T)\) subject to \(\sum_{a\in T} c(a)\le B\). This is a classical NP-hard optimization problem, and both its offline and online variants have been extensively studied for many different classes of valuation functions. 

In procurement auctions, however, agents may misreport their private costs in an attempt to extract higher payments from the employer. Consequently, we seek to design mechanisms that are \emph{individually rational} and \emph{truthful}. Individual rationality requires that any agent \(a\) selected by the mechanism receives a payment \(p_a\) that covers their cost, i.e., \(p_a \ge c(a)\). Truthfulness requires that reporting the true cost is a dominant strategy for each agent, regardless of the reports of others.

A natural and widely adopted approach in practice is to interact with agents through sequential posted prices. Upon an agent’s arrival, a posted-price mechanism offers a payment \(p_a\) that may depend only on the history observed so far. The agent accepts the offer if and only if it exceeds their private cost, revealing minimal information about their valuation. Posted-price mechanisms are simple to understand and implement, universally truthful by construction and preserve agent privacy making them particularly attractive in applications such as crowdsourcing and online labor markets.

Since their introduction, procurement auctions have attracted substantial attention from the computer science and operations research communities due to both their practical relevance and theoretical richness. This interest has led to the study of numerous variants of the problem, distinguished by whether decisions are made offline or online, by the class of valuation functions considered, and by the interaction protocol between the employer and the agents. Below, we highlight the lines of work most closely related to our results; a broader discussion of additional variants and related progress is deferred to Section~\ref{sec: related work}.

Perhaps the closest line of work to ours is the study of posted-price procurement auctions with monotone submodular valuations, initiated by \citet{Bada2012}. They introduced the online version of the problem under secretary arrivals and presented a constant-competitive mechanism in the direct-revelation model, along with an \(O(\log n)\)-competitive posted-price mechanism for monotone submodular valuations. More recently, \citet{CFPT2025} obtained the first constant-competitive posted-price mechanism for this setting -- albeit with a very large constant -- thereby demonstrating that posted-price mechanisms can match direct-revelation mechanisms in terms of asymptotic performance. Our results for monotone submodular valuations significantly strengthen this line of work by improving the competitive ratio by a factor of $200$.

Another closely related line of research concerns online procurement auctions with non-monotone submodular valuations. \citet{AmanatidisKS22} established constant-competitive guarantees in the online setting under the direct-revelation protocol. In contrast, our work shows that comparable guarantees can be achieved even under the more restrictive posted-price model, thereby extending the scope of constant-competitive online procurement mechanisms beyond monotone objectives.
Our results provide a definitive positive answer to the motivating question of \citet{Bada2012} about the relative power of posted-price mechanisms for online budget feasible auctions compared against their direct revelation counterparts.

Finally, we complement our positive results with a logarithmic lower bound for XOS valuations, which applies to any online budget-feasible mechanism and constitutes, to the best of our understanding, the first impossibility result of this kind for online procurement auctions. Interestingly, these lower bounds establish a gap between online and offline budget-feasible procurement auctions: for XOS valuations, constant-factor approximation mechanisms are known offline, whereas our results show that such guarantees are impossible online.

\subsection{Technical Overview and Contribution}

In this work, we introduce \emph{Repeated Descent} (\RED), a deterministic routine tailored to the large-market regime of online procurement auctions with secretary arrival order. We use \RED as a core building block to obtain improved guarantees for posted-price online procurement auctions with monotone submodular valuations, and to design the first constant-competitive posted-price mechanism for online procurement with non-monotone submodular valuations. Finally, we establish a logarithmic lower bound for XOS valuations that applies to \emph{any} online budget-feasible mechanism.

Before describing our mechanism, we briefly recall a central primitive from the literature on budget-feasible procurement: \emph{linear pricing}. Given a target value $\hat t$, referred to throughout as a \emph{threshold} and intended to approximate the offline optimum $\OPT$, the mechanism offers an arriving agent $a$ a payment proportional to its marginal contribution,
\(
p_a \;=\; \frac{B}{\hat t}\cdot v(a\mid T),
\)
where $T$ denotes the set of previously accepted agents and $v(a\mid T):=v(T\cup\{a\})-v(T)$ is the marginal value of $a$ with respect to $T$. A well-established principle in budget-feasible procurement is that linear pricing is effective whenever it operates at the correct scale. In particular, if the mechanism spends a constant fraction of the budget while posting prices computed using thresholds that are within a constant factor of $\OPT$, then it necessarily recovers a constant fraction of the optimal value. On the other hand, if the threshold is significantly lower (higher) than $\OPT$, then the posted prices are too generous (too stingy), and the mechanism cannot achieve high value.

Building on this principle, prior approaches have focused on identifying an appropriate pricing scale, either by explicitly guessing a threshold~\cite{Bada2012} or by setting up a convoluted, multi-stage search to find a good threshold with positive probability~\cite{CFPT2025}. 
Our approach departs from this paradigm.

 \RED induces dynamics that steer the mechanism toward an appropriate threshold without ever estimating it explicitly. It begins at the maximum feasible threshold and repeatedly halves it (i.e., repeatedly descends over candidate thresholds) until a prescribed value cutoff is met; once sufficient value has been accumulated, it resets the threshold to its maximum level to prevent prolonged overspending. A key structural feature of \RED is that it enforces budget feasibility deterministically: by design, the routine expends agents faster than it expends budget, so the supply of agents is exhausted before the budget can be.
 

 These dynamics yield a win--win structure. Either the cutoff is reached frequently, in which case each success certifies that the mechanism has made steady progress by accumulating a fixed amount of value. Or successes are infrequent, in which case the threshold can only drift downward and therefore remains low for long stretches of the execution. 

In the latter regime, \RED makes many offers using thresholds below the desired scale, namely below a constant-factor approximation of $\OPT$, while still never risking budget exhaustion. During these stretches, the prices posted by \RED dominate the linear prices corresponding to any threshold within a constant factor of $\OPT$ (with respect to the same set of previously accepted agents). As a consequence, every agent who would accept such a linear-price offer is also accepted by \RED.

Therefore, in this regime it suffices to show that applying linear pricing at a threshold that is a constant-factor approximation of $\OPT$ extracts a constant fraction of the optimal value from the agents encountered during these rounds. Informally, it is standard in the budget-feasible procurement literature that linear pricing at an appropriate scale can extract a constant fraction of the value \emph{present} among the agents to which it is applied, provided the mechanism remains budget-feasible. Thus, after establishing that \RED spends a substantial portion of the execution posting prices that dominate those corresponding to thresholds near the desired scale, the main remaining challenge is to certify that these time periods indeed contain substantial value.

We address this using the random-order arrival model together with the large-market assumption and a deterministic (permutation independent) partitioning of the input. Intuitively, large markets preclude instances in which $\OPT$ is concentrated in a few exceptional agents, and random-order arrivals prevent the adversary from systematically placing all high-value agents outside the portions of the sequence where \RED posts low thresholds. To quantify the value in these portions, we partition the second phase into fixed-size \emph{rounds} matched to the threshold scale and use concentration of measure to show that many rounds are \emph{dense}, i.e., they contain significant value at the scale of $v_{\max}$. Separately, a deterministic counting argument lower bounds the number of rounds on which \RED posts low thresholds whenever successful descents are infrequent. Putting everything together implies that \RED operates at low thresholds on many dense rounds, and the standard linear-pricing guarantee then yields constant-factor value recovery.

We first instantiate the above framework for monotone submodular valuation functions. Our main result for this setting is the following, which improves on the best known posted-price competitive ratio of \citet{CFPT2025} by a factor of at least $200$.

\begin{theorem}
\label{thm:monotone informal}
There exists a universally truthful, budget-feasible posted-price mechanism for online procurement auctions with secretary arrivals and monotone submodular valuations that is $1046$-competitive.
\end{theorem}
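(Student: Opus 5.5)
The mechanism randomizes between two branches and we bound each against a regime of the instance. Let $v_{\max}=\max_a v(a)$ over agents with $c(a)\le B$, and fix a constant $\delta$ to be tuned later. If some agent carries a $\delta$-fraction of $\OPT$, so $v_{\max}\ge \delta\,\OPT$, the \emph{small market} branch handles it. This branch is a classical secretary routine: observe a prefix of the agents and then offer the whole budget $B$ to the first later agent whose singleton value beats the best value seen so far. This branch is trivially truthful and budget-feasible, and by the secretary argument it recovers $v_{\max}/e$ in expectation. The \emph{large market} branch runs \RED and handles $v_{\max}<\delta\,\OPT$. Choosing each branch with a tuned probability, the competitive ratio is a weighted harmonic-type combination of the two branch guarantees. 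The constant $1046$ comes from optimizing $\delta$ and the mixing probability. Universal truthfulness holds because every branch is a sequential posted-price mechanism whose offers depend only on history.

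For the large-market branch I would proceed in four steps. \emph{Budget feasibility.} Deterministically, \RED assigns to each threshold level $\hat t$ a number of agents so that total payments at that level are at most $v$-cutoff$\cdot B/\hat t$, and the per-level spending forms a geometric series. The supply of agents therefore runs out before the budget. \emph{Sampling.} Using a first-phase sample of a constant fraction of the agents, estimate $v_{\max}$ and so the scale of the rounds. Then partition the second phase into fixed-size rounds. \emph{Density.} Since every agent's contribution to $\OPT$ is at most $\delta\,\OPT$, the value of $\OPT\cap R$ for a random round $R$ concentrates. By a bounded-differences or Chernoff bound over the random permutation, a constant fraction of rounds are dense, meaning $v(\OPT\cap R)\gtrsim |R|\,\OPT/n$. \emph{Win--win.} Either the cutoff is hit at least $k$ times, giving value at least $k$ times the cutoff, which is $\Omega(\OPT)$. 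Or a counting argument shows that on many rounds the current threshold lies below $\OPT/\gamma$. On those rounds the prices dominate linear prices at scale $\OPT/\gamma$, so every agent who would accept the linear price also accepts \RED's offer.

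On a dense low-threshold round I would apply the standard linear-pricing lemma via submodularity. Each agent $a\in \OPT\cap R$ that was rejected has $c(a) > \frac{B}{\hat t}v(a\mid T)$. Summing over these agents and using $\sum c(a)\le B$ gives $\sum v(a\mid T)\le \hat t$. Monotone submodularity then gives $v(T)\ge v(\OPT\cap R) - \hat t$. Summing over rounds, while carefully accounting for $T$ growing across rounds, yields $\Omega(\OPT)$. This is where monotonicity is used.

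The main obstacle is the counting step combined with density. We must show that the set of low-threshold rounds has large intersection with the set of dense rounds, even though the former depends on the permutation. I would avoid conditioning by showing that the set of \emph{non}-dense rounds is small with high probability, uniformly over all choices, using a union bound over rounds. Then any set of low-threshold rounds large enough deterministically forces many of its members to be dense. Tracking all constants through sampling loss, density probability and cutoff choice is what produces $1046$. I expect tuning these constants to be tedious but routine once the structure is in place.
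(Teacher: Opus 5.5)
You have the right architecture for the large-market part (learn $v_{\max}$ from a sample, deterministic budget feasibility, fixed rounds, density, the counting dichotomy). However, two steps do not go through as proposed, and the constant $1046$ cannot come out of your design.

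\textbf{Two branches are not enough.} Suppose you mix a secretary branch that is $e/\delta$-competitive when $v_{\max}\ge\delta\,\OPT$ with a large-market branch that is $\alpha(\delta)$-competitive otherwise. The mixture has ratio $e/\delta+\alpha(\delta)$, so you would need $\delta>e/1046\approx 1/385$. That means \RED would already have to be well under $1046$-competitive at $\OPT\approx 400\,v_{\max}$, and it is not. Rounds have length $\ell=50\,n\,v_{\max}/t'$ with $t'<\OPT/8$, so for $\OPT\le 800\,v_{\max}$ we get $\ell>n/2$. Then \RED cannot even descend to scale $t'$ on its $n/2$ agents. Moreover, the density and counting arguments (Bernstein per round, $R\ge 5$) need $\OPT\ge 8000\,v_{\max}$.

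The paper fills this regime with a third branch, \textsc{MediumMarket}. It learns $v_{\max}$ from a random half at price $0$, then posts linear prices at threshold $m\,\widehat{v}_{\max}$ with $m$ uniform over a six-point geometric net. This is $326$-competitive for $29\,v_{\max}\le\OPT\le 8000\,v_{\max}$. Dynkin handles $\OPT\le 29\,v_{\max}$ with ratio $29e\le 80$, and \LM\ is $640$-competitive above $8000\,v_{\max}$. Mixing with probabilities proportional to $80,326,640$ is what produces $1046$.

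\textbf{The value-recovery step loses additively.} Charging rejections against $\sum_a c(a)\le B$ gives $v(T)\ge v(\OPT\cap R)-t'$. One round carries only about $\ell\,\OPT/n=50\,v_{\max}\OPT/t'\le 800\,v_{\max}$, while $t'\ge\OPT/16$ grows with $\OPT/v_{\max}$. So each per-round inequality is vacuous for large $\OPT/v_{\max}$, and summing them only accumulates losses. Pooling all dense low-threshold rounds into one set pays $t'$ only once. But with the dichotomy's constants that set is worth only a few percent of $\OPT$ (the paper certifies $\OPT/120$), which is below $t'$.

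The missing idea is the benchmark $C^*$ of Badanidiyuru et al.: the agents accepted by linear pricing at threshold $\OPT/2$ along a fixed order $\pi_0$. It satisfies $v(C^*)\ge(\OPT-v_{\max})/2$. By submodularity along $\pi_0$, every $Q'\subseteq C^*$ has $c(Q')\le\frac{2B}{\OPT}v(Q')$. Since rejections happen at thresholds below $\OPT/8$, this gives $\sum_{a\in Q'\setminus T}v(a\mid T)\le\frac{\OPT}{8B}c(Q')\le\frac14 v(Q')$. The loss is now \emph{multiplicative}, so $v(T)\ge\frac34 v(Q')\ge\OPT/160$, however small $v(Q')$ is relative to $\OPT$. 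This is why the paper measures density by the marginals $w$ of $C^*$ rather than of $\OPT$. Raising $\gamma$ until the pooled value beats $t'$ would restore some constant, but the success-case value scales like $t'$, so $1046$ stays out of reach.

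\textbf{Density probability.} A union bound over rounds cannot make the non-dense rounds few with high probability, because each round fails with constant probability. The paper instead bounds $\E{}{R-D\mid\mathcal E}\le 0.1R$ and applies Markov to get $\Prob{D\ge 0.8R\mid\mathcal E}\ge\frac12$.
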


We next extend the framework to non-monotone submodular valuations. The high-level structure of the analysis remains the same, but one key implication used in the monotone case no longer holds. 
Even in the monotone case, early acceptances can reduce the marginal contributions of later agents, so that even agents that would accept prices at the desired threshold may reject due to diminished marginals. However, using monotonicity, we bound this loss by charging it to the value already collected by the mechanism.

This charging argument fails in the non-monotone case. To address this difficulty, we leverage a powerful lemma due to \citet{Feige2011}, which shows that for any submodular valuation, constructing a random subset of a given set by independently including each element with probability $1/2$ yields an expected value of at least one quarter of the \emph{optimal} value attainable on that set. Since the optimal value over a set is monotone with respect to inclusion, this result allows us to recover value guarantees even when the mechanism accepts a superset of agents. By combining this randomized subsampling step with the structural guarantees provided by \RED, we are able to extend the previous logic to the non-monotone setting.


Our main result is the first competitive posted-price mechanism for non-monotone submodular valuations.

\begin{theorem}
\label{thm:non-monotone}
There exists a universally truthful, budget-feasible posted-price mechanism for online procurement auctions with secretary arrivals and non-monotone submodular valuations that is $181000$-competitive.
\end{theorem}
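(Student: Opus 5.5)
We prove \cref{thm:non-monotone} by splitting into two regimes, as in the monotone case. The mechanism flips a fair coin. With constant probability it runs a \emph{small-market} routine: a secretary-style rule that targets a single high-value agent $a^*=\arg\max_a v(\{a\})$. It observes a constant fraction of the arrivals, records the largest singleton value $v_{\max}$ seen, and afterwards offers the whole budget $B$ to the first agent whose singleton value exceeds it. This yields $\Omega(v_{\max})$ in expectation, and it also handles non-monotone $v$, since a singleton set always has value $v(\{a\})\ge 0$. Otherwise the mechanism runs \NLM: \RED on the second half of the sequence, with every arriving agent independently discarded with probability $1/2$ before any price is posted. The first half is used only to estimate $v_{\max}$ and the scale. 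Universal truthfulness is immediate, since for any fixed coin outcomes the mechanism is a sequence of take-it-or-leave-it prices depending only on history. Budget feasibility is inherited deterministically from \RED, because subsampling only removes agents. If $v_{\max}\ge \OPT/K$ for a large constant $K$, the small-market branch already gives a constant ratio. So the rest of the argument assumes the large-market condition $v_{\max}<\OPT/K$.

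In the large-market case I reuse the \RED analysis step by step, changing only the step that needed monotonicity. First, I partition the second phase into deterministic rounds of size matched to the threshold scale. Concentration under random order, which holds because $v_{\max}$ is small relative to $\OPT$, shows that a constant fraction of rounds are dense. Here "dense" is measured by $\OPT$ restricted to the round, i.e.\ $\max_{S\subseteq R,\,c(S)\le B'}v(S)$, which is monotone in $R$ even when $v$ is not. Second, I apply the deterministic win--win counting. Either the cutoff is reached $\Omega(\OPT/v_{\max})$ times, and each success certifies a fixed gain in $v(T)$. Or \RED spends many dense rounds at thresholds below $\OPT/c$, and on those rounds its prices dominate linear prices at threshold $\approx\OPT/c$ with respect to the current accepted set $T$. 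One caveat applies in the success branch. In the non-monotone case the value certified by a success could later be destroyed by further acceptances. I would avoid this by having \RED only ever accept agents with positive marginal, so that $v(T)$ is nondecreasing along the run. Each accepted agent increases $v(T)$, which makes the certified accumulated value robust.

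The main obstacle is the low-threshold branch. In the monotone proof, agents of the good set $O_R$ in a dense round who reject do so because of diminished marginals, and that loss is charged via $v(O_R\cup T)\ge v(O_R)$, i.e.\ monotonicity. This fails here. The plan is to use Feige et al.'s lemma. Let $T$ be \RED's accepted set restricted to the subsampled stream, and consider $T\cup O'$, where $O'$ is the subsampled part of $O_R$. The independent $1/2$-subsampling makes every fixed agent present with probability $1/2$. This lets us lower bound $\mathbb{E}[v(T\cup O'_{\text{present}})]$ by $\tfrac14$ of the best value attainable inside the round, $\tfrac14\max_{S\subseteq R}v(S)$. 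Then the standard linear-pricing argument applies: each present agent of $O_R$ not accepted had $v(a\mid T)<\hat t\,c(a)/B$. Summing, and using submodularity $v(T\cup O)-v(T)\le\sum_a v(a\mid T)$, bounds the lost value by $\hat t\cdot c(O_R)/B\le \OPT/c$. This gives $\mathbb{E}[v(T)]\ge \tfrac14\,\Omega(\OPT)-\OPT/c$. The delicate part is that $T$ depends on the subsample, so Feige's lemma must be applied conditionally. I would handle this by the standard trick: fix the realized set of acceptances as a function of coins, observe that for $a\in O_R$ its own coin is independent of whether it would have been rejected on price, and bound $\sum_a \Pr[a \text{ present}]\cdot v(a\mid T)$ round by round. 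Collecting the constant losses from the coin flip, the halving, the density fraction, the $1/4$ from Feige, and the cutoff parameters then yields the stated constant $181000$.
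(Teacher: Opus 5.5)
\textbf{There is a genuine gap in the low-threshold branch.} Your small-market branch, the budget argument and the success branch are fine. Accepted marginals are nonnegative, so $v(T)$ never decreases and each success certifies $49\vmax$. The gap is in the low-threshold branch, which is exactly where non-monotonicity must be handled.

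The inequality $\mathbb{E}[v(T\cup O'_{\text{present}})]\ge\frac14\max_{S\subseteq R}v(S)$ does not follow from Feige--Mirrokni--Vondr\'ak. Their bound concerns an independent $1/2$-sample of a \emph{fixed} ground set. Here $T$ is built adaptively from the same coins, and without monotonicity nothing stops $T$ from pushing $v(T\cup O')$ far below $v(O')$. Lower-bounding $v(T\cup X)$ by a multiple of $v(X)$ is exactly what monotonicity supplied in the monotone proof, and your sketch does not replace it.

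The standard way pre-sampling replaces monotonicity works only against a comparison set $X$ fixed independently of the coins. Then each agent lies in $T$ with probability at most $1/2$, which gives $\mathbb{E}[v(X\cup T)]\ge\frac12 v(X)$. Your observation that an agent's own coin is independent of its price can then charge the discarded agents of $X$ to the kept ones. This matters because discarded agents never received an offer, so they carry no rejection certificate.

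Your $O_R$ is not such a set. Which rounds are low-threshold depends on which descents succeed, and that depends on which agents were kept. So conditioning on ``many low-threshold dense rounds'' and then averaging over the coins is not legitimate. The obvious coin-independent alternative, all of $C^*$ in the second half, contains agents priced at thresholds far above $t'$, whose marginals with respect to $T$ you cannot bound.

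\textbf{How the paper avoids both problems.} In \textsc{NM-ReD} no randomness affects prices, thresholds or successes. Hence the set $C$ of benchmark agents in low-threshold dense rounds is a deterministic function of the arrival order. The mechanism keeps two disjoint tracks $S_1,S_2$, following \textsc{GenSm-Online} of Amanatidis, Kleer and Sch\"afer. It subsamples only \emph{after} acceptance, with $T_j\subseteq S_j$, so paid agents may be discarded. Two arguments then close the case:
\begin{itemize}
\item Feige's lemma is applied to the fixed sets $S_j$, giving $v(C\cap S_j)\le 4\,\mathbb{E}[v(T_j)]$.
\item The rejected agents $C_3=C\setminus(S_1\cup S_2)$ are handled deterministically by $v(C_3)\le v(C_3\cup S_1)+v(C_3\cup S_2)\le v(S_1)+v(S_2)+2t'$. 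This uses submodularity, nonnegativity and disjointness of the tracks.
\end{itemize}
Together these give Lemma~\ref{lem:amanat}, which is the ingredient your sketch lacks.

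\textbf{The constant is not derived.} The value $181000$ comes from the paper's parameters. When $\OPT\ge 50000\,\vmax$, \NLM\ is $44800$-competitive; otherwise Dynkin's algorithm is $136000$-competitive. Mixing the two gives $180800$. Your proposal does not derive any concrete constant, so it does not establish this number.
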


Finally, we complement our positive results with a logarithmic lower bound for online procurement under XOS valuations. Our construction is inspired by the framework of \citet{babaioff}. We partition the agents into $L$ disjoint groups, each of size $\ell$, and define an XOS valuation as the maximum of $L$ additive clauses, one per group. Each agent contributes only to the clause of its own group. We set the cost of every agent to $c(a)=B/\ell$, so any budget-feasible mechanism can hire at most $\ell$ agents in total.

The only remaining randomness lies in the additive coefficients. Independently for each agent, we set its coefficient to $1$ with probability $1/\ell$ and to $0$ otherwise. Equivalently, the valuation is drawn at random from a family of XOS instances with sparse additive clauses. In this distribution, the offline optimum corresponds to selecting the best group, i.e., the group containing the largest number of $1$-coefficients. When $L$ is sufficiently large, standard extreme-value bounds imply that the best group contains $\Theta(\ell)$ active agents in expectation, and hence $\E{}{\OPT}=\Theta(\ell)$.

In contrast, an online mechanism does not know future coefficients. Once it hires an agent from a group, the additional value it can hope to obtain from the remaining agents of that group is governed by a binomial random variable with expectation $<1$. Moreover, since the mechanism can hire at most $\ell$ agents overall, it can interact meaningfully with at most $\ell$ groups. As a result, the value of any online budget-feasible mechanism is upper bounded by the maximum of $\ell$ such binomial random variables, which is only $O(\log\ell)$ in expectation. Setting $\ell=\frac{\log n}{\log\log n}$ yields the following theorem.

\begin{theorem}
\label{thm:xos-lower-bound informal}
For procurement auctions with XOS valuation functions, no online budget-feasible mechanism can achieve a competitive ratio better than
$\Omega\!\left(\tfrac{\log n}{(\log\log n)^2}\right)$.
\end{theorem}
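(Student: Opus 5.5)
The plan is to apply Yao's principle to the random instance described above. We fix $L$ groups of $\ell$ agents each, so $n=L\ell$. Every agent has cost $B/\ell$, and each agent's coefficient is $1$ independently with probability $1/\ell$. The valuation is $v(S)=\max_{g}\sum_{a\in S\cap G_g} w_a$, which is XOS, with the coefficients publicly known for a given realization. Because the instance is randomized over the valuation, it suffices to prove two bounds. First, $\E{}{\OPT}=\Omega(\ell/\log\log n)$; here I expect only $\Omega(\ell/\log\ell)$ or similar, and will calibrate $\ell$ accordingly. Second, every deterministic online mechanism that learns coefficients only upon arrival has expected value $O(\log \ell)$. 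The ratio $\E{}{\OPT}/\E{}{\ALG}$ then lower-bounds the competitive ratio of any randomized mechanism. Budget feasibility forces any mechanism to pay at least $c(a)=B/\ell$ per hired agent, so at most $\ell$ agents are hired in total. Since all costs are equal and the difficulty lies only in the unknown future coefficients, truthfulness plays no role; the bound holds for arbitrary online budget-feasible algorithms.

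For the offline side, each group's count $X_g\sim\mathrm{Bin}(\ell,1/\ell)$ satisfies $\Prob{X_g\ge k}\ge \binom{\ell}{k}\ell^{-k}(1-1/\ell)^{\ell}\ge e^{-1}(1/k)^{k}\cdot(1-o(1))$ for $k\le\sqrt\ell$, or more simply $\ge (ek)^{-k}$ up to constants for $k\le \ell$. Taking $k=c\ell$ gives probability about $\ell^{-O(\ell)}=e^{-O(\ell\log \ell)}$. With $\ell=\log n/\log\log n$ we have $\ell\log\ell\approx\log n$, so choosing the constant suitably makes $L(c\ell e)^{-c\ell}\ge 1$. By independence across groups, some group has $\Omega(\ell)$ ones with constant probability. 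Hiring all $\ell$ of its agents is budget-feasible, so $\E{}{\OPT}=\Omega(\ell)$. If the constants fail, I fall back to $k=\ell/C$, which still suffices up to constants.

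For the online side, the value of the hired set $S$ is $\max_g \sum_{a\in S\cap G_g}w_a$. Charge each group $g$ that the mechanism touches to the arrival time of its first hired agent. At that moment the mechanism has seen some prefix of group $g$; the hired agents of $g$ with coefficient $1$ number at most $1$ (the first hire) plus the ones among the still-unseen agents of $g$. I will bound this via a coupling argument. Conditional on the history, the unseen coefficients of $g$ are fresh i.i.d. Bernoulli$(1/\ell)$, and fewer than $\ell$ of them remain, so they are dominated by $Y_g\sim\mathrm{Bin}(\ell,1/\ell)$. This is the main obstacle, because the set of touched groups is chosen adaptively. To handle it, I will pre-sample for the $j$-th touched group ($j\le\ell$) an independent $Y_j\sim\mathrm{Bin}(\ell,1/\ell)$ and couple it to dominate the unseen coefficients. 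This works because the identity of the $j$-th touched group is determined before its unseen coefficients are revealed. Hence $\ALG\le 1+\max_{j\le \ell}Y_j$.

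Finally, a Chernoff/Poisson tail gives $\Prob{Y_j\ge k}\le 1/k!$. A union bound over $\ell$ variables then yields $\E{}{\max_j Y_j}=O(\log\ell/\log\log\ell)\le O(\log \ell)$. Combining the two sides, the ratio is $\Omega(\ell/\log\ell)=\Omega\bigl(\tfrac{\log n}{\log\log n\cdot\log\log n}\bigr)$, which is the claimed $\Omega\!\left(\tfrac{\log n}{(\log\log n)^2}\right)$.
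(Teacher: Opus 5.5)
Your proposal is correct and takes essentially the paper's route. It uses the same random XOS instance (groups of size $\ell=\log n/\log\log n$, activation probability $1/\ell$, cost $B/\ell$), lower-bounds $\E{}{\OPT}$ by $\Omega(\ell)$ via $\Prob{X_g\ge k}=\Omega(k^{-k})$ and independence across groups, and upper-bounds the online value by $O(\log\ell)$ by charging each touched group to its first hire and dominating its remaining active agents by $\mathrm{Bin}(\ell,1/\ell)$. Indexing touched groups by the order of their first hire justifies the union bound over the adaptively chosen set of groups more cleanly than the paper's direct union bound. The $\Omega(\ell/\log\log n)$ estimate for $\E{}{\OPT}$ in your roadmap is superseded by the $\Omega(\ell)$ bound you actually prove, which is the one the final ratio needs.
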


\subsection{Further Related Work}\label{sec: related work}

Ignoring strategic considerations, maximizing a monotone submodular function subject to a knapsack constraint is a classical NP-hard optimization problem. The greedy algorithm of \citet{NemhauserWF78} achieves an approximation ratio of $e/(e-1)$, which is optimal under standard complexity assumptions~\cite{KhullerMN99}. This guarantee extends to the knapsack-constrained setting~\cite{Sviri04}. More recently, \cite{BadaDO19} presented a $(9/8+\varepsilon)$-approximation using polynomially many demand queries.

\paragraph{Offline budget-feasible procurement.} \citet{Singer10,Singer13} initiated the study of budget-feasible procurement auctions from a mechanism design perspective and gave the first constant-factor approximation for monotone submodular valuations using truthful mechanisms. \citet{ChenGL2011} substantially improved these guarantees, achieving approximation ratios of $7.91$ (randomized) and $8.34$ (deterministic) for monotone submodular valuations, and $3$ (randomized) and $2+\sqrt{2}$ (deterministic) for additive valuations. They also established unconditional lower bounds of $1+\sqrt{2}$ (deterministic) and $2$ (randomized universally truthful) for additive valuations. For additive valuations, \citet{GravinJLZ2020} matched the optimal randomized approximation ratio of $2$ and gave a deterministic $3$-approximation. \citet{AnariGN2014} obtained optimal $e/(e-1)$-approximation mechanisms for additive valuations and improved ratios for monotone submodular valuations in large markets. For monotone submodular valuations, \citet{JalalyT2021} gave a randomized $5$-approximation mechanism and improved bounds assuming access to exact value maximization. Subsequent work refined these guarantees further: \citet{BalkanskiGGST2022} introduced polynomial-time clock auctions with a $4.75$-approximation, and \citet{HanWHC23} achieved the currently best known ratios of $4.45$ (deterministic) and $4.3$ (randomized).

Polynomial-time constant-factor approximations are also known for non-monotone submodular valuations. \citet{AmanatidisKS22} and \citet{HuangHC023} obtained constant-competitive mechanisms, with the best known guarantees being $64$ for deterministic and $12$ for randomized mechanisms, both achieved via clock auctions~\cite{BalkanskiGGST2022,HanWHC23}. For XOS valuations, constant-factor approximation mechanisms are known in the offline setting, albeit with non-polynomial running time~\cite{BeiCGL2012,AmanatidisBM17}. For subadditive valuations, \citet{BalkanskiGGST2022} gave an $O(\log n/\log\log n)$-approximation via clock auctions, matching earlier randomized guarantees~\cite{BeiCGL2012} and improving on the $O(\log^3 n)$ deterministic bound of \citet{DobzinskiPS11}. Recently, \citet{swamy-subadditive} improved upon this providing an $O(\log\log n)$ approximate mechanism. We refer the reader to the survey of \citet{LCLW2024_survey} for a comprehensive overview.

\paragraph{Online budget-feasible procurement.}
The online version of budget-feasible procurement auctions, where agents arrive sequentially in random order and decisions are irrevocable, was introduced by \citet{SingerM13} and \citet{Bada2012}. In the bidding model, \citet{AmanatidisKS22} obtained constant-competitive randomized universally truthful mechanisms for both monotone and non-monotone submodular valuations. Online budget-feasible procurement is closely related to the submodular knapsack secretary problem~\cite{BateniHZ13,KesselheimT17}, where costs are revealed upon arrival.
\citet{FeldmanNS11} presented a randomized $20e$-approximation for this problem.

\paragraph{Beyond bidding.} Posted-price mechanisms are particularly appealing due to their simplicity and obvious truthfulness, but they are significantly more restrictive than bidding-based protocols. Clock auctions~\cite{BalkanskiGGST2022,HanWHC23} avoid explicit bidding but are not online, as decisions regarding agents' acceptance in the solution are revocable. 
%
%
In the online setting with irrevocable decisions upon arrival, the only known posted-price mechanisms for monotone submodular valuations prior to our work are the $O(\log n)$-competitive mechanism of ~\cite[Section~4]{Bada2012} and the constant competitive mechanism of \cite{CFPT2025} discussed before.

\section{Preliminaries}
\label{sec:prelim}

We study posted-price online procurement auctions under a hard budget constraint. There is a set $N=\{1,\dots,n\}$ of agents. Each agent $a\in N$ has a privately known cost $c(a)\in\mathbb{R}_{\ge 0}$, representing the minimum payment at which they are willing to provide their service. The employer has budget $B\in\mathbb{R}_{\ge 0}$ and preferences given by a valuation $v:2^N\to\mathbb{R}_{\ge 0}$. The goal is to select a subset $T\subseteq N$ so as to maximize $v(T)$ subject to the budget constraint on total payments. 

We work in the random-order (secretary) model: an adversary fixes the instance (the costs and the valuation), and then agents arrive in a uniformly random permutation. We assume that the mechanism has access to function $v$ in the form of unlimited value queries on the set of revealed agents. Upon each arrival, the mechanism must make irrevocable decisions. We assume $n$ is known in advance. For any $A\subseteq N$, we write $c(A):=\sum_{a\in A}c(a)$.

\paragraph{Mechanisms and desiderata.}
An (offline) direct-revelation mechanism is specified by an allocation rule $A:\mathbb{R}_{\ge 0}^n\to 2^N$ and a payment rule $p:\mathbb{R}_{\ge 0}^n\to\mathbb{R}_{\ge 0}^n$. Given reported costs $\mathbf{b}=(b_1,\dots,b_n)$, the mechanism selects $A(\mathbf{b})\subseteq N$ and assigns payments $p(\mathbf{b})=(p_1(\mathbf{b}),\dots,p_n(\mathbf{b}))$. Agents have quasi-linear utilities: agent $a$'s utility is
\[
u_a(\mathbf{b}) :=
\begin{cases}
p_a(\mathbf{b})-c(a) & \text{if } a\in A(\mathbf{b}),\\
0 & \text{otherwise,}
\end{cases}
\]
and we adopt the convention that $p_a(\mathbf{b})=0$ whenever $a\notin A(\mathbf{b})$.
A mechanism is \emph{budget-feasible} if $\sum_{a\in N}p_a(\mathbf{b})\le B$ for all $\mathbf{b}$, and \emph{individually rational} if $u_a(\mathbf{b})\ge 0$ for all agents $a$ and all $\mathbf{b}$.
It is \emph{truthful} if reporting the true cost is a dominant strategy: for every agent $a$, every report $b\ge 0$, and every $\mathbf{b}_{-a}$,
$u_a((c(a),\mathbf{b}_{-a}))\ge u_a((b,\mathbf{b}_{-a}))$.
A randomized mechanism is a distribution over deterministic mechanisms; we require budget-feasibility and individual rationality with certainty and \emph{universal truthfulness}.

\paragraph{Posted-price mechanisms.}
We restrict attention to posted-price mechanisms. When an agent $a$ arrives, the mechanism posts a single take-it-or-leave-it offer $p_a$, which may depend on the history but not on future arrivals. Agent $a$ accepts iff $p_a\ge c(a)$; if they accept, their services are available to the employer and they are paid $p_a$, otherwise they leave irrevocably. Posted-price mechanisms are truthful and individually rational by construction, and they are budget-feasible provided each posted price does not exceed the remaining budget.

\paragraph{Valuation classes and oracle access.}
We assume $v$ is normalized, i.e., $v(\emptyset)=0$. For convenience, we write $v(a)$ for $v(\{a\})$ and define the marginal value $v(a\mid S):=v(S\cup\{a\})-v(S)$. The valuation $v$ is \emph{monotone} if $v(S)\le v(T)$ for all $S\subseteq T$, and \emph{submodular} if it has diminishing returns, i.e., $v(a\mid T)\le v(a\mid S)$ for all $S\subseteq T$ and $a\notin T$.
A valuation is \emph{XOS} (fractionally subadditive) if there exists a collection of additive functions $\{\mu_k\}_k$ such that for every $S\subseteq N$,
$v(S)=\max_k \sum_{a\in S}\mu_k(a)$.
We access $v$ via value queries, which return $v(S)$ for any $S\subseteq N$. Let $v_{\max}:=\max_{a\in N} v(a)$ denote the maximum singleton value.

\paragraph{Linear prices.}
A standard reference point in budget-feasible procurement is \emph{linear pricing}, which posts prices proportional to marginal contributions. Given a target value $\hat t$ intended to approximate the offline optimum, one offers an arriving agent $a$ the price
\(
p_a=(B / \hat{t})\cdot v(a\mid T),
\)
where $T$ is the set of already accepted agents.

\paragraph{Competitive ratio.}
Let $\OPT$ denote the value of the optimal offline solution that observes all agents and their costs in advance and selects a set $C^*\subseteq N$ with $c(C^*)\le B$. Let $\ALG$ denote the (random) value achieved by an online mechanism; expectations are over the random arrival order and any internal randomness. A mechanism is $\alpha$-competitive if $\mathbb{E}[\ALG]\ge \OPT/\alpha$ for all instances.

\section{Monotone Submodular Valuations}

This section introduces and analyzes \RED, the routine at the core of both our monotone and non-monotone large-market mechanisms. Given a budget \(B\), a set \(N_{\text{rem}}\) of remaining agents, and a scale parameter \(\vmax\) (an estimate of the maximum singleton value), \RED computes an adaptive sequence of thresholds. The mechanism uses these thresholds to post linear prices to agents.
We then restrict attention to monotone submodular valuations and show that \RED achieves a constant competitive ratio on large-market instances $(\OPT \geq 8000\cdot \vmax)$. Finally, we randomize between \RED and two posted-price mechanisms tailored to small and medium markets, obtaining \textsc{PostedPrice} with the following guarantee.

\begin{theorem}\label{thm:monotone}
\textsc{PostedPrice} is a universally truthful, $1046$-competitive posted-price mechanism for online budget-feasible procurement auctions with secretary arrivals and monotone submodular valuations.
\end{theorem}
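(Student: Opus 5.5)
The proof combines three mechanisms, each tuned to a regime of the ratio $\OPT/\vmax$, and randomizes among them with fixed probabilities. Universal truthfulness and budget feasibility are immediate. Each component is a posted-price mechanism that never offers more than the remaining budget, so any mixture of them inherits both properties. The competitive ratio then follows from a case analysis on $\OPT/\vmax$: in each regime, the component designed for it gets a constant fraction of $\OPT$, and picking the mixing probabilities to balance the regimes gives $1046$.

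\textbf{Large markets ($\OPT\ge 8000\,\vmax$).} Here we use \RED.
\begin{enumerate}
\item Observe a sample prefix of the agents, hiring no one, and estimate $\vmax$ from it.
\item Run \RED on the remaining agents with the full budget $B$.
\item Budget feasibility is deterministic, by design of \RED: agents are exhausted before the budget.
\item For the value guarantee, partition the second phase into fixed-size rounds. Using random order and concentration (Chernoff), show that a constant fraction of rounds are dense, i.e.\ they contain value $\Omega(\vmax)$ from $C^*$.
\item Apply the win--win argument. Either the value cutoff is reached often, and each success certifies a fixed amount of accumulated value. Or the threshold stays below $c\cdot\OPT$ on many rounds, which a deterministic counting argument guarantees.
\item In the latter case, \RED's prices dominate the linear prices at a threshold of order $\OPT$. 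Hence every agent of $C^*$ in those rounds whose marginal is large enough accepts.
\item The agents of $C^*$ that reject because their marginals shrank are charged to the value already collected, using monotonicity and submodularity: $v(C^*\cap R)\le v(T)+\sum_{a} v(a\mid T)$.
\end{enumerate}
This gives $\mathbb{E}[\ALG]\ge \OPT/\alpha_L$ for an explicit $\alpha_L$.

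\textbf{Small markets ($\OPT< K\vmax$ for a moderate constant $K$).} A single agent of value near $\vmax$ captures a $1/K$ fraction of $\OPT$. Use a secretary-style posted price: observe a fraction of the agents, then offer the whole budget $B$ to the first later agent whose singleton value exceeds the sampled maximum. If the top agent's cost is at most $B$, it accepts; we may assume this, since agents with cost above $B$ are irrelevant to $\OPT$. This gives roughly $\vmax/e$ in expectation, hence a ratio of about $eK$.

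\textbf{Medium markets ($K\vmax\le\OPT<8000\vmax$).} Guess a threshold from the sample, e.g.\ estimate $\OPT$ on the sample via a greedy knapsack over singleton values. Then post linear prices at that threshold (a slightly scaled-down version) to the rest. The standard linear-pricing lemma applies: either half the budget is spent, yielding value at least $\hat t/2$, or every good agent of $C^*$ in the second half is offered at least its cost. This gives a constant that depends on $8000/K$.

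\textbf{Main obstacle.} The hardest part is the large-market analysis of \RED. The difficulty is making the density claim (step 4) and the counting claim (step 5) both hold on the same rounds with explicit constants, while keeping the $\vmax$ estimate from the sample accurate. The first thing to try is a union bound: a Chernoff bound makes at least a $(1-\delta)$ fraction of rounds dense, and the counting argument forces at least a $\gamma$ fraction of rounds to have a low threshold. Whenever $\gamma>\delta$, at least $(\gamma-\delta)$ of the rounds are both dense and low-threshold. The final step is bookkeeping: with probabilities $q_L,q_M,q_S$ summing to one, the ratio is $\max_r \alpha_r/q_r$ over the three regimes. Equalizing the three terms gives the claimed $1046$.
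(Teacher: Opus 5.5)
Your top-level argument is the paper's. \textsc{PostedPrice} mixes Dynkin, \textsc{MediumMarket} and \textsc{LargeMarket} with probabilities proportional to their regime-wise ratios, so the overall ratio is $\alpha_S+\alpha_M+\alpha_L$. Your large-market outline (sample for $\vmax$, deterministic budget feasibility of \RED, dense rounds, counting lemma, dichotomy, monotone charging) also tracks Section~3.

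\textbf{The medium-market component cannot be implemented in this model.} A posted-price mechanism learns about costs only through accept/reject responses to its own offers, and the sample is priced at $0$. So there is no greedy knapsack to run on the sample. Nor can $\OPT$ be inferred from values alone: the same valuation admits $\OPT=\vmax$ (all costs $B$) or $\OPT=v(N)$ (all costs $0$). Not being able to estimate $\OPT$ is exactly the obstacle the paper is built around. The paper's \textsc{MediumMarket} learns only $\widehat{\vmax}$ from a $\mathrm{Bin}(n,1/2)$ sample, and exploits that $\OPT/\vmax$ lies in a bounded range in this regime. It draws a multiplier $m$ uniformly from a fixed six-point geometric net covering $v(S_2^*)/\vmax\in[29/4,8000]$. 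The $29/4$ arises because only $\OPT/4$ of benchmark value is guaranteed to land in the exploitation half. For the right $m$, the budget-exhaustion/rejection dichotomy gives $\min\{(\ell-m)/\ell,(m-1)/u\}\cdot v(S_2^*)$. Also, your ``either half the budget is spent or every good agent is offered at least its cost'' is false as stated, since marginals shrink. You need the charging argument there as well.

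\textbf{The constant $1046$ is asserted, not derived.} It equals $80+326+640$, and each term needs its own work:
\begin{itemize}
\item $80$ requires $K=29$, so that Dynkin gives $29e\le 80$.
\item $326$ is the guarantee of the specific \textsc{MediumMarket} on $[29\vmax,8000\vmax]$, which a different medium-market mechanism would not reproduce.
\item $640$ for \textsc{LargeMarket} combines $\Prob{\mathcal{E}}=1/2$ with $\Prob{D\ge 0.8R\mid\mathcal{E}}\ge 1/2$, and then $\OPT/160$ in each branch of the dichotomy via $R\ge \OPT/(1920\vmax)$. The second probability comes from a per-round Bernstein bound of $0.1$ under negative association, followed by Markov on the number of non-dense rounds. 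Rounds are dependent, so no ``Chernoff over rounds'' is available.
\end{itemize}

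\textbf{Step 7 needs the right benchmark $C^*$.} The charging closes only because $C^*$ is the high-ratio benchmark of Lemma~\ref{lemma:Svalue}. It is bought at threshold $\OPT/2$, so $c(a)\le \frac{2B}{\OPT}w(a)$, which caps the loss at $v(Q')/4$. With a merely budget-feasible $C^*$, the loss bound becomes an additive $t'\ge \OPT/16$. That exceeds the roughly $\OPT/120$ of value guaranteed in the good rounds. You would then have to shrink $t'$ relative to $\OPT$, as in the non-monotone section, which worsens the constant.
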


\begin{algorithm}[t]
\caption{\LM}\label{alg:LMMECH}
\DontPrintSemicolon
\KwInput{Set $N$ of agents, budget $B$}
$\vmax \gets \textsc{LearningMaxValue}([n/2])$\;
$T \gets \RED([n/2+1:n], B, \vmax)$\;
\KwRet{$T$}\;
\end{algorithm}

\subsection{Repeated Descent (\RED)}

\begin{algorithm}[t]
\caption{\RED}\label{alg:RED}
\SetAlgoNoEnd
\DontPrintSemicolon
\KwInput{Remaining agents $N_{\text{rem}}$, budget $B$, maximum value $\vmax$}
$T \gets \varnothing$, $\Brem \gets B$\;
\While{$|N_{\text{rem}}| \ge 50$}{
    $t \gets n \cdot \vmax$, $V \gets 0$, $\tau \gets 50 \frac{\vmax}{t} n$\;
    \While{$V < 49 \, \vmax$}{
        \If{$\tau \le 0$}{
            $t \gets t/2$, $\tau \gets 50 \frac{\vmax}{t} n$\;
            \If{$|N_{\text{rem}}| < \tau$}{
            \KwRet{$T$}\;
        }
        }
        Let $a$ be the next arriving agent\;
        \If{$v(a \mid T) \cdot B/t \leq \Brem$}{
        Offer price $p = v(a\mid T) \cdot B / t$ to agent $a$\;
        \If{$p \geq c(a)$}{
            $V \gets V + v(a\mid T)$, $T \gets T \cup \{a\}$, $\Brem \gets \Brem - p$\;
        }
    }
    $N_{\text{rem}} \gets N_{\text{rem}} \setminus \{a\}$, $\tau \gets \tau - 1$\;
    }

}
\KwRet{$T$}\;
\end{algorithm}

\RED merely loops: it starts a descent whenever enough agents remain to do so. The descent subroutine is described below.

\paragraph{The descent subroutine.}
A descent maintains a threshold \(t\) and a running total \(V\) of value collected \emph{during the current descent}. It initializes \(t \gets n\cdot \vmax\) and \(V \gets 0\), and proceeds in stages indexed by the current threshold \(t\).
At threshold \(t\), the mechanism processes a block of $\tau(t) := 50\cdot \frac{\vmax}{t}\cdot n $
consecutive arriving agents. For each such agent \(a\), it posts the linear price $p(a) = \frac{B}{t}\cdot v(a\mid T),$
where \(T\) denotes the current solution set at that time. If \(a\) accepts, then \(a\) is added to \(T\) and \(V\) increases by \(v(a\mid T)\). If at any point \(V \ge 49\cdot \vmax\), the descent terminates successfully.
If the block ends without reaching this target, the threshold is halved (\(t \gets t/2\)), the block length is updated to \(\tau(t)\), and the descent continues with the next agents. A descent ends either upon success or when fewer than \(\tau(t)\) agents remain to execute the next block. The term ``descent'' reflects that the subroutine searches downward over thresholds, starting from \(n\cdot \vmax\) and decreasing \(t\) geometrically from block to block.


\subsection{Budget Feasibility}

A crucial property of \RED is that it never exhausts its budget.

\begin{lemma}\label{lemma:budget}
By design, when \RED is applied to $n/2$ agents with budget $B$ and a correct estimate of $\vmax$, the total payment made by the routine is at most $B$ and the budget unavailability condition is never enforced.
\end{lemma}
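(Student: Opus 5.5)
The plan is to charge every payment to the agents that \RED consumes, at a rate of at most $2B/n$ per agent. Since \RED is run on $n/2$ agents, this caps total spending at $B$. Throughout I use that, for a correct estimate, every marginal satisfies $v(a\mid T)\le v(a)\le \vmax$ by submodularity. I also take $n\vmax/t$ to be a power of two, say $t=n\vmax/2^j$, so that $\tau(t)=50\cdot 2^j$.

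\textbf{Step 1: the largest price paid in a descent.} Fix one descent and let $t_f=n\vmax/2^j$ be the threshold of its last stage. Let $v_{t'}$ be the value collected in the stage with threshold $t'$. The payment of that stage is $(B/t')\,v_{t'}$. Thresholds only decrease, so every weight $B/t'$ is at most $B/t_f$. The collected value satisfies $\sum_{t'} v_{t'}\le 49\vmax+\vmax=50\vmax$: before the last acceptance $V<49\vmax$, and that acceptance adds at most $\vmax$. Hence the descent pays at most $50\vmax B/t_f = 50\cdot 2^j\, B/n$. This holds whether the descent succeeds, fails, or is cut off.

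\textbf{Step 2: agents consumed, by case.} The full stages before the last one consume $\sum_{i=0}^{j-1}50\cdot 2^i = 50(2^j-1)$ agents. Let $k\ge 1$ be the number of agents processed in the final stage.
\begin{itemize}
\item If $j\ge 1$, then $50\cdot 2^j\le 2\cdot 50(2^j-1)$, so the payment is at most $2B/n$ times the number of consumed agents.
\item If $j=0$, each accepted agent receives at most $\vmax B/(n\vmax)=B/n$. The payment is then at most $kB/n$, which is at most $B/n$ per consumed agent.
\end{itemize}
Every descent consumes disjoint agents from the $n/2$ available. Summing over descents gives total payment at most $(2B/n)\cdot(n/2)=B$.

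\textbf{Step 3: the budget check never fires.} This is the step I expect to be the main obstacle, because the check applies to every prefix of the run and not only to its final state. I will apply the same accounting to the prefix ending with the agent currently being offered, counting that agent's price as if it were accepted. The bounds in Steps 1 and 2 are monotone in the prefix, so they still hold. In particular, the hypothetical acceptance still leaves the descent's value at most $50\vmax$. The prefix has consumed at most $n/2$ agents, so the spent budget plus the new price is at most $B$. Therefore $v(a\mid T)\,B/t\le \Brem$ at every offer.

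The delicate point is the final partial stage of a descent, whose consumed agents may be few. It is handled by the two cases above: the geometric sum of earlier stages when $j\ge1$, and the per-agent price bound $B/n$ when $j=0$. The guard $|N_{\text{rem}}|\ge 50$ and the early-return rule are not needed for feasibility. They only ensure that the accounting covers every agent actually offered a price.
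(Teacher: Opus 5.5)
Your proof is correct and uses the same amortization as the paper. Each descent pays at most $50\vmax B/t$ at its smallest threshold $t$ and consumes at least $25\,n\vmax/t$ agents, a rate of $2B/n$ that sums to $B$ over the $n/2$ agents; the paper only asserts that the $\Brem$ check never fires, handles the initial threshold via the 49-agent count for a success, and relies on the outer guard. Your per-agent bound when $j=0$ (which works for cut-off stages too) and your prefix argument that the $\Brem$ check never fires are small refinements of those steps.
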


\begin{proof}
Consider an arbitrary descent, and let $t$ denote the smallest threshold reached during that descent. We first show that any descent processes at least $R_t = 25 \cdot \frac{\vmax}{t} \cdot n$ agents before terminating. We consider the following two cases.

\begin{enumerate}
    \item The descent terminates at the initial threshold $t = n\cdot \vmax$. In this case, termination requires that the accumulated value reaches $49\cdot \vmax$, which can only occur after processing at least $49$ agents. Since $R_t = 25 \cdot \frac{\vmax}{n\vmax}\cdot n = 25$, the bound holds trivially.
    \item The descent reaches a smaller threshold $t < n\cdot \vmax$. To reach threshold $t$, the previous block (at threshold $2t$) must have processed $50\cdot \frac{\vmax}{2t}\cdot n \;>\; R_t$ agents without reaching the termination condition. Hence, the total number of agents processed during the descent is at least $R_t$.
\end{enumerate}

We now bound the budget spent in a single descent. The maximum budget consumption occurs if all accepted agents were offered prices corresponding to the minimum threshold $t$, and the final accepted agent contributes value $\vmax$. In this case, the total payment made during the descent is at most
\(
B_t = 50 \cdot \frac{\vmax}{t} \cdot B.
\)
Therefore, the average budget spent per processed agent in this descent is at most
\(
\frac{B_t}{R_t}
=
\frac{50\cdot B}{25\cdot n}.
\)
Since \RED is applied only to the final $n/2$ agents, the total expenditure across all descents is at most
\(
\frac{50\cdot B}{25\cdot n}\cdot \frac{n}{2}
=
B,
\)
which completes the proof.
\end{proof}

\subsection{Analysis Outline}

Our goal is to recover a constant fraction of the offline optimum $\OPT$. Each successful descent certifies progress: it terminates only after \RED\ has accumulated $\Theta(\vmax)$ marginal value during that descent. Hence, if \RED\ completes $\Theta(\OPT/\vmax)$ successful descents, then it already obtains $\Omega(\OPT)$ value.

On the other hand, if the number of successful descents is smaller, we argue that \RED\ must have made offers to a large number of agents using thresholds below $t'$, where $t'=\Theta(\OPT)$. To translate this argument into value gathered by \RED, we proceed as follows:
\begin{enumerate}
    \item We partition the input sequence into fixed-size contiguous blocks, called rounds.
    \item We define a set $C^*$ of agents with a high value-to-cost ratio and argue that, under the large-market and secretary-order assumptions, many rounds are \emph{dense}, i.e., they contain sufficient value contributed by the high-return agents in $C^*$.
    \item Finally, we prove that if a threshold lower than $t'$ is used in many dense rounds, then \RED\ must obtain $\Theta(\OPT)$ value overall.
\end{enumerate}

\subsection{Rounds, Benchmark Set and Density}\label{sub:rounds-conc}


We have already seen that \RED processes the stream in consecutive blocks.
In this subsection, we show that once a block is large enough (on the order of
$\frac{\vmax}{\OPT}\,n$ arrivals) it contains a nontrivial amount of value, namely $\Theta(\vmax)$, with constant probability under random-order arrivals.
This is the main structural input for the analysis; in later subsections, we show how to extract a constant fraction of the value present in such blocks.

Fix a threshold of interest \(t'\), for monotone submodular valuations $t'$ is the unique threshold of the form $n\cdot \vmax / 2^r$, $r \in \mathbb{N}$ such that $\OPT/16 \leq t' < \OPT/8$. We partition the arrival sequence as follows:
\begin{itemize}
    \item The first \(n/2\) agents belong to the sampling phase and are ignored in the remainder of the analysis.
    \item The remaining \(n/2\) agents are divided into \emph{rounds}. A \emph{round} consists of $\ell=50\cdot \frac{\vmax}{t'}\cdot n$ consecutive agents.
    Any leftover agents that do not complete a full round are disregarded.
\end{itemize}
For each round index \(j\), let \(A_j\subseteq N\) denote the (random) set of agents that belong in round \(j\).

To quantify the amount of value present in each round, we introduce a benchmark set $C^*$. We remark that the set $C^*$ is used solely for analysis and does not depend on the execution of \RED. 

\begin{lemma}[\cite{Bada2012}]\label{lemma:Svalue}
Let $C^*$ be the set of accepting agents obtained by applying threshold $\hat{t} = \OPT/2$ to the entire input under ordering $\pi_0$. Then
\[
\frac{\OPT - v_{\max}}{2} \le v(C^*) \le \frac{\OPT}{2}.
\]
\end{lemma}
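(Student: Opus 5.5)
We take $C^*$ to be produced by the following process. Agents are scanned in the fixed order $\pi_0$, with accepted set $T$ (initially empty) and remaining budget $\Brem$ (initially $B$). Each agent $a$ is offered $p_a=(B/\hat t)\,v(a\mid T)$ with $\hat t=\OPT/2$, provided this price is at most $\Brem$, and $a$ joins $T$ iff $p_a\ge c(a)$. At the end $C^*:=T$. The two inequalities are proved separately. The upper bound is a budget-accounting argument. The lower bound uses the standard exchange argument against the optimal set $O^*$ (with $c(O^*)\le B$ and $v(O^*)=\OPT$), split according to whether the budget check ever blocked an offer.

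\emph{Upper bound.} The marginal values $v(a\mid T)$ collected at acceptance telescope to $v(C^*)$. Hence total payment is $(B/\hat t)\,v(C^*)$. The budget check guarantees this is at most $B$, so $v(C^*)\le \hat t=\OPT/2$.

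\emph{Lower bound, case 1: the budget check never blocks an offer.} Every $o\in O^*\setminus C^*$ was offered and refused, so
\[
c(o)>\tfrac{B}{\hat t}\,v(o\mid T_o)\ge \tfrac{B}{\hat t}\,v(o\mid C^*),
\]
where $T_o\subseteq C^*$ is the set accepted when $o$ arrived, and the second inequality is submodularity. Summing over $o$, using $c(O^*)\le B$, monotonicity and submodularity gives
\[
\OPT-v(C^*)\;\le\; v(O^*\cup C^*)-v(C^*)\;\le\;\sum_{o\in O^*\setminus C^*} v(o\mid C^*)\;<\;\tfrac{\hat t}{B}\,c(O^*)\;\le\;\hat t .
\]
Hence $v(C^*)>\OPT-\hat t=\OPT/2$, which is even stronger than claimed.

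\emph{Lower bound, case 2: some agent $a$ is blocked.} Then $\Brem<(B/\hat t)\,v(a\mid T)\le (B/\hat t)\,\vmax$. So the amount spent, $(B/\hat t)\,v(C^*)$, exceeds $B-(B/\hat t)\vmax$, which yields $v(C^*)>\hat t-\vmax$.

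\emph{Main obstacle.} Case 2 as stated only gives $\OPT/2-\vmax$, which is weaker than the claimed $(\OPT-\vmax)/2$. To close the gap I would combine the two arguments rather than treat them separately. Partition $O^*\setminus C^*$ into agents that refused an offer and agents blocked by the budget. Refusers are charged to their cost exactly as in case 1. Blocked agents are charged via $v(o\mid C^*)\le\vmax$ together with the budget deficit they certify, and one then averages the resulting inequality with the case-2 bound.

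If this averaging does not reach exactly $\tfrac12(\OPT-\vmax)$, I would instead adopt the convention of \citet{Bada2012}: the price of a blocked agent is truncated to $\Brem$, i.e.\ it is offered $\Brem$ rather than skipped. Under that convention, at most one accepted agent is underpaid relative to its linear price, and it contributes at most $\vmax$. Re-doing the accounting with that agent's loss split between the two inequalities should then give the stated constant. Both inequalities rely only on monotonicity, submodularity and the fixed order $\pi_0$, so the randomness of the arrival order plays no role in this lemma.
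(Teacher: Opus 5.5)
There is a genuine gap, and it is the one you flag yourself. Your upper bound and your Case~1 are correct. What your argument actually proves is $\OPT/2-\vmax\le v(C^*)\le\OPT/2$, and neither repair you sketch can produce the stated lower bound.

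Your own bounds already show that Case~1 is vacuous when $\OPT>0$: its strict conclusion $v(C^*)>\OPT/2$ contradicts your upper bound. So the budget check always blocks some agent, and the lower bound rests entirely on the exhaustion accounting, which only certifies $\hat t-\vmax$. There is nothing left to average.

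In fact, under the skip rule you use (the same rule as in \RED and \textsc{MediumMarket}), the claimed inequality is false for an arbitrary fixed order $\pi_0$. Take an additive valuation with ten agents of value $1$ and cost $B/10$, and $41$ agents of value $0.1$ and cost $0.02B$, with the small agents listed first in $\pi_0$.
\begin{itemize}
\item Then $\vmax=1$ and $\OPT=10$, since the small agents have half the value per unit cost, so the ten unit agents are optimal. Hence $\hat t=5$.
\item Every small agent is offered exactly its cost $0.02B$ and accepts. This leaves $\Brem=0.18B$.
\item That is below the price $0.2B$ of every unit agent, so all of them are skipped.
\end{itemize}
The result is $v(C^*)=4.1<4.5=(\OPT-\vmax)/2$.

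Switching to the truncation convention does not help either. On the same instance, the first unit agent accepts the truncated offer $0.18B\ge 0.1B$, giving $v(C^*)=5.1>\OPT/2$. The identity ``total payment $=(B/\hat t)\,v(C^*)$'' that drives your upper bound fails once an agent is paid less than its linear price. So no way of splitting the last agent's loss yields both inequalities with threshold $\OPT/2$.

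Balancing your two cases with $\hat t=(\OPT+\vmax)/2$ does give the lower bound $(\OPT-\vmax)/2$. But then the upper bound becomes $(\OPT+\vmax)/2$.

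The paper gives no proof of this lemma; it only cites \citet{Bada2012}. The bound you do prove suffices for the place where the lemma is used quantitatively. Appendix~\ref{bern} needs $g(C^*)\ge v(C^*)-\vmax\ge\OPT/2-2\vmax$, and this exceeds $\frac{44}{90}\OPT$ because $\OPT\ge 8000\,\vmax$. The right fix is therefore to weaken the constant in the statement to $\OPT/2-\vmax$, not to search for a sharper proof.
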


Let $(a_1,a_2,\dots,a_\nu)$ be the agents in $C^*$, ordered by their appearance in $\pi_0$. For each $a_i\in C^*$, define its marginal contribution
\[
w(a_i) := v(\{a_1,\dots,a_i\}) - v(\{a_1,\dots,a_{i-1}\}).
\]
By definition, $\sum_{a\in C^*} w(a) = v(C^*)$.

Informally, we lower bound the value present in a round by the marginal values of the $C^*$-agents that appear in that round. Since $C^*$ captures a constant fraction of $\OPT$, random-order arrivals and the large-market assumption imply that, for each fixed round of size $\ell$, this benchmark contribution is $\Theta(\vmax)$ with constant probability.

Fix a round \(j\). For each position \(i\in[\ell]\) within round \(j\), define the random variable
\begin{equation}\label{eq:Xij}
X_i^j \;=\;
\begin{cases}
w(a) & \text{if the agent occupying position \(i\) of round \(j\) is some } a\in C^*\setminus\{a_{\max}\}\text{\footnotemark},\\
0 & \text{otherwise.}
\end{cases}
\end{equation}
\footnotetext{We exclude \(a_{\max}\) because we will condition on the event that an agent of value \(\vmax\) appears in the first half.}
Below we prove that 
\(
X^j := \sum_{i=1}^{\ell} X_i^j
\) provides a lower bound for the value that high-return agents of $C^*$ contribute in each round.

\begin{lemma}\label{lem:round-submod}
For every round \(j\), we have
\(
v(C^*\cap A_j) \ge X^j.
\)
\end{lemma}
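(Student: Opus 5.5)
The approach is a telescoping argument along the fixed ordering \(\pi_0\) of \(C^*\), combined with submodularity. First observe that
\[
X^j=\sum_{a\in (C^*\setminus\{a_{\max}\})\cap A_j} w(a),
\]
because each position of round \(j\) holds at most one agent and each agent occupies at most one position. Since the marginals \(w(a)\ge 0\) by monotonicity, dropping \(a_{\max}\) only decreases the sum. It therefore suffices to show that
\[
v(C^*\cap A_j)\;\ge\;\sum_{a\in C^*\cap A_j} w(a).
\]

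To prove this, list the agents of \(S:=C^*\cap A_j\) in the order in which they appear in \(\pi_0\): \(a_{i_1},\dots,a_{i_k}\) with \(i_1<\dots<i_k\). By telescoping,
\[
v(S)=\sum_{r=1}^k v\bigl(a_{i_r}\mid \{a_{i_1},\dots,a_{i_{r-1}}\}\bigr).
\]
The set \(\{a_{i_1},\dots,a_{i_{r-1}}\}\) is contained in \(\{a_1,\dots,a_{i_r-1}\}\), and \(a_{i_r}\) belongs to neither. Submodularity then gives
\[
v\bigl(a_{i_r}\mid \{a_{i_1},\dots,a_{i_{r-1}}\}\bigr)\;\ge\; v\bigl(a_{i_r}\mid \{a_1,\dots,a_{i_r-1}\}\bigr)=w(a_{i_r}).
\]
Summing over \(r\) yields \(v(S)\ge\sum_{a\in S}w(a)\ge X^j\).

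There is no real obstacle. The only points needing care are the following:
\begin{itemize}
\item The sum defining \(X^j\) is taken over the agents in \(C^*\cap A_j\), not over positions with multiplicity.
\item The telescoping must use the \(\pi_0\)-order rather than the random arrival order, so that the prefix sets are nested correctly.
\item The omitted term for \(a_{\max}\) is nonnegative. This uses monotonicity, which holds in the current section. If \(a_{\max}\notin C^*\), nothing is dropped and the statement holds trivially.
\end{itemize}
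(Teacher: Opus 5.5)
Your proof is correct and follows the paper's argument. Like the paper, you telescope along the \(\pi_0\)-order restricted to \(C^*\cap A_j\) and use submodularity to bound each increment below by \(w(\cdot)\). Your explicit handling of the excluded \(a_{\max}\) term, via \(w(a_{\max})\ge 0\) from monotonicity, also fills in a small step the paper skips when it writes \(X^j=\sum_{a\in S_j}w(a)\).
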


\begin{proof}
Let \(S_j:=C^* \cap A_j\) be the benchmark agents that appear in round \(j\). By definition, \(X^j=\sum_{a\in S_j} w(a)\), since each \(a\in C^*\) contributes exactly \(w(a)\) if it appears in the round and \(0\) otherwise.

Now consider the ordered list \((a_1,\dots,a_k)\) defining the marginals \(w(\cdot)\). For any subset \(S\subseteq C^*\), submodularity implies that the marginal of \(a_i\) with respect to a \emph{subset} of \(\{a_1,\dots,a_{i-1}\}\) is at least its marginal with respect to \(\{a_1,\dots,a_{i-1}\}\). Hence, if we build \(S\) in the order induced by \(\pi_0\), the contribution of each included element \(a_i\in S\) is at least \(w(a_i)\). Summing over \(a_i\in S\) yields \(v(S)\ge \sum_{a_i\in S} w(a_i)\). Applying this to \(S=S_j\) gives \(v(C^* \cap A_j)\ge X^j\).
\end{proof}

\paragraph{Concentration.}

We proceed to define high-value rounds, called "dense" and argue that a constant fraction of the rounds defined are dense with probability at least a half.

\begin{definition}[Dense round]\label{def:dense-round}
Let \(W_j := C^* \cap A_j\) be the benchmark agents appearing in round \(j\). In the monotone submodular case, we call round \(j\) \emph{dense} if
\[
\sum_{a\in W_j} w(a) \;\ge\; 160\,v_{\max}.
\]
\end{definition}


The variables \(\{X_i^j\}\) are not independent but are negatively associated, as they arise from sampling without replacement (a permutation distribution). By \cite{negass} and \cite{chen2017bernstein}, Bernstein’s inequality applies under negative association. 

\begin{theorem}[Bernstein's Inequality]\label{thm:bern-ineq}
Let \(X_1,\dots,X_\ell\) be independent (or negatively associated) random variables with \(\E{}{X_i}=\mu_i\) and \(X_i-\mu_i\le b\).
Let \(X=\sum_{i=1}^\ell X_i\), \(\mu=\E{}{X}\), and \(V=\sum_{i=1}^\ell \E{}{(X_i-\mu_i)^2}\).
Then for any \(t>0\),
\[
\Prob{X \le \mu - t} \;\le\; \exp\!\left(- \frac{t^2}{2V+\frac{2bt}{3}}\right).
\]
\end{theorem}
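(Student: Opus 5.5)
The plan is the standard Cramér–Chernoff argument, applied to the lower tail. Set $Z_i := \mu_i - X_i$, so that $\E{}{Z_i}=0$ and $\sum_i \E{}{Z_i^2}=V$. The event $\{X\le \mu-t\}$ is the event $\{\sum_i Z_i \ge t\}$.

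One point has to be settled first. For a lower-tail bound the one-sided condition that is actually needed is $Z_i=\mu_i-X_i\le b$. The statement as written gives $X_i-\mu_i\le b$, which is the upper-tail condition. I would therefore prove the inequality assuming $|X_i-\mu_i|\le b$. This two-sided condition covers our application: there $X_i^j\in[0,\vmax]$, hence $|X_i^j-\mu_i|\le \vmax$, and we can take $b=\vmax$.

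\emph{Step 1 (exponential Markov).} For $\lambda\in(0,3/b)$, Markov's inequality gives
\[
\Prob{\textstyle\sum_i Z_i\ge t}\le e^{-\lambda t}\,\E{}{\textstyle\prod_i e^{\lambda Z_i}} = e^{-\lambda t+\lambda\mu}\,\E{}{\textstyle\prod_i e^{-\lambda X_i}}.
\]

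\emph{Step 2 (decoupling).} Each map $x\mapsto e^{-\lambda x}$ is nonnegative and nonincreasing. Negative association of $(X_1,\dots,X_\ell)$ implies that for nonnegative functions that are all monotone in the same direction, $\E{}{\prod_i f_i(X_i)}\le\prod_i \E{}{f_i(X_i)}$. This is the standard consequence of the definition, obtained by induction on the number of factors. It yields $\E{}{\prod_i e^{\lambda Z_i}}\le\prod_i \E{}{e^{\lambda Z_i}}$, and under independence this holds with equality. For the permutation-generated variables $X_i^j$, negative association itself is the cited fact from \cite{negass}, so we do not reprove it.

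\emph{Step 3 (one-variable bound).} I would use the elementary inequality $e^{x}\le 1+x+\frac{x^2}{2(1-x/3)}$ for all $x<3$, which follows from comparing Taylor coefficients $1/k!\le \tfrac12 3^{-(k-2)}$ for $x\ge0$ and is easy to check for $x<0$. Since $\lambda Z_i\le\lambda b<3$ and $\E{}{Z_i}=0$, this gives
\[
\E{}{e^{\lambda Z_i}}\le 1+\frac{\lambda^2\E{}{Z_i^2}}{2(1-\lambda b/3)}\le \exp\!\Big(\frac{\lambda^2\E{}{Z_i^2}}{2(1-\lambda b/3)}\Big).
\]
Multiplying over $i$ and combining with Step 1,
\[
\Prob{X\le\mu-t}\le \exp\!\Big(-\lambda t+\frac{\lambda^2 V}{2(1-\lambda b/3)}\Big).
\]

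\emph{Step 4 (optimize $\lambda$).} Choose $\lambda=t/(V+bt/3)$, which lies in $(0,3/b)$. Then $1-\lambda b/3=V/(V+bt/3)$, so the exponent equals
\[
-\frac{t^2}{V+bt/3}+\frac{t^2}{2(V+bt/3)}=-\frac{t^2}{2V+2bt/3},
\]
which is exactly the claimed bound.

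The main obstacle is Step 2, specifically checking that negative association applies in the form needed. All the factors $e^{-\lambda X_i}$ must be monotone in the same direction, which they are, and negative association must be the right notion for the sampling-without-replacement indicators weighted by $w(\cdot)$. Since each $X_i^j$ is a nondecreasing function of which agent occupies position $i$, the cited results cover this case. The sign convention on $b$, discussed above, is the other point to handle carefully.
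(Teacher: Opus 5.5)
Your proof is correct, and the change of hypothesis you make is necessary: as printed, with $X_i-\mu_i\le b$, the lower-tail statement is false. Take a single variable with $X_1=0$ with probability $\varepsilon$ and $X_1=1$ otherwise. Then $X_1-\mu_1\le\varepsilon=:b$ and $V=\varepsilon(1-\varepsilon)$. For $t=1-\varepsilon$ one gets $\Pr[X_1\le\mu_1-t]=\varepsilon$, while the claimed bound is $\exp\bigl(-3(1-\varepsilon)/(8\varepsilon)\bigr)\approx 0.034<0.1$ already at $\varepsilon=0.1$.

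Your Step 3 uses only $Z_i=\mu_i-X_i\le b$, so that one-sided condition is the right hypothesis. It also makes each $e^{\lambda Z_i}$ bounded, so all the expectations exist. In the appendix computation for Lemma~\ref{lemma:Bern-appl}, the paper's choice $b=\vmax-\mu_i$ bounds only the upward deviation. The computation then relaxes $\vmax-\mu_i$ to $\vmax$ anyway. Redoing it with your $b=\vmax$ gives $2V+b\mu/9\le\frac{19}{9}\mu\vmax$ and the same exponent $\mu/(76\vmax)$, so the application survives.

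On the route: the paper gives no proof of this theorem. It only invokes \cite{negass} and the Bernstein inequality for negatively orthant dependent variables in \cite{chen2017bernstein}. You instead carry out the standard Cram\'er--Chernoff derivation. You isolate the one dependence property it needs: $\mathbb{E}[\prod_i f_i(X_i)]\le\prod_i\mathbb{E}[f_i(X_i)]$ for nonnegative $f_i$ monotone in a common direction, applied to $e^{-\lambda x}$. This makes the theorem self-contained, apart from negative association of the $X_i^j$. It also shows exactly which tail condition is needed, which is the point the printed statement gets wrong.

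For negative association, your phrase about a ``nondecreasing function of which agent occupies position $i$'' is imprecise. The clean justification: conditionally on $a_{\max}$ lying in the first half, the second-half values form a sub-vector of a uniformly random permutation of the fixed weights of the other $n-1$ agents. That is the permutation-distribution case covered by \cite{negass}.

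Two pieces of polish. First, $\lambda=t/(V+bt/3)$ lies strictly below $3/b$ only when $V>0$; but $V=0$ forces $X=\mu$ almost surely, so that case is trivial. Second, the case $x<0$ of $e^x\le 1+x+\frac{x^2}{2(1-x/3)}$ does hold. Setting $y=-x$, it is equivalent to $g(y)=(y^2-4y+6)e^y-2y-6\ge 0$, which follows from $g(0)=g'(0)=0$ and $g''(y)=y^2e^y\ge 0$.
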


Applying Bernstein’s inequality yields the following bound; the calculation is deferred to Appendix~\ref{bern}.

\begin{lemma}\label{lemma:Bern-appl}
Let \(\mathcal{E}\) denote the event that an agent of value \(\vmax\) appears in the first half of the input sequence.
For every round \(j\),
\[
\Prob{ X^j \le 160\,v_{\max} \,\middle|\, \mathcal{E}} \le 0.1.\]
\end{lemma}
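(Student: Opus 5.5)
Conditioned on $\mathcal{E}$, I will compute the mean and variance of $X^j$ and then apply Bernstein's inequality (\cref{thm:bern-ineq}) with $t=\mu-160\,\vmax$.

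\textbf{Mean.} By the large-market assumption $\OPT\ge 8000\,\vmax$. The threshold $t'$ satisfies $\OPT/16\le t'<\OPT/8$, so the round length is
\[
\ell=50\,\vmax n/t' \ge 400\,\vmax n/\OPT .
\]
Under $\mathcal{E}$, $a_{\max}$ lies in the first half. Every other agent is then uniformly distributed over the remaining $n-1$ positions, so a fixed position of round $j$ (which lies in the second half) is occupied by a given $a\in C^*\setminus\{a_{\max}\}$ with probability $1/(n-1)\ge 1/n$. Hence
\[
\E{}{X_i^j\mid\mathcal{E}}\ge \frac{v(C^*)-\vmax}{n}
\]
(removing $a_{\max}$ loses at most its marginal $w(a_{\max})\le\vmax$). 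By \cref{lemma:Svalue}, $v(C^*)\ge(\OPT-\vmax)/2$, so
\[
\mu\ge \ell\cdot\frac{\OPT-3\vmax}{2n}\ge 200\,\vmax\Bigl(1-\frac{3\vmax}{\OPT}\Bigr)\ge 199\,\vmax .
\]
Using the upper bound $t'<\OPT/8$ in the same way gives an upper bound on $\mu$. I also want a cleaner form; the slack to check is that $200(1-3/8000)$ indeed exceeds $160$ by about $40\,\vmax$.

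\textbf{Variance and range.} Each $w(a)\in[0,\vmax]$, which follows from submodularity since $w(a)\le v(a)\le\vmax$. Therefore $X_i^j-\mu_i\le \vmax=:b$, and
\[
\E{}{(X_i^j-\mu_i)^2}\le \E{}{(X_i^j)^2}\le \vmax\,\E{}{X_i^j}.
\]
This gives $V\le \vmax\,\mu$. Under conditioning on $\mathcal{E}$ the variables are still a permutation sample, with $a_{\max}$ excluded from the second half, so they remain negatively associated and \cref{thm:bern-ineq} applies.

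\textbf{Apply Bernstein.} With $t=\mu-160\,\vmax$, the bound is
\[
\Prob{X^j\le 160\,\vmax\mid\mathcal{E}}\le \exp\!\left(-\frac{(\mu-160\vmax)^2}{2\vmax\mu+\tfrac23\vmax(\mu-160\vmax)}\right).
\]
As a function of $\mu$, the exponent is increasing for $\mu\ge 160\,\vmax$, so it suffices to plug in the lower bound $\mu=199\,\vmax$. This gives roughly $39^2/(398+26)\approx 3.59$, hence probability at most $e^{-3.59}\approx 0.028\le 0.1$.

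\textbf{Main obstacle.} The delicate step is verifying this monotonicity and getting the constants right. In particular, I must correctly account for the loss from excluding $a_{\max}$ and for the $1/(n-1)$ versus $1/n$ distinction under conditioning. I must also justify the negative association of the conditioned permutation variables, which I would handle by viewing the second half as a uniformly random arrangement of a uniformly random subset drawn from $N\setminus\{a_{\max}\}$.
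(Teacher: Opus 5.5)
Your proposal is correct and follows the paper's route: condition on $\mathcal{E}$, use the sampling-without-replacement structure for negative association, lower-bound $\mu$ via $\ell \ge 400\,\vmax n/\OPT$ and $g(C^*)\approx \OPT/2$, bound $V \le \vmax\,\mu$ with $b=\vmax$, and apply Bernstein. The only difference is the deviation: you take $t=\mu-160\,\vmax$ directly and check that the exponent is increasing in $\mu$, whereas the paper takes $t=\mu/6$ after verifying $5\mu/6\ge 160\,\vmax$; your choice gives the sharper bound $e^{-3.59}\approx 0.028$ instead of the paper's $e^{-2.31}\le 0.1$.
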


\begin{corollary}
Let $R$ be the number of rounds and let $D$ be the number of dense rounds, it holds
$$ \E{}{D \mid \mathcal{E}} \geq 0.9 \cdot R.$$
\end{corollary}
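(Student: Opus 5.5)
The corollary follows from linearity of expectation applied to indicator variables for each round, combined with Lemma~\ref{lemma:Bern-appl}. I would write $D=\sum_{j=1}^{R}\mathbbm{1}[\text{round } j \text{ is dense}]$. Here $R$ is deterministic: it depends only on $n$, $\vmax$ and $t'$, namely $R=\lfloor (n/2)/\ell\rfloor$, and not on the permutation. Conditioning on $\mathcal{E}$ therefore does not affect $R$, and we get $\E{}{D\mid\mathcal{E}}=\sum_{j=1}^{R}\Prob{\text{round } j \text{ dense}\mid\mathcal{E}}$.

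The next step is to relate density of round $j$ to the random variable $X^j$. By Definition~\ref{def:dense-round}, round $j$ is dense when $\sum_{a\in W_j}w(a)\ge 160\,\vmax$. Since $X^j$ sums $w(a)$ over the agents of $W_j$ other than $a_{\max}$, and $w\ge 0$ by monotonicity, we have $\sum_{a\in W_j}w(a)\ge X^j$. So on the event $\{X^j>160\,\vmax\}$ the round is dense. Lemma~\ref{lemma:Bern-appl} gives $\Prob{X^j\le 160\,\vmax\mid\mathcal{E}}\le 0.1$, hence $\Prob{\text{round } j\text{ dense}\mid\mathcal{E}}\ge \Prob{X^j>160\,\vmax\mid\mathcal{E}}\ge 0.9$. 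Summing over the $R$ rounds yields $\E{}{D\mid\mathcal{E}}\ge 0.9R$.

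There is no real obstacle; only two small checks are needed. First, $R$ must be permutation-independent, which holds because the rounds are a fixed positional partition of the second half of the sequence. Second, the inequality $\sum_{a\in W_j}w(a)\ge X^j$ requires $w(a_{\max})\ge 0$, which follows from monotonicity. Even without monotonicity, the density event could be taken to be defined through $X^j$ directly, and the argument would go through unchanged.
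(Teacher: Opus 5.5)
Your proof is correct and is the same argument the paper leaves implicit. The paper states the corollary without proof directly after Lemma~\ref{lemma:Bern-appl}: write $D$ as a sum of per-round indicators over the permutation-independent number $R$ of rounds, then apply linearity of expectation with the per-round bound $\Prob{X^j \le 160\,\vmax \mid \mathcal{E}} \le 0.1$. Your check that $\sum_{a\in W_j} w(a)\ge X^j$ is sound; if $\mathcal{E}$ is read as in the appendix ($a_{\max}$ lies in the first half), the two quantities coincide and the monotonicity step is not even needed.
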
 

\begin{corollary}\label{cor:number of dense}
    Let $R$ be the number of rounds and let $D$ be the number of dense rounds, it holds 
    $$\Prob{D \ge 0.8\,R \,\middle|\, \mathcal{E}} \ge \tfrac{1}{2}.$$
\end{corollary}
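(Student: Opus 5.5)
This is a Markov-type reversal argument applied to the number of non-dense rounds, conditioned on $\mathcal{E}$. Let $U := R - D$ be the number of rounds that are not dense. Since $R$ is deterministic (it depends only on $n$, $\vmax$ and $t'$, not on the permutation), the previous corollary gives
\[
\E{}{U \mid \mathcal{E}} = R - \E{}{D \mid \mathcal{E}} \le 0.1\, R .
\]
To use that corollary directly, I first check that "not dense" implies $X^j < 160\,\vmax$. This holds because $X^j = \sum_{a\in W_j\setminus\{a_{\max}\}} w(a) \le \sum_{a\in W_j} w(a)$, so a non-dense round has $X^j \le 160\,\vmax$. Lemma~\ref{lemma:Bern-appl} then gives, for each $j$,
\[
\Prob{\text{round } j \text{ not dense} \mid \mathcal{E}} \le 0.1 .
\]
By linearity of expectation over the $R$ rounds, this yields the bound on $\E{}{U\mid\mathcal{E}}$ (equivalently, the preceding corollary).

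Next I apply Markov's inequality to the nonnegative variable $U$ under the conditional measure:
\[
\Prob{D < 0.8\,R \mid \mathcal{E}} = \Prob{U > 0.2\,R \mid \mathcal{E}} \le \frac{\E{}{U\mid\mathcal{E}}}{0.2\,R} \le \frac{0.1\,R}{0.2\,R} = \tfrac12 .
\]
Taking complements gives $\Prob{D \ge 0.8\,R \mid \mathcal{E}} \ge \tfrac12$, which is the claim.

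There is no real obstacle here; the only care points are bookkeeping. First, the direction of the inequality must be handled correctly: Markov is applied to $U = R - D$, never to $D$ itself. Second, I must use $R>0$ when dividing. If $R=0$, the statement holds trivially because $D \ge 0 = 0.8R$.
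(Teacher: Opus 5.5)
Your proof is correct and matches the paper's implicit argument: the paper states this corollary without proof, directly after $\E{}{D \mid \mathcal{E}} \ge 0.9R$, so it is meant to follow by Markov's inequality applied to the nonnegative count $R-D$ of non-dense rounds, which gives $\Prob{R-D > 0.2R \mid \mathcal{E}} \le 1/2$, exactly as you do. You also supply the step the paper leaves unstated: non-density implies $X^j \le 160\,\vmax$, because $w \ge 0$ in the monotone case, which is what links the definition of a dense round to Lemma~\ref{lemma:Bern-appl}.
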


We analyze the performance of \RED in instances where there are $D \ge 0.8\,R$ rounds. We denote this event by $\mathcal{F}$ and will condition on it from here on out.


\subsection{Counting Lemma}

Throughout this subsection, we assume that \RED uses the correct scale \(\vmax\) (event $\mathcal{E})$. Fix a threshold cut \(t'\) and recall that $\ell := 50\cdot \frac{\vmax}{t'}\cdot n$. We begin with a fundamental statement about \RED's threshold usage. 

\begin{lemma}\label{lem:offer-mono}
Consider any time step during the execution of \RED. If no descent has succeeded among the previous \(\ell\) agents, then the current offer is made using a threshold \(t \leq t'\).
\end{lemma}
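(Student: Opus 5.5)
The plan is to look back from the current time step to the start of the ongoing descent. No descent succeeded among the previous $\ell$ agents, so the current descent began at least $\ell$ processed agents ago. More precisely, whether the current descent started from a success or from the very beginning of \RED, it has already processed at least $\ell$ agents before the current one: the $\ell$ agents immediately preceding the current agent all lie within the current descent. If one of them had belonged to an earlier descent, that descent would have ended in success inside the window, since in the loop structure a new descent starts only after a success. The exception is the first descent, but then fewer than $\ell$ preceding agents exist only in degenerate cases. I would handle this by reading the hypothesis as ``the current descent has already processed at least $\ell$ agents'', noting that the lemma is only applied to rounds in the second half.

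Next I would bound how many agents a descent can process while its threshold stays above $t'$. Thresholds take the values $t_k = n\vmax/2^k$, and $t'$ is itself one of them, say $t' = t_r$. The block at $t_k$ has length $\tau(t_k) = 50\,\frac{\vmax}{t_k}\,n = 50\cdot 2^k$. The total number of agents processed at thresholds strictly above $t'$ is therefore at most
\[
\sum_{k=0}^{r-1} 50\cdot 2^k = 50(2^r - 1) < 50\cdot 2^r = \ell .
\]
After that many agents without success, the descent has moved to a threshold $\le t'$, since each block at a threshold above $t'$ runs to completion before halving.

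Combining the two steps: the current agent is at position at least $\ell+1$ within its descent. The first fewer than $\ell$ agents of the descent exhaust all blocks with threshold $> t'$, so the current offer uses some $t \le t'$. One subtlety is the early return when $|N_{\text{rem}}| < \tau$. If \RED returns, no offer is made at all and the statement is vacuous, so this case needs no separate treatment.

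The main obstacle is the bookkeeping at the start of the window. I need to make sure that a success occurring exactly at the last agent before the window, together with the resetting of $t$ and $\tau$ at the start of a new descent, still leaves $\ell$ agents inside the current descent. I also need to treat the counter $\tau$ as decrementing once per processed agent, including agents who are skipped because of the budget check; Lemma~\ref{lemma:budget} makes that check vacuous anyway.
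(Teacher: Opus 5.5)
This is correct and is essentially the paper's proof: you anchor at the most recent success (or the start of \RED) and note that the blocks at thresholds above $t'$ total $\sum_{k=0}^{r-1}50\cdot 2^k = 50(2^r-1) < \ell$ agents, so any offer made after $\ell$ agents of the current descent uses $t\le t'$; your exact count is slightly tighter than the paper's bound $\ell\sum_{r\ge 1}2^{-r}$. Your reading of the hypothesis as ``the current descent has already processed at least $\ell$ agents'' is the one the paper's proof uses implicitly, and it is genuinely needed rather than a degenerate-case patch: the first $\ell$ agents of the execution are offered thresholds above $t'$ even though no success precedes them, which is why Lemma~\ref{lem:counting} charges one extra round to them.
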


\begin{proof}
Fix a time step, and let \(s\) be the time of the most recent successful descent (or the start of the procedure if no success has occurred yet). Consider the descent that is currently running (i.e., the one that started right after time \(s\)).

Within a descent, thresholds are halved from \(n\vmax\) downwards, and at threshold value \(t\) the mechanism spends a block of
\(50\cdot \frac{\vmax}{t}\cdot n\) agents before halving again (unless the descent succeeds earlier).
Therefore, the total number of agents that can be processed \emph{before the threshold drops to \(t'\)} is upper bounded by the sum of block lengths over thresholds \(t > t'\).
Since block lengths scale as \(\propto 1/t\) and the thresholds form a geometric sequence, this sum is a geometric series:
\[
\sum_{\text{levels }t > t'} 50\cdot \frac{\vmax}{t}\cdot n
\;\le\;
50\cdot \frac{\vmax}{t'}\cdot n \cdot \sum_{r=1}^{\infty} 2^{-r}
\;=\;
\ell.
\]
Consequently, if more than \(\ell\) agents have been processed since time \(s\) without a success, then the descent must have reached a threshold \(t\leq t'\). This concludes the proof of the lemma.
\end{proof}

The next lemma turns this per-step monotonicity into a round-counting statement.

\begin{lemma}\label{lem:counting}
Let \(R\) be the number of rounds in the second half of the input sequence (each of length \(\ell\)), and let \(d\) be the number of completed (successful) descents. Then \RED makes offers using thresholds equal or smaller than \(t'\) on at least \(\max\{0,\; R - 2d - 1\}\) rounds.
\end{lemma}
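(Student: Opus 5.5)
The plan is a charging argument on Lemma~\ref{lem:offer-mono}: each successful descent can ``spoil'' only a bounded number of rounds, and every other round is entirely offered at thresholds at most \(t'\).

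First, let \(s_1<s_2<\dots<s_d\) be the time steps at which the \(d\) successful descents terminate. For a round \(j\), occupying positions \([(j-1)\ell+1,\, j\ell]\) of the second half, call it \emph{spoiled} if some success time \(s_k\) lies in the window consisting of round \(j\) together with the preceding \(\ell\) positions. Equivalently, \(s_k\) lies in round \(j-1\) or round \(j\). Each success time belongs to exactly one round, or to the discarded leftover. Hence it spoils at most two rounds: the round containing it and the next one. The total number of spoiled rounds is therefore at most \(2d\).

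Next, fix an unspoiled round \(j\) and any agent position \(i\) within it. The previous \(\ell\) positions before \(i\) lie inside rounds \(j-1\) and \(j\), and by unspoiledness no descent succeeded there. Lemma~\ref{lem:offer-mono} then applies and gives that the offer at position \(i\) uses a threshold \(t\le t'\). There are two edge effects to handle here:
\begin{itemize}
\item For round \(1\), the preceding \(\ell\) positions may reach before the start of \RED. Lemma~\ref{lem:offer-mono} is phrased from the start of the procedure, which only helps: fewer than \(\ell\) agents have been processed since the start, so the descent might still be at a threshold above \(t'\).
\item \RED may terminate early by returning \(T\), or a descent may stop because \(|N_{\text{rem}}|\) is too small. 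Then some late round receives no offers at all.
\end{itemize}
Both effects are what the extra \(-1\) absorbs.

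Concretely, I would treat the first round separately, counting it as possibly bad. If needed, I would also argue that early termination can only occur once the threshold has already dropped below \(t'\), or else affects at most one final partial stretch. A return happens when \(|N_{\text{rem}}|<\tau(t)\), and \(\tau(t)\ge \ell\) exactly when \(t\le t'\). So if termination happens while the next threshold is \(\le t'\), fewer than \(\ell\) agents remain and at most the last (incomplete or single) round is lost. The ``at most \(\ell\)'' form of Lemma~\ref{lem:offer-mono} handles the boundary at exactly \(\ell\) agents. Combining these, the good rounds number at least \(R-2d-1\), and trivially at least \(0\).

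The main obstacle is making the boundary accounting match exactly \(R-2d-1\) rather than \(R-2d-2\). The first round and a possible final truncated round both threaten to cost one each. I would try to merge them by noting that the window for round \(1\) begins at the start of \RED, where the first descent is running. Failing that, I would charge the early-termination loss to the final descent, which is unsuccessful and so not counted in \(d\), and observe that the first round's issue arises only when no success precedes it.
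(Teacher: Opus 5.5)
Your core argument is the same as the paper's, and it is correct. A success can only prevent the round containing it and the following round from being low-threshold, since a fresh descent needs fewer than $\ell$ agents to fall to $t'$ by Lemma~\ref{lem:offer-mono}. Round~1 is the single extra bad round because the first descent starts there. Every other round contains no offer with threshold above $t'$, so at most $2d+1$ rounds are bad.

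The early-termination discussion is where the proposal goes wrong, and it is also why you cannot settle $R-2d-1$ versus $R-2d-2$. Your inference ``the next threshold is $\le t'$, so $|N_{\text{rem}}|<\tau(t)$ means fewer than $\ell$ agents remain'' runs the wrong way. The condition $t\le t'$ gives $\tau(t)\ge \ell$, so $|N_{\text{rem}}|<\tau(t)$ gives no bound of $\ell$. When the threshold is halved to $t'/2^k$, \RED returns whenever fewer than $2^k\ell$ agents remain. For instance, take $d=0$, say because no agent ever accepts. The single descent runs blocks $\dots,\ell,2\ell,4\ell,\dots$ and stops as soon as the next block does not fit. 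This can leave nearly half of the second half, about $R/2$ rounds, with no offer at all. So if a round must actually receive offers to count, the bound $R-2d-1$ is false. No charging to the final unsuccessful descent can rescue it.

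The missing observation is the reading the paper's proof uses: a round counts as soon as it contains no offer with threshold $>t'$ (``they contain only offers with thresholds $\le t'$''). Under this reading, rounds left unprocessed after a return count vacuously, and early termination costs nothing. The $-1$ is then spent only on round~1, giving exactly $R-(2d+1)$. Drop your second edge case and the merging/charging paragraph, and your spoiled-round argument is a complete proof.

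Your concern is not idle, though. The later value-recovery step (Case~1) uses that the agents of $Q'\setminus T$ actually rejected an offer, which the vacuous reading does not provide. That is an issue for how the lemma is applied, not for the lemma as the paper proves it.
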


\begin{proof}
Call a round \emph{high-threshold} if \RED makes at least one offer in that round using a threshold \(t > t'\).
By Lemma~\ref{lem:offer-mono}, every such offer must occur within \(\ell\) processed agents after a successful descent (or within the first \(\ell\) processed agents of the entire execution, before any success happens).

Thus, there are only two ways a round can be high-threshold:
(i) it intersects the first \(\ell\) processed agents of the execution (at most one round), or (ii) it lies within \(\ell\) processed agents after some successful descent. Each successful descent can keep at most two rounds from being entirely low-threshold: the round in which the success occurs, and (possibly) the following round, since the next descent restarts at \(n\vmax\) and needs up to \(\ell\) agents before it can drop to \(t'\).
Therefore, the total number of high-threshold rounds is at most \(2d+1\).

All remaining rounds are low-threshold, i.e., they contain only offers with thresholds \(\leq t'\).
Hence the number of such rounds is at least \(R-(2d+1)\), and truncating at \(0\) yields the stated bound.
\end{proof}



\subsection{Combining Counting and Density}

We now combine the deterministic counting guarantee (Lemma~\ref{lem:counting}) with the event \(\mathcal{F}\) established in Section~\ref{sub:rounds-conc}, which asserts that many rounds are dense. This yields a dichotomy: either many descents succeed, or \RED operates at thresholds \(\leq t'\) on many dense rounds.

\begin{lemma}\label{lem:dichotomy}
Conditioning on event \(\mathcal{F}\), at least one of the following holds:
\begin{enumerate}
    \item \RED completes at least \(R/4\) successful descents.
    \item \RED makes offers using thresholds \(t \leq t'\) on at least \(0.1R\) dense rounds.
\end{enumerate}
\end{lemma}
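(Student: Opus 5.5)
This follows by combining the counting lemma with a pigeonhole count over the $R$ rounds. Suppose the first alternative fails, so the number $d$ of successful descents satisfies $d < R/4$. By Lemma~\ref{lem:counting}, the number of rounds on which \RED makes offers only with thresholds $t \le t'$ (the low-threshold rounds) is at least
\[
R - 2d - 1 \;>\; R - \tfrac{R}{2} - 1 \;=\; \tfrac{R}{2} - 1 .
\]
Under $\mathcal{F}$ we have $D \ge 0.8R$, so at most $0.2R$ rounds are non-dense. Removing them from the low-threshold rounds leaves at least
\[
\tfrac{R}{2} - 1 - 0.2R \;=\; 0.3R - 1
\]
rounds that are both dense and low-threshold. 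Whenever $0.2R \ge 1$, that is $R \ge 5$, this is at least $0.1R$, which gives the second alternative.

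The only care needed is in two places. First, the additive $-1$ means we need $R$ bounded below. In the large-market regime $\OPT \ge 8000\,\vmax$ we have $t' \ge \OPT/16 \ge 500\,\vmax$, so $\ell = 50 n \vmax / t' \le n/10$. Hence the second half of $n/2$ agents contains $R \ge \lfloor 5 \rfloor = 5$ full rounds, and the inequality above applies.

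Second, we should make sure that "makes offers using thresholds $t\le t'$ on a round" is the notion Lemma~\ref{lem:counting} delivers, namely that all offers in the round use such thresholds. It is, since the complement in that proof is "at least one offer uses $t>t'$". We should also handle rounds after \RED has returned, where no offers are made. Rounds processed after \RED terminates early contain no offers. In the counting lemma they are vacuously low-threshold, but for the dichotomy we want offers actually to be made. I would handle this by noting that \RED returns only when fewer than $\tau(t)$ agents remain. A descent that has reached $t \le t'$ has $\tau(t) \ge \ell$, so after termination at most about one round's worth of agents is lost, plus the final fewer than $50$ agents. This costs at most one or two more rounds, absorbed by the slack $0.3R - 1 - 2 \ge 0.1R$ when $R \ge 15$. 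If needed, a slightly sharper bound on $R$ handles this: $\ell \le n/10$ may be improved to $\ell \le n/20$ using $t' < \OPT/8$ and the choice of $t'$ among powers of two. The main obstacle is only this boundary bookkeeping; the core is the one-line pigeonhole above.
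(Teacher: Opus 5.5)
Your first paragraph is the paper's proof. Assuming $d<R/4$, Lemma~\ref{lem:counting} gives more than $R/2-1$ rounds with no offer at a threshold $t>t'$. Under $\mathcal{F}$ at most $0.2R$ of them are non-dense, and $R\ge 5$ gives $0.3R-1\ge 0.1R$. Read in the sense of Lemma~\ref{lem:counting}, where rounds containing no offers count vacuously as low-threshold, this is complete. Your check that $t'\ge \OPT/16\ge 500\vmax$ gives $\ell\le n/10$, hence $R\ge 5$, correctly supplies a step the paper only asserts.

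The additional patch for rounds after \RED returns does not work. \RED returns when the final, unsuccessful descent halves to some $t_f$ and fewer than $\tau(t_f)$ agents remain. Since $\tau(t_f)\ge \ell$ is a \emph{lower} bound on $\tau(t_f)$, it does not cap the lost tail at about one round.
\begin{itemize}
\item The tail can contain up to $\tau(t_f)-1$ agents, while the final descent itself processed exactly $\tau(t_f)-50$ agents.
\item So the tail can be nearly as long as the final descent, i.e., about half of everything that remained after the last success. When few descents succeed, that is $\Theta(R)$ rounds.
\end{itemize}
Your fallback also points the wrong way. $t'<\OPT/8$ bounds $t'$ from above and hence $\ell$ from below; the only upper bound on $\ell$ comes from $t'\ge\OPT/16$, so $R\ge 15$ is not available.

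In fact, the counting cannot yield the ``offers actually made'' version with these constants. Take $d$ just below $R/4$, with every success spoiling two rounds. Then about $R/2$ rounds remain, and the final descent may make low-threshold offers on only about $R/4$ of them before returning. Discarding up to $0.2R$ non-dense rounds leaves only about $0.05R$.

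So you have two options. You can state the lemma in the vacuous sense, exactly as the paper does, and your first paragraph suffices. Or you can insist on actual offers, which Case~1 of the value-recovery argument implicitly needs and which the paper also leaves unaddressed. In that case you must charge the tail against the low-threshold part of the final descent and re-tune the constants, rather than absorb one or two extra rounds.
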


\begin{proof}
Assume (1) does not hold. Let \(d\) be the number of successful descents, so \(d < 0.25R\).
By Lemma~\ref{lem:counting}, the number of rounds on which \RED uses only thresholds \(\leq t'\) is at least
\[
R-(2d+1)\;>\; R-\Bigl(2\cdot 0.25R+1\Bigr)\;=\;0.5R-1.
\]
For \(R\ge 5\) (which is ensured by the large-market regime), we have \(0.5R-1 \ge 0.3R\).
So \RED uses thresholds \(\leq t'\) on at least \(0.3 R\) rounds. Conditioning on \(\mathcal{F}\), at most \(0.2R\) rounds are not dense. Therefore, among these at least \(0.3R\) low-threshold rounds, at least
\(
0.3R - 0.2R \;=\; 0.1R
\)
are dense, proving (2).
\end{proof}

\subsection{Value Recovery}

We conclude by analyzing the two cases in Lemma~\ref{lem:dichotomy} and lower bounding the value obtained by \RED in each case. Let $A$ denote the event that \RED makes offers using thresholds $\leq t'$ on at least $0.1R$ dense rounds. We consider two cases:

\textbf{Case 1: event \emph{A}.} Let $Q$ be the set of agents in dense rounds where \RED makes offers using $t \leq t'$ and let $Q' = Q \cap C^*$. Recall that $\ell = 50 \; \frac{\vmax}{t'} \; n $ and $\OPT/16 \leq t' < \OPT/8$. Finally, let $T$ be the solution at the end of the run. By the definition of density and submodularity, it holds 
$$v(Q') \geq 0.1R \cdot 160 \vmax \geq 0.1 (\frac{5}{6} \frac{n}{2 \ell}) \cdot 160 \vmax \geq \OPT/120$$ 
where the second inequality uses $R \geq \frac{5}{6}\frac{n}{2\ell}$ that holds by the large market assumption.  Using submodularity and monotonicity we get:
$$ v(Q') - v(T) \leq v(Q' \cup T) - v(T) \leq \sum_{a \in Q' \setminus T} v(a \mid T)$$
We know that all these agents rejected offers with threshold at most $\OPT/8$ thus:
$$ \sum_{a \in Q' \setminus T} v(a \mid T) \leq \frac{\OPT}{8B} \sum_{a \in Q' \setminus T} c(a) $$

Now recall that due to the definition of $C^*$ there exists an ordering $\pi_0$ with which the agents in $Q'$ are purchasable with threshold $\OPT/2$. Therefore, if we assume that indices in $Q'\setminus T = \{a_1, \ldots, a_k\}$ are ordered with respect to $\pi_0$ it holds
$$ c(a_i) \leq \frac{2B}{\OPT} v(a_i \mid \{a_1, \ldots, a_{i-1}\})$$

Summing over all agents in $Q' \setminus T$:
$$ \sum_{i = 1}^k c(a_i) \leq \frac{2B}{\OPT} \sum_{i = 1}^k v(a_i \mid \{a_1, \ldots, a_{i-1}\}) \leq \frac{2B}{\OPT} v(Q' \setminus T) \leq \frac{2B}{\OPT} v(Q')$$

Substituting back to the initial equation we get:
$$ v(Q') - v(T) \leq \frac{v(Q')}{4} \Rightarrow v(T) \geq \frac{3}{4} v(Q') \geq \frac{1}{160} \OPT. $$


\textbf{Case 2: negation of \emph{A}.} Conditioning on $\mathcal{E}$ and $\mathcal{F}$, Lemma~\ref{lem:dichotomy} implies that \RED completes at least \emph{R/4} successful descents. Since \RED completed $R/4$ descents, by the definition of completed descents it accumulated $\frac{R}{4} 49 \vmax$ value. Therefore:
$$\E{}{\ALG\mid \mathcal{E}, \mathcal{F}, \neg A} \geq \frac{R}{4} 49 \vmax \geq \frac{1}{4} \left(\frac56 \frac{n}{2\ell}\right) \cdot 49\vmax \geq \frac{\OPT}{160}$$

\noindent
Putting everything together
\begin{align*}
    \E{}{\ALG} & \geq \E{}{\ALG \mid \mathcal{E}, \mathcal{F}} \Prob{\mathcal{F} \mid \mathcal{E}} \Prob{\mathcal{E}}\\
    & \geq \frac{1}{4} \min \big\{\E{}{\ALG \mid \mathcal{E}, \mathcal{F}, A}, \E{}{\ALG \mid \mathcal{E}, \mathcal{F}, \neg A}\big\} \geq \frac{\OPT}{640}.
\end{align*}

\subsection{Small and Medium Markets}\label{sec:small}

In the previous subsections, we analyzed the performance of our large-market mechanism under the assumption that $\OPT \ge 8000 \cdot v_{\max}$. We now remove this assumption and present a complete mechanism that achieves a constant competitive ratio for \emph{all} instances. 

Our approach is to randomize over three truthful mechanisms, each tailored to a different market regime. The resulting mechanism remains universally truthful and budget-feasible.

\begin{algorithm}[ht]
\caption{\textsc{PostedPrice}}
\KwInput{Set of agents $N$, budget $B$}
with probability $80/1046$ run Dynkin's algorithm on $N$\;
with probability $326/1046$ run \textsc{MediumMarket} on $N$ with budget $B$\;
with probability $640/1046$ run \LM\ on $N$ with budget $B$\;
\end{algorithm}

When the instance is dominated by a single high-value agent, we rely on Dynkin's classical secretary algorithm.

\begin{theorem}\label{thm: Dynkin}
If $v_{\max} \ge \OPT/29$, then Dynkin's algorithm is $(29e)$-competitive.
In particular, it is $80$-competitive.
\end{theorem}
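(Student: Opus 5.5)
Dynkin's algorithm is applied to the singleton values $v(a)$. It observes the first $\lfloor n/e\rfloor$ arrivals without hiring anyone. It then posts the full budget $B$ to the first subsequent agent $a$ whose singleton value is at least the maximum singleton value seen during the observation phase. It hires at most one agent and pays at most $B$, so it is budget-feasible.

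The plan is to show two things: that this agent accepts whenever it is the one with $v(a)=\vmax$, and that this happens with probability at least $1/e$. The theorem then follows because
\[
\E{}{\ALG}\;\ge\;\tfrac1e\cdot \vmax\;\ge\;\tfrac{1}{29e}\,\OPT .
\]

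First, individual rationality of the offer. Any agent $a$ that belongs to some feasible solution must satisfy $c(a)\le B$, since costs are nonnegative. An agent with $c(a)>B$ can never be hired by any budget-feasible mechanism, including the offline benchmark $\OPT$. So we may discard such agents from the instance. Alternatively, we can redefine $\vmax$ as the maximum of $v(a)$ over agents with $c(a)\le B$; this is the quantity that matters for $\OPT$. With this convention, the agent $a_{\max}$ attaining $\vmax$ accepts the posted price $B$.

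Second, the classical secretary bound. Under a uniformly random arrival order, the stopping rule that skips the first $\lfloor n/e\rfloor$ agents and then takes the first agent beating all of them selects the overall maximum with probability at least $1/e$, up to the standard $o(1)$ correction. If one wants exactly $1/e$ for every $n$, a continuous-time variant (uniform arrival times) gives it. Whenever the selected agent is $a_{\max}$, the mechanism collects value $\vmax$. Since $v\ge 0$, every other outcome contributes nonnegative value.

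Finally, combining the two steps with the hypothesis $\vmax\ge \OPT/29$ gives the displayed inequality. Since $29e\approx 78.8\le 80$, the mechanism is also $80$-competitive.

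The main obstacle is the subtlety in the first step. The algorithm must use only the singleton values $v(a)$ it can query on revealed agents. It never learns $c(a)$ beyond the accept/reject response, and it posts $B$ only once. So the argument must make sure that the comparison used by the stopping rule is over agents that could actually be hired. The plan is to handle this by restricting attention to agents with $c(a)\le B$, which is without loss for $\OPT$. If the paper intends the $\vmax$ of the whole instance instead, the claim may need a small adjustment, and I would state the restriction explicitly.
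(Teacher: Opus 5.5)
Your argument is correct and matches the paper's. The paper simply observes that Dynkin's rule hires the agent of maximum singleton value with probability at least $1/e$, so $\mathbb{E}[\ALG]\ge \vmax/e\ge \OPT/(29e)$ and $29e<80$.

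The one point to tighten is your fallback of redefining $\vmax$ over agents with $c(a)\le B$. That does not help, because the stopping rule cannot see costs and can be lured by unaffordable high-value agents. What is actually needed, and what the paper tacitly assumes as is standard, is that every agent satisfies $c(a)\le B$.
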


For intermediate regimes, we use a carefully tuned variant of the mechanism of~\cite{Bada2012}, called \textsc{MediumMarket} (formally presented in Section~\ref{sec:mediummarket} of the Appendix), which is optimized for the range $29\,v_{\max}\le \OPT \le 8000\,v_{\max}$. The analysis of this mechanism follows standard arguments and is deferred to the Appendix~\ref{sec:mediummarket}. 

\paragraph{Putting everything together.} We are ready to prove Theorem~\ref{thm:monotone}.

\begin{proof}
Universal truthfulness follows immediately, as \textsc{PostedPrice} is a randomization over universally truthful mechanisms. 

We bound the competitive ratio by considering three cases, depending on the value of $\OPT$. 
\begin{enumerate}
    \item $\OPT \le 29 \cdot v_{\max}$: with probability $80/1046$, \textsc{PostedPrice} executes Dynkin's algorithm, which achieves value at least $\OPT/80$ in expectation. Thus,
\[
\mathbb{E}[\ALG] \;\ge\; \frac{80}{1046}\cdot \frac{\OPT}{80} \;=\; \frac{\OPT}{1046}.
\]

    \item $29 \cdot v_{\max} < \OPT \le 8000 \cdot v_{\max}$: with probability $326/1046$, \textsc{PostedPrice} executes \textsc{MediumMarket}, which is $326$-competitive. Hence,
\[
\mathbb{E}[\ALG] \;\ge\; \frac{326}{1046}\cdot \frac{\OPT}{326} \;=\; \frac{\OPT}{1046}.
\]

    \item $\OPT \ge 8000 \cdot v_{\max}$: with probability $640/1046$, \textsc{PostedPrice} executes \LM, which is $640$-competitive. Therefore,
\[
\mathbb{E}[\ALG] \;\ge\; \frac{640}{1046}\cdot \frac{\OPT}{640} \;=\; \frac{\OPT}{1046}.
\]
\end{enumerate}

In all cases, $\mathbb{E}[\ALG] \ge \OPT/1046$, completing the proof.
\end{proof}

\section{Extension to non-Monotone Submodular Valuations}

In this section, we give the first constant-competitive posted-price mechanism for online procurement auctions with non-monotone submodular valuations. Since non-monotonicity raises a modeling choice that is irrelevant in the monotone case, we begin by specifying the commitment model we study.

A posted-price mechanism always pays an agent who accepts its offer, but in the non-monotone setting the employer may not want to actually use every purchased service. We therefore allow the mechanism to \emph{discard} hired agents (i.e. to pay them if they accept but to select not to use their services). To preserve the online nature of the problem, we require that this keep/discard decision is made irrevocably at the time of acceptance (i.e., immediately after the agent accepts the offer) \emph{without knowledge of future arrivals}. This commitment model differs fundamentally from clock auctions (see ~\cite{HanWHC23}), even when restricted to a single query per agent. Our guarantees obviously extend to the weaker model in which the employer may postpone all discarding decisions until the end of the sequence.

\subsection{Non-Monotone Repeated Descent (\textsc{NM-ReD})}

We present \textsc{NM-ReD}, a non-monotone analogue of \RED for large-market instances. At its core \textsc{NM-ReD} embeds the \textsc{GenSm-Online} mechanism of \citet{AmanatidisKS22} into our Repeated Descent framework. Specifically, it maintains two candidate solution tracks \(S_1,S_2\). Each accepted agent is assigned to the track on which it has larger marginal value (at the time of arrival), and the mechanism uses the same linear price rule as in \RED with respect to that track. To handle non-monotonicity, \textsc{NM-ReD} additionally forms random subsamples \(T_1\subseteq S_1\) and \(T_2\subseteq S_2\) by independently retaining each accepted agent with probability \(1/2\), and returns one of \(S_1,S_2,T_1,T_2\) at the end according to a fixed distribution.

\begin{algorithm}[ht]
\caption{\textsc{NM-ReD}}
\label{alg:nm-red}
\SetAlgoNoEnd
\DontPrintSemicolon
\KwInput{Remaining agents $N_{\text{rem}}$, budget $B$, maximum singleton value $\vmax$}

\textbf{Initialize:} $S_1 \gets \varnothing$, $S_2 \gets \varnothing$, $T_1 \gets \varnothing$, $T_2 \gets \varnothing$, $\Brem \gets B$\;
Sample the output set $T$ as follows:\;
\Indp
$T \gets S_1$ w.p.\ $3/10$, \quad $T \gets S_2$ w.p.\ $3/10$\;
$T \gets T_1$ w.p.\ $2/10$, \quad $T \gets T_2$ w.p.\ $2/10$\;
\Indm
\While{$|N_{\text{rem}}| \ge 50$}{
    $t \gets n\cdot \vmax$,$V \gets 0$,$\tau \gets 50\cdot \frac{\vmax}{t}\cdot n$\;

    \While{$V < 49\,\vmax$}{
        \If{$\tau \le 0$}{
            $t \gets t/2$,$\tau \gets 50\cdot \frac{\vmax}{t}\cdot n$\;
            \If{$|N_{\text{rem}}| < \tau$}{
            \KwRet{$T$}\;
        }
        }

        Let $a$ be the next arriving agent\;
        $\hat{j} \gets \arg\max_{j\in\{1,2\}} v(a \mid S_j)$, $p \gets \frac{B}{t}\cdot v(a \mid S_{\hat{j}})$\;
        \If{$p \leq \Brem$}{
        Offer price $p$ to agent $a$\;
                \If{$p \ge c(a)$}{
            $V \gets V + v(a \mid S_{\hat{j}})$, $S_{\hat{j}} \gets S_{\hat{j}} \cup \{a\}$, $\Brem \gets \Brem - p$\;

            With probability $1/2$: $T_{\hat{j}} \gets T_{\hat{j}} \cup \{a\}$\;
            Update $T$\;}}

        $N_{\text{rem}} \gets N_{\text{rem}}\setminus\{a\}$, $\tau \gets \tau - 1$\;
    }
}
\KwRet{$T$}\;
\end{algorithm}

As in the monotone case, our large-market mechanism, called \textsc{NM-LargeMarket}, uses the first half of the arrival sequence to estimate $\vmax$, and then runs the non-monotone variant of \RED on the remaining agents. In this subsection, we prove the following theorem.

\begin{theorem}\label{thm: non-monotone large}
If $\OPT \geq 50000 \cdot \vmax$ then \NLM\;is $44800$-competitive.
\end{theorem}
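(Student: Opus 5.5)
We mirror the monotone analysis of \LM, reusing every structural ingredient that does not depend on monotonicity, and change only the value-recovery step. First, budget feasibility: in \textsc{NM-ReD} a descent's payments are $\frac{B}{t}$ times the value $V$ accumulated during it, exactly as in \RED. So the argument of Lemma~\ref{lemma:budget} applies verbatim to the combined expenditure on $S_1\cup S_2$, and the unavailability condition is never triggered. Next we condition on $\mathcal{E}$ (probability at least $1/2$) and on $\mathcal{F}$ (probability at least $1/2$ given $\mathcal{E}$, by Corollary~\ref{cor:number of dense}). We redefine $t'$ as the power-of-two threshold in a suitable window $[\OPT/c_1,\OPT/c_2)$. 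The benchmark $C^*$ is now built by applying linear pricing at $\OPT/2$ with the two-track rule, or simply by taking a $1/2$-subsample of the optimal set, so that we have an analogue of Lemma~\ref{lemma:Svalue} giving $v(C^*)=\Omega(\OPT)$ together with ordered marginals $w(\cdot)$ satisfying $c(a_i)\le \frac{2B}{\OPT}w(a_i)$. With these in place, Lemma~\ref{lem:round-submod}, the Bernstein density bound (with adjusted constants, using $\OPT\ge 50000\vmax$ so that $R$ is large), Lemma~\ref{lem:offer-mono}, Lemma~\ref{lem:counting} and Lemma~\ref{lem:dichotomy} carry over unchanged, since they use only submodularity, the arrival order, and the threshold dynamics.

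\textbf{Case $\neg A$.} At least $R/4$ descents succeed, so $\sum_{a\in S_1\cup S_2}(\text{marginal at insertion})\ge \frac R4\cdot 49\vmax=\Omega(\OPT)$. Each insertion marginal is the marginal with respect to the track the agent joins, and these marginals telescope to $v(S_1)+v(S_2)$. Hence $\max\{v(S_1),v(S_2)\}=\Omega(\OPT)$, and since $T=S_j$ with probability $3/10$ each, $\E{}{\ALG}=\Omega(\OPT)$.

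\textbf{Case $A$.} This is the main obstacle. Let $Q'$ be the set of $C^*$-agents in low-threshold dense rounds, so $v(Q')=\Omega(\OPT)$ by density and Lemma~\ref{lem:round-submod}. Each $a\in Q'$ that was not accepted received a price $\frac{B}{t}\max_j v(a\mid S_j)$ with $t\le t'$ and rejected it. Hence $v(a\mid S_j^{(a)})\le \frac{t'}{B}c(a)$ for both $j$, where $S_j^{(a)}$ is the state of track $j$ at $a$'s arrival. By submodularity the same bound holds with respect to the final sets $S_j$. Following \citet{AmanatidisKS22}, we write, for each $j$,
\[
v(Q'\cup S_j)-v(S_j)\le \sum_{a\in Q'\setminus(S_1\cup S_2)} v(a\mid S_j)\le \frac{t'}{B}\,c(Q')\le \frac{2t'}{\OPT}\,v(Q'),
\]
which is a small constant times $v(Q')$ by the choice of $t'$. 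Agents of $Q'$ that were accepted lie in $S_1$ or $S_2$. Now apply the standard two-track inequality for disjoint $S_1,S_2$: by submodularity and nonnegativity, $v(Q'\cup S_1)+v(Q'\cup S_2)\ge v(Q'')$ for an appropriate $Q''\subseteq Q'$ covering the remaining part. This yields $v(Q'\cup S_1)+v(Q'\cup S_2)=\Omega(v(Q'))$, and therefore $v(S_1)+v(S_2)\ge \Omega(v(Q'))$ minus lower-order terms. Non-monotonicity means $v(S_j)$ can lose to $v(S_j\cup Q')$ only through elements of $S_j$ with negative marginals. We control this with the lemma of \citet{Feige2011}: $\E{}{v(T_j)}\ge \frac14\max_{X\subseteq S_j}v(X)$. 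Combining the two estimates, with the output distribution $3/10,3/10,2/10,2/10$, gives $\E{}{\ALG\mid\mathcal{E},\mathcal{F},A}=\Omega(\OPT)$.

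The step we expect to be hardest is making this charging argument rigorous while taking the max over tracks correctly. What we would try first is to bound $v(S_1)+v(S_2)$ from below, via submodularity, by the total insertion marginals minus losses, as in \textsc{GenSm-Online}. Finally, multiplying the conditional bound by $\Prob{\mathcal{E}}\Prob{\mathcal{F}\mid\mathcal{E}}\ge 1/4$ and tracking the constants should yield the ratio $44800$.
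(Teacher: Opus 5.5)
Your budget-feasibility, counting/dichotomy and Case $\neg A$ steps are fine. Case $A$, which you leave open yourself, contains a false step and is missing the key decomposition.

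\textbf{The false step in Case $A$.} Submodularity only gives $v(Q'\cup S_j)-v(S_j)\le\sum_{a\in Q'\setminus S_j}v(a\mid S_j)$. The set $Q'\setminus S_j$ includes $Q'\cap S_{3-j}$, i.e.\ agents who \emph{accepted} and were assigned to the other track. No rejection certifies $v(a\mid S_j)\le\frac{t'}{B}c(a)$ for them, so you cannot restrict the sum to $Q'\setminus(S_1\cup S_2)$. Consequently the (true) inequality $v(Q'\cup S_1)+v(Q'\cup S_2)\ge v(Q')$ tells you nothing about $v(S_1)+v(S_2)$.

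\textbf{The missing decomposition.} What is needed is the three-way split behind Lemma~\ref{lem:amanat}. Let $C$ be the benchmark agents in low-threshold dense rounds, set $C_j=C\cap S_j$ for $j=1,2$ and $C_3=C\setminus(S_1\cup S_2)$, and use $v(C)\le v(C_1)+v(C_2)+v(C_3)$.
\begin{itemize}
\item The accepted parts are handled by Feige et al.\ applied to the tracks themselves: $C_j\subseteq S_j$ gives $v(C_j)\le\max_{X\subseteq S_j}v(X)\le 4\,\mathbb{E}[v(T_j)]$. This is exactly where non-monotonicity is absorbed.
\item Only the rejected part goes through the two-track inequality: $v(C_3)\le v(C_3\cup S_1)+v(C_3\cup S_2)\le v(S_1)+v(S_2)+\frac{2t'}{B}c(C_3)$.
\end{itemize}
Hence $\mathbb{E}[v(S_1)+v(S_2)+4v(T_1)+4v(T_2)]\ge v(C)-2t'$. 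The $3/10,3/10,2/10,2/10$ output rule then gives $\mathbb{E}[v(T)]\ge\frac{1}{20}\bigl(v(C)-2t'\bigr)$.

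\textbf{Your benchmark is unjustified.} It is also unnecessary.
\begin{itemize}
\item Lemma~\ref{lemma:Svalue} relies on monotonicity, so no non-monotone analogue comes for free.
\item A $1/2$-subsample of the optimum carries no per-agent bound $c(a_i)\le\frac{2B}{\OPT}w(a_i)$.
\item For a set built by two-track pricing, the $\pi_0$-ordered marginals of the union can be negative. That breaks the Bernstein density step, which needs $0\le w\le\vmax$.
\end{itemize}
The paper instead takes $C^*$ to be an inclusion-wise minimal optimal solution. Minimality plus submodularity give $0\le w(c_i)\le\vmax$ and $\sum_i w(c_i)=\OPT$, which is all the density argument needs. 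The high-return property is never used: the rejected part is charged only through $c(C_3)\le c(C^*)\le B$, i.e.\ an additive loss of $t'$ per track.

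\textbf{The constant.} With an additive loss, the constants must make $v(C)$ dominate $2t'$. The paper takes $t'\in[\OPT/200,\OPT/100)$, density $2000\vmax$, and a dichotomy of $R/8$ successes versus $0.3R$ dense low-threshold rounds, so that $v(C)\ge 5t'$. Then:
\begin{itemize}
\item Case $A$ yields $\frac{3}{20}t'\ge\frac{3\OPT}{4000}$.
\item Case $\neg A$ yields $\OPT/11200$, using Feige also for $T_1,T_2$.
\item The factor $\frac14$ from $\mathcal{E},\mathcal{F}$ then gives $44800$.
\end{itemize}
Your $R/4$ versus $0.1R$ split with unspecified windows does not pin down this constant.
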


We begin by observing that budget feasibility carries over from Lemma \ref{lemma:budget}. Although the mechanism now maintains two parallel solution tracks, the budget analysis follows the same high-level structure as in the monotone case. 

Another structural observation is that the number of successful descents is bounded by the total value accumulated across the two tracks. By construction, each successful descent terminates after at least $49\vmax$ marginal value is added to $v(S_1) + v(S_2)$. Therefore, if $d$ is the number of successful descents, we have $v(S_1) + v(S_2) \geq 49\vmax \cdot d$.

As in the monotone case, we partition the second half of the input sequence into \emph{rounds} of length
\(
\ell := 50 \cdot \frac{\vmax}{t'} \cdot n,
\)
where $t'$ is the unique threshold of the form $n\cdot \vmax / 2^r$, $r \in \mathbb{N}$ that satisfies \(\OPT/200 \le t' < \OPT/100\).

To apply concentration, we compare against a fixed benchmark set. Let \(C^*=\{c_1,\dots,c_k\}\) be an \emph{inclusion-wise minimal} optimal solution, and fix an arbitrary ordering \(\pi_0\) of its elements. For each \(i\in[k]\), define the marginal contribution
\[
w(c_i)\;:=\; v(\{c_1,\dots,c_i\})-v(\{c_1,\dots,c_{i-1}\}).
\]
Then \(\sum_{i=1}^k w(c_i)=v(C^*)=\OPT\). Moreover, by minimality we have \(v(c_i\mid C^*\setminus\{c_i\})\ge 0\), and by submodularity
\(\{c_1,\dots,c_{i-1}\}\subseteq C^*\setminus\{c_i\}\) implies \(w(c_i)\ge 0\). Hence, for all \(i\in[k]\),
\[
0 \;\le\; w(c_i) \;\le\; \vmax.
\]

We define \(X^j\) exactly as in the monotone case, and update the density threshold to match the scale induced by \(t'\).

\begin{definition}[Dense round]\label{def:S-dense}
Let \(W_j := C^*\cap A_j\) be the benchmark agents appearing in round \(j\). We call round \(j\) \emph{dense} if
\[
\sum_{a\in W_j} w(a) \;\ge\; 2000\,\vmax.
\]
\end{definition}

Applying Bernstein's inequality (using negative association under random-order arrivals) yields the following bound.

\begin{lemma}\label{lem:nm-bern}
Let \(\mathcal{E}\) denote the event that an agent of value \(\vmax\) appears in the first half of the input sequence. For every round \(j\),
\[
\Prob{X^j \le 2000\,\vmax \,\middle|\, \mathcal{E}} \le 0.1.
\]
\end{lemma}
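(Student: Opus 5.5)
The plan is to mirror the calculation behind Lemma~\ref{lemma:Bern-appl}, with the constants retuned to the non-monotone scale $\OPT/200 \le t' < \OPT/100$ and the density level $2000\,\vmax$. Throughout we condition on $\mathcal{E}$ and let $a_{\max}$ be the agent of value $\vmax$ that appears in the first half.

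\textbf{Step 1: the conditional law of a round.} Conditional on $\mathcal{E}$, the contents of the second half form a uniformly random ordered sample, without replacement, of the $n-1$ agents of $N\setminus\{a_{\max}\}$. Hence for each position $i$ of round $j$, the occupying agent is uniform over $N\setminus\{a_{\max}\}$. The variables $X_i^j$ of \eqref{eq:Xij} are fixed nonnegative functions of the agents drawn without replacement. They are therefore negatively associated, so Theorem~\ref{thm:bern-ineq} applies to them.

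\textbf{Step 2: the mean.} For each $i$,
\[
\E{}{X_i^j\mid\mathcal{E}}=\frac{1}{n-1}\sum_{a\in C^*\setminus\{a_{\max}\}}w(a)\ge\frac{\OPT-\vmax}{n},
\]
using $\sum_{a\in C^*}w(a)=\OPT$ and $0\le w\le\vmax$. If $a_{\max}\notin C^*$, nothing is even subtracted. The round length is $\ell=50\,\vmax n/t'$, and since $t'<\OPT/100$ we get $\ell>5000\,\vmax n/\OPT$. Combining the two bounds,
\[
\mu:=\E{}{X^j\mid\mathcal{E}}\ge 5000\,\vmax\Bigl(1-\frac{\vmax}{\OPT}\Bigr)\ge 4999\,\vmax ,
\]
where the last inequality uses $\OPT\ge 50000\,\vmax$. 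Rounding of $\ell$ to an integer only affects lower-order terms.

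\textbf{Step 3: variance and range, then Bernstein.} Each $X_i^j$ lies in $[0,\vmax]$, so we may take $b=\vmax$. For the variance,
\[
V\le\sum_i\E{}{(X_i^j)^2}\le\vmax\sum_i\E{}{X_i^j}=\vmax\,\mu .
\]
Apply Theorem~\ref{thm:bern-ineq} with deviation $s:=\mu-2000\,\vmax\ge 0.59\,\mu$. Since $s\le\mu$, the denominator satisfies $2V+\tfrac{2}{3}bs\le\tfrac{8}{3}\vmax\,\mu$. The exponent is then at least
\[
\frac{0.59^2\,\mu^2}{\tfrac{8}{3}\vmax\,\mu}\ge 0.13\,\frac{\mu}{\vmax}\ge 600 .
\]
This gives $\Prob{X^j\le 2000\,\vmax\mid\mathcal{E}}\le e^{-600}\le 0.1$, with enormous slack. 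The upper bound $t'\ge\OPT/200$, i.e.\ $\ell\le 10^4\,\vmax n/\OPT$, is not needed here; it only matters later, when counting rounds.

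\textbf{Main obstacle.} The delicate point is Step 1: justifying rigorously that conditioning on $\mathcal{E}$ preserves exchangeability and negative association of the second-half positions. It must also handle the case where several agents attain value $\vmax$, in which case $a_{\max}$ should be fixed as a specific such agent and the event refined accordingly. I would handle both by conditioning on the exact position of $a_{\max}$ in the first half. Given that position, the rest of the permutation is uniform, and negative association follows from the permutation-distribution results cited before Theorem~\ref{thm:bern-ineq}. Averaging over the position of $a_{\max}$ then yields the stated bound.
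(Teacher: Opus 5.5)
Your proof is correct and follows the same route as the paper. The paper gives no separate proof here; it defers to the appendix argument for Lemma~\ref{lemma:Bern-appl}: condition on $a_{\max}$ lying in the first half, use negative association of the permutation distribution, compute the mean $\ell\, g(C^*)/(n-1)$, bound the variance via $(X_i^j)^2 \le \vmax X_i^j$, and apply Bernstein. Your cruder choices ($b=\vmax$, $V\le \vmax\mu$, deviation $\mu-2000\,\vmax$ instead of $\mu/6$) still leave ample slack, and your reading of $\mathcal{E}$ as ``a fixed $a_{\max}$ lies in the first half'' is exactly how the paper implicitly treats it.
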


We next establish the analogue of Lemma~\ref{lem:dichotomy}. Its proof is identical since it is purely structural and does not rely on monotonicity, however we select a different combination of constants to better fit our proof.

\begin{lemma}\label{lem:dichotomy nm}
Conditioning on event \(\mathcal{F}\), at least one of the following holds:
\begin{enumerate}
    \item \textsc{NM-ReD} completes at least \(R/8\) successful descents.
    \item \textsc{NM-ReD} makes offers using thresholds \(t \leq t'\) on at least \(0.3R\) dense rounds.
\end{enumerate}
\end{lemma}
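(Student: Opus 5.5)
We argue by contraposition, exactly as in the proof of Lemma~\ref{lem:dichotomy}. Suppose item~(1) fails, so the number $d$ of successful descents satisfies $d < R/8$. The first step is to transfer the counting statement to \textsc{NM-ReD}. Lemma~\ref{lem:offer-mono} and Lemma~\ref{lem:counting} use only the threshold schedule of a descent: thresholds start at $n\vmax$, halve after blocks of $50\frac{\vmax}{t}n$ agents, and reset after a success. \textsc{NM-ReD} uses the same schedule. The only change is that $V$ now accumulates marginals on the chosen track $S_{\hat j}$, and this does not affect when thresholds change, except through the success times. Hence the counting argument carries over verbatim with the new $t'$ and $\ell$: \textsc{NM-ReD} posts only thresholds $\le t'$ on at least $R-(2d+1)$ rounds.

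Plugging in $d<R/8$ gives at least
\[
R - 2d - 1 \;>\; R - \tfrac{R}{4} - 1 \;=\; 0.75R - 1
\]
low-threshold rounds. The large-market assumption $\OPT \ge 50000\,\vmax$ makes $R$ large. Indeed $\ell = 50\frac{\vmax}{t'}n \le 50\cdot 200\,\frac{\vmax}{\OPT}\,n \le n/5$, so with $R \ge \frac{5}{6}\frac{n}{2\ell}$ as in the monotone case we get $R\ge 2$. In fact we only need $R\ge 4$ so that $0.75R-1 \ge 0.5R$, and I expect this to follow from the same bound. If the constant is tight, I would instead use $R \ge \lfloor n/(2\ell)\rfloor \ge \lfloor 2.5\rfloor$ and check directly that the leftover slack suffices.

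Next we condition on $\mathcal{F}$, now defined via Lemma~\ref{lem:nm-bern} as the event that at least $0.8R$ rounds are dense. Its probability is obtained by the same Markov-type step as in Corollary~\ref{cor:number of dense}. Under $\mathcal{F}$ at most $0.2R$ rounds are not dense. Therefore at least $0.5R - 0.2R = 0.3R$ of the low-threshold rounds are dense, which is item~(2).

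The main obstacle is the integrality and constant bookkeeping in the step $0.75R - 1 \ge 0.5R$, that is, confirming $R\ge 4$ from the large-market regime and the choice $\OPT/200\le t'<\OPT/100$. Everything else is structural and inherited from the monotone analysis.
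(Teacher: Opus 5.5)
Your route is the paper's route. The paper just says the proof of Lemma~\ref{lem:dichotomy} carries over with new constants: negate (1), apply Lemma~\ref{lem:counting} to get at least $R-(2d+1)$ low-threshold rounds, then subtract the at most $0.2R$ non-dense rounds allowed under $\mathcal F$. However, the one step you leave open does not close the way you expect. With $\OPT\ge 50000\,\vmax$ and $t'\ge\OPT/200$ you only get
\[
\frac{n/2}{\ell}=\frac{t'}{100\,\vmax}\ge 2.5, \qquad\text{hence}\qquad R=\Bigl\lfloor \frac{n/2}{\ell}\Bigr\rfloor\ge 2,
\]
and not $R\ge 4$. Also, do not import the monotone-case estimate $R\ge\frac56\cdot\frac{n}{2\ell}$ here. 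It relies on $\frac{n}{2\ell}\ge 5$, which the monotone constants give, and it fails when $\frac{n}{2\ell}$ is just below $3$. So the inequality $0.75R-1\ge 0.5R$ is not available in general.

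The missing idea is that $d$ is an integer. You need $R-2d-1\ge R/2$, i.e.\ $R\ge 4d+2$.
\begin{itemize}
\item If $d\ge 1$, then $d<R/8$ gives $R\ge 8d+1\ge 4d+2$.
\item If $d=0$ (which is forced whenever $R\le 8$), you only need $R\ge 2$, and the large-market bound above does give this.
\end{itemize}
So there are at least $R/2$ low-threshold rounds. Under $\mathcal F$ at most $0.2R$ rounds are non-dense, so at least $0.3R$ of the low-threshold rounds are dense, which is item (2). The paper's one-line ``identical proof'' silently skips exactly this point, since the monotone argument used $R\ge 5$.
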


We proceed to prove Theorem \ref{thm: non-monotone large}.

\begin{proof}
We condition throughout on the event \(\mathcal{E}\) that an agent of value \(\vmax\) appears in the first half of the sequence, and on the event \(\mathcal{F}=\{D\ge 0.8R\}\) that at least \(0.8R\) rounds are dense. Let \(A\) denote the event that \textsc{NM-ReD} makes offers using thresholds \(\leq t'\) on at least \(0.3R\) dense rounds. We consider two cases.\\

\textbf{Case 1: event \emph{A}.}
Let \(\mathcal{G}\) be the set of dense rounds on which the mechanism makes offers with threshold \(t \leq t'\). For each \(j\in\mathcal{G}\), density implies \(\sum_{a\in W_j} w(a) \ge 2000\,\vmax\), where \(W_j=C^*\cap A_j\). Let \(C := \bigcup_{j\in\mathcal{G}} W_j\). Since rounds are disjoint, the sets \(\{W_j\}_{j\in\mathcal{G}}\) are disjoint, and therefore
\[
\sum_{a\in C} w(a)
\;\ge\;
2000\,\vmax\cdot |\mathcal{G}| 
\;\ge\; 5t',
\]
where the last inequality uses \(|\mathcal{G}|\ge 0.3R\).

We use a slightly generalized version of~\cite[Theorem 4.1]{AmanatidisKS22}; its proof is deferred to Appendix~\ref{sec: amanat}.

\begin{lemma}\label{lem:amanat}
Let \(C\) be any budget-feasible set of agents such that each \(a\in C\) was offered a price computed with some threshold \(t_a\le t'\) (thresholds may vary across agents). Then
\[
\E{}{v(S_1) + v(S_2) + 4v(T_1) + 4v(T_2)}
\;\ge\;
v(C)-2t',
\]
where the expectation is over the internal randomness of \textsc{NM-ReD} (the random subsampling into \(T_1,T_2\) and the final random choice of the output set).
\end{lemma}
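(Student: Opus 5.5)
We follow the template of the \textsc{GenSm-Online} analysis of \citet{AmanatidisKS22}, adapted to varying thresholds. Let $S_1,S_2$ be the final tracks. We split $C$ into the agents $C\cap(S_1\cup S_2)$ that were hired and the agents $C\setminus(S_1\cup S_2)$ that were offered a price and rejected it. (We argue that in the large-market regime the budget-unavailability branch never triggers, by Lemma~\ref{lemma:budget}, so every agent of $C$ indeed received an offer.) Every rejected agent $a$ satisfies
\[
c(a) > \tfrac{B}{t_a}\max_j v(a\mid S_j^{(a)}) \ge \tfrac{B}{t'}\max_j v(a\mid S_j^{(a)}),
\]
where $S_j^{(a)}$ is track $j$ at the arrival of $a$. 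By submodularity, and since tracks only grow, $v(a\mid S_j)\le v(a\mid S_j^{(a)})$. Summing over rejected agents and using $c(C)\le B$ gives
\[
\sum_{a\in C\setminus(S_1\cup S_2)} v(a\mid S_j) < t' \quad\text{for each } j.
\]

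Next we bound $v(C)$ through the tracks. Let $C_1=C\cap S_1$ and $C_2=C\cap S_2$. By submodularity,
\[
v(C\cup S_1)\le v(S_1)+\sum_{a\in C\setminus S_1} v(a\mid S_1),
\]
and the same holds for $S_2$. The rejected part contributes at most $t'$ per track. For the agents of $C_2$ viewed against $S_1$ (and symmetrically $C_1$ against $S_2$), the assignment rule $\hat j=\arg\max$ gives
\[
v(a\mid S_1)\le v(a\mid S_1^{(a)})\le v(a\mid S_2^{(a)}),
\]
so these terms sum to at most $v(S_2)$: the marginals at insertion telescope to $v(S_2)$ and are nonnegative, because accepted agents have $p\ge c(a)\ge0$. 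Adding the two inequalities yields
\[
v(C\cup S_1)+v(C\cup S_2)\le 2v(S_1)+2v(S_2)+2t'.
\]

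The main obstacle is non-monotonicity: $v(C\cup S_j)$ need not dominate $v(C)$. To get around this we use $v(C\cup S_1)+v(C\cup S_2)\ge v(C)+v(C\cup S_1\cup S_2)\ge v(C)$ when $S_1\cap S_2=\emptyset$, which holds since each agent joins one track. For submodular $v$ this gives $v(C\cup S_1)+v(C\cup S_2)\ge v(C)+v(C\cup(S_1\cap S_2))$, hence at least $v(C)$. This yields
\[
2v(S_1)+2v(S_2)\ge v(C)-2t'.
\]
If this is the route the statement intends, the weights $1$ and $4$ for the $T_j$ must come from a cruder step. The alternative we would try if the disjointness trick fails in the chain is Feige's lemma: $\E{}{v(T_j)}\ge \tfrac14\max_{U\subseteq S_j}v(U)$. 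This lets us replace terms where monotonicity was used (for instance bounding $v(C_j)$ or partial tracks by subsets of $S_j$) by $4\E{}{v(T_j)}$, which produces exactly the combination $v(S_1)+v(S_2)+4v(T_1)+4v(T_2)$. Concretely, we would write $v(C)\le v(C_1)+v(C_2)+v(C\setminus(S_1\cup S_2))$ and bound $v(C_j)\le \max_{U\subseteq S_j}v(U)\le 4\E{}{v(T_j)}$. For the rejected part $C_0$, submodularity together with disjointness gives
\[
v(C_0)\le v(C_0\cup S_1)+v(C_0\cup S_2)\le v(S_1)+v(S_2)+2t'.
\]
This last decomposition matches the claimed coefficients, so we would commit to it.
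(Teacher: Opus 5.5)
Your committed decomposition is the paper's proof. You split $C$ into $C\cap S_1$, $C\cap S_2$ and the rejected part $C\setminus(S_1\cup S_2)$ by subadditivity, bound $v(C\cap S_j)\le 4\,\mathbb{E}[v(T_j)]$ via Feige et al., and bound the rejected part by $v(C_0\cup S_1)+v(C_0\cup S_2)\le v(S_1)+v(S_2)+2t'$ using disjointness of the tracks and the rejected-offer marginals. The first route you set aside is also sound: it gives the deterministic bound $v(C)\le 2v(S_1)+2v(S_2)+2t'$, which already implies the statement because $4\,\mathbb{E}[v(T_j)]\ge \max_{U\subseteq S_j}v(U)\ge v(S_j)$.
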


We output $S_1$ and $S_2$ with probability $3/10$ each and $T_1$,$T_2$ with probability $2/10$ each. Therefore 
$$\E{}{v(T)} = \E{}{\frac{3}{10}v(S_1) + \frac{3}{10}v(S_2) + \frac{2}{10}v(T_1) + \frac{2}{10}v(T_2)}  \geq \frac{1}{20} \E{}{v(S_1) + v(S_2) + 4v(T_1) + 4v(T_2)} $$

Conditioning on \(\mathcal{E},\mathcal{F}\), and \(A\), the set \(C\) defined above satisfies \(v(C)\ge \sum_{a\in C}w(a)\ge 5t'\), and every agent in \(C\) was offered a price computed using a threshold \(t\leq t'\).
Applying Lemma~\ref{lem:amanat} yields
\[
\E{}{v(T)\mid \mathcal{E},\mathcal{F},A}
\;\ge\; \frac{3}{20}t' \geq 
\frac{3\OPT}{4000}.
\]

\textbf{Case 2: negation of \emph{A}.}
Conditioning on \(\mathcal{E}\) and \(\mathcal{F}\), Lemma~\ref{lem:dichotomy nm} implies that \textsc{NM-ReD} completes at least \(R/8\) successful descents. Let \(d\) denote the number of successful descents. Each successful descent contributes at least \(49\,\vmax\) total marginal value to \(S_1\) and \(S_2\), hence
\[
v(S_1)+v(S_2)
\;\ge\;
49\,\vmax\cdot d
\;\ge\;
49\,\vmax\cdot \frac{R}{8}
\;\ge\;
\frac{\OPT}{3920}.
\]

We next introduce the following powerful result due to \citet{Feige2011}.

\begin{theorem}[\cite{Feige2011}]\label{thm:Feige}
    Let $v: 2^A \to \mathbb{R}_{\geq 0}$ be a submodular function and let $S$ denote a random subset of $A$, where each element is included with probability $1/2$. Then 
    $$ \E{}{v(T)} \geq \frac{1}{4} \OPT(A),$$
    where $\OPT(A)$ denotes the value of the maximum value subset of $A$.
\end{theorem}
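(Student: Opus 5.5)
We prove the statement via the Lovász extension, using the fact that a uniformly random $1/2$-subsample evaluates the multilinear extension at the point $\tfrac12\mathbf 1_A$. Let $O\subseteq A$ be a maximizer, so $v(O)=\OPT(A)$. Write the random set $S=S_O\cup S_{\bar O}$, where $S_O:=S\cap O$ and $S_{\bar O}:=S\cap (A\setminus O)$; these two parts are independent. The argument has two conditioning steps, one per part, and each uses the standard one-sided lemma below.

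\emph{Lemma.} If $g$ is submodular and nonnegative on subsets of a ground set $U$, and $R$ contains each element of $U$ independently with probability $p$, then
\[
\E{}{g(R)} \ge (1-p)\,g(\varnothing)+p\,g(U).
\]
We prove it from the Lovász extension $\hat g$. First, $\E{}{g(R)}=G(p\mathbf 1)\ge \hat g(p\mathbf 1)$, since for submodular $g$ the multilinear extension dominates the Lovász extension. Second, on the diagonal point $p\mathbf 1$ there is only one level set, so $\hat g(p\mathbf 1)=(1-p)g(\varnothing)+p\,g(U)$ exactly. Alternatively, the lemma follows by induction on $|U|$ via Jensen-type convexity along chains; we will just cite the Lovász-extension argument.

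\emph{Step 1 (condition on the part outside $O$).} Fix $S_{\bar O}=Y$ and apply the lemma to $g_1(Z):=v(Z\cup Y)$ for $Z\subseteq O$. This function is submodular and nonnegative, and $p=1/2$, so
\[
\E{}{v(S)\mid S_{\bar O}=Y}\ge \tfrac12 v(Y)+\tfrac12 v(O\cup Y)\ge \tfrac12 v(O\cup Y),
\]
using $v\ge 0$ to drop the first term.

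\emph{Step 2 (average over the part outside $O$).} Apply the lemma again to $g_2(Y):=v(O\cup Y)$ for $Y\subseteq A\setminus O$. Then
\[
\E{}{v(O\cup S_{\bar O})}\ge \tfrac12 v(O)+\tfrac12 v(A)\ge \tfrac12 v(O),
\]
again by nonnegativity. Combining the two steps gives $\E{}{v(S)}\ge \tfrac14 v(O)=\tfrac14\OPT(A)$.

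The main obstacle is justifying the one-sided lemma cleanly. If we want to avoid citing the Lovász extension, the fallback is a direct proof by induction on $|U|$. Condition on whether the last element $u$ is included, apply the induction hypothesis to $g(\cdot)$ and to $g(\cdot\cup\{u\})$ on $U\setminus\{u\}$, and then use submodularity in the form $g(\{u\})+g(U\setminus\{u\})\ge g(U)+g(\varnothing)$. This inequality is exactly what lets the cross terms collapse to the endpoints $g(\varnothing)$ and $g(U)$. Everything else is bookkeeping with independence and nonnegativity. We also note that the statement's $\E{}{v(T)}$ should read $\E{}{v(S)}$.
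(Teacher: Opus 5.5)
Your argument is correct and is essentially the original proof of Feige, Mirrokni, and Vondr\'ak, which the paper only cites without reproducing: the one-sided bound $\E{}{g(R)}\ge(1-p)\,g(\varnothing)+p\,g(U)$ is applied once to the sample inside an optimal set $O$ and once to the sample outside it, and nonnegativity discards the leftover terms. They prove that bound by a direct, elementary submodularity argument in the spirit of your induction fallback, so the Lov\'asz-extension detour is valid but unnecessary; you are also right that $\E{}{v(T)}$ in the statement is a typo for $\E{}{v(S)}$.
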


Applying Theorem~\ref{thm:Feige} to the random subsampling step  gives
\[
\E{}{v(T_1)} + \E{}{v(T_2)}
\;\ge\;
\frac{1}{4}\bigl(v(S_1)+v(S_2)\bigr).
\]
Since \textsc{NM-ReD} outputs \(S_1,S_2\) with probability \(3/10\) each and \(T_1,T_2\) with probability \(2/10\) each, we obtain
\[
\E{}{v(T)\mid \mathcal{E},\mathcal{F},\neg A}
\;\ge\;
\frac{\OPT}{11200}.
\]

\noindent Combining the two cases,
\[
\E{}{v(T)\mid \mathcal{E},\mathcal{F}}
\;\ge\;
\min\left\{
\E{}{v(T)\mid \mathcal{E},\mathcal{F},A},\;
\E{}{v(T)\mid \mathcal{E},\mathcal{F},\neg A}
\right\}
\;\ge\;
\frac{\OPT}{11200}.
\]
Finally, removing conditioning and using \(\Prob{\mathcal{E}}\ge \tfrac12\) and \(\Prob{\mathcal{F}\mid \mathcal{E}}\ge \tfrac12\),
\[
\E{}{v(T)}
\;\ge\;
\Prob{\mathcal{E}}\,\Prob{\mathcal{F}\mid \mathcal{E}}\cdot \E{}{v(T)\mid \mathcal{E},\mathcal{F}}
\;\ge\;
\tfrac12\cdot\tfrac12\cdot \frac{\OPT}{11200}
\;=\;
\frac{\OPT}{44800}.
\]
\end{proof}

\subsection{Small Market}

It remains to handle the case of small markets, i.e., instances in which
\( \OPT \;\le\; 50000\,\vmax.\) We use the classical Dynkin secretary algorithm on the entire sequence. Since valuations are submodular, this guarantees that the value obtained is at least the value of the selected agent. Dynkin’s algorithm selects the maximum singleton value with probability at least \(1/e\), and hence achieves expected value at least \(\vmax/e\).

In the small-market regime, \(\vmax \geq \frac{\OPT}{50000}\). Therefore, Dynkin’s algorithm is $136000$-competitive, and is truthful and budget feasible by construction.

Randomizing over Dynkin's algorithm and \textsc{NM-ReD} yields a $181000$-competitive posted-price mechanism for non-monotone submodular valuations in all regimes.

\section{Lower Bound for XOS Valuations}

In this section we present a logarithmic lower bound for online procurement auctions with XOS valuation functions. This construction is inspired by the lower bound framework of \citet{babaioff}. As defined in Section \ref{sec:prelim}, the adversary selects the cost of each agent and the valuation function. Then the agents arrive in a uniformly random order. The mechanism has query access to the set of revealed agents. Our main result is the following.

\begin{theorem}
\label{thm:xos-lower-bound}
For procurement auctions with XOS valuation functions, no online budget-feasible mechanism can achieve a competitive ratio better than
$\Omega\!\left(\tfrac{\log n}{(\log\log n)^2}\right)$.
\end{theorem}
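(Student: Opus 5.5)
We will prove the bound via Yao's principle: exhibit a distribution over instances on which every deterministic online budget-feasible mechanism has expected value $O(\log \ell)$ while $\E{}{\OPT}=\Omega(\ell)$. The construction is the one sketched in the introduction. Partition the $n$ agents into $L$ groups $G_1,\dots,G_L$ of size $\ell$ each, with $n=L\ell$. Give every agent cost $c(a)=B/\ell$. Define
\[
v(S)=\max_{k\in[L]}\sum_{a\in S\cap G_k}\mu(a),
\]
where the coefficients $\mu(a)\in\{0,1\}$ are i.i.d.\ Bernoulli$(1/\ell)$. This $v$ is XOS by definition, with one additive clause per group. Set $\ell=\log n/\log\log n$, so that $L=n/\ell$ is polynomially large in $n$.

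\textbf{Lower bound on $\OPT$.} Every agent has cost $B/\ell$, so any set of $\ell$ agents is affordable; in particular, an entire group is. Hence $\OPT\ge\max_k Y_k$, where $Y_k\sim\mathrm{Bin}(\ell,1/\ell)$ counts the ones in group $k$, and the $Y_k$ are independent. For a target $m=c\ell$ with a small constant $c$, we have
\[
\Prob{Y_k\ge m}\ge\binom{\ell}{m}\ell^{-m}(1-1/\ell)^{\ell}\ge e^{-1}\ell^{-m}=e^{-1}e^{-c\ell\log\ell}.
\]
With $\ell\log\ell\approx\log n$, this probability is at least $n^{-c'}$ with $c'<1$. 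Since $L=n/\ell\gg n^{c'}$, some group reaches $m$ with probability $1-o(1)$, giving $\E{}{\OPT}=\Omega(\ell)$.

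\textbf{Upper bound on $\ALG$.} Individual rationality forces every payment to be at least $B/\ell$, so budget feasibility lets the mechanism hire at most $\ell$ agents. The key observation is that the final value equals the number of hired agents with coefficient $1$ in the best single group. Fix a group $k$ and consider the moment the mechanism first hires from it, or first observes any information about it. Value queries on revealed agents only disclose coefficients of agents already arrived. The coefficients of the not-yet-arrived agents of $G_k$ therefore remain independent Bernoulli$(1/\ell)$, independent of the history.

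For each group $k$ touched by the mechanism, the mechanism's value in clause $k$ is at most $\mathbbm{1}[\text{first hire is a }1]+Z_k$, where $Z_k$ counts ones among the agents of $G_k$ hired afterwards. Since at most $\ell$ agents of $G_k$ exist, $Z_k$ is stochastically dominated by $\mathrm{Bin}(\ell,1/\ell)$. More carefully, the hired agents of $G_k$ whose coefficient the mechanism did not know before hiring contribute a count dominated by $\mathrm{Bin}(\ell,1/\ell)$. The hired agents whose coefficient it did know are agents already revealed, which the mechanism could have hired at their arrival time anyway.

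Here is the delicate point. In the online setting an agent must be hired upon arrival, and at that moment its own coefficient is revealed by query. So the mechanism can in fact see $\mu(a)$ before hiring. The bound therefore has to count what the mechanism can extract within a group. Once it commits its budget, it can only hire ones as they arrive, and the number of ones arriving in $G_k$ after it starts investing in $G_k$ is at most $Y_k\sim\mathrm{Bin}(\ell,1/\ell)$. The real constraint is the budget. The mechanism must pay for ones across many groups before knowing which group will turn out rich, and it can wait only until the group's agents have arrived.

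I expect this to be the main obstacle, and I would handle it via a \emph{commitment} argument. Use the random order so that within the first half of the sequence only about half of each group has arrived. Argue that the mechanism can profit from a group only by hiring its ones online, and it cannot foresee which of the $L$ groups is rich. Bound the value by $\max$ over at most $\ell$ groups the mechanism hires from of (the ones it hires there). Each such count is dominated by an independent $\mathrm{Bin}(\ell,1/\ell)$, because the decision to enter a group is made before the future coefficients of that group are revealed. The max of $\ell$ such binomials satisfies
\[
\E{}{\max}=O\bigl(\log\ell/\log\log\ell\bigr)\le O(\log\ell).
\]
The ratio is then $\Omega(\ell/\log\ell)=\Omega\bigl(\log n/(\log\log n)^2\bigr)$, using $\log\ell=\Theta(\log\log n)$.

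If the commitment argument proves too lossy, my fallback is to shrink the coefficient probability to $p<1/\ell$ and use $\ell$ groups per hire budget. This keeps $\E{}{\OPT}/\E{}{\ALG}$ at the same order.
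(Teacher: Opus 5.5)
Your proposal is correct and matches the paper's proof: the same instance (groups of size $\ell=\log n/\log\log n$, Bernoulli$(1/\ell)$ coefficients, cost $B/\ell$), the same extreme-value bound $\E{}{\OPT}=\Omega(\ell)$, and the same bound $\ALG\le 1+\max_k R_k$ over the at most $\ell$ groups the mechanism enters, where $R_k$ counts \emph{all} active agents of $G_k$ arriving after the first hire there and is dominated by $\mathrm{Bin}(\ell,1/\ell)$ because those coefficients are still unrevealed at entry. Counting all such arrivals rather than only hired ones is what resolves your ``delicate point,'' and a union bound over the entered groups suffices, so you need neither the unjustified independence claim nor the first-half random-order idea.
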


\paragraph{Lower Bound Instance.}
We first describe a distribution over valuation functions; the actual valuation is drawn from this distribution. Fix a partition of the agent set $N$ into $k$ disjoint groups $S_1,\dots,S_k$, each of size $m:=\tfrac{\log n}{\log\log n}$, where $k:=\lfloor n/m\rfloor$. For each agent $a\in N$, sample an independent Bernoulli random variable $\Active(a)\in\{0,1\}$ with $\Prob{\Active(a)=1}=p:=\tfrac{\log\log n}{\log n}$. Let $\mathcal{D}$ denote the resulting product distribution, and let $I$ be a realization.

For any set $T\subseteq N$ and realization $I$, define the valuation
\[
v_I(T)=\max_{i\in[k]} \mu_i(T),
\qquad
\mu_i(T):=\sum_{a\in S_i\cap T}\Active(a,I).
\]
Each clause $\mu_i$ is additive and supported on a single group; hence $v_I$ is XOS. 

Finally, we set the cost for all agents $a \in N$ to be $c(a)=pB$. Thus, the budget allows the purchase of at most $1/p=m$ agents. 

\paragraph{Lower Bound Analysis.} For each group $i\in[k]$, let $X_i:=\sum_{a\in S_i}\Active(a,I)$ denote the number of active agents in group $i$. Then $X_i\sim\mathrm{Bin}(m,p)$, the variables $(X_i)_{i\in[k]}$ are independent, and $\mathbb{E}[X_i]=mp=1$. Intuitively, each group contains only a constant number of active agents in expectation, but since there are many groups, the group with the most active agents is significantly larger. We continue by formalizing and proving the latter statement in the following lemma.

\begin{lemma}
\label{lem:opt-xos}
For the above instance distribution,
\[
\mathbb{E}_{I\sim\mathcal{D}}[\OPT(I)]
=\Theta\!\left(\frac{\log n}{\log\log n}\right).
\]
\end{lemma}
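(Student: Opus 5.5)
First I would identify $\OPT(I)$ explicitly. Every agent costs $pB$, so a feasible set has at most $m=1/p$ agents. Since $v_I(T)=\max_i \mu_i(T)$, the best feasible set lies inside a single group. Each group has exactly $m$ agents, so the whole group is affordable. Hence $\OPT(I)=\max_{i\in[k]} X_i$, where the $X_i\sim\mathrm{Bin}(m,p)$ are independent. The lemma therefore reduces to showing $\mathbb{E}[\max_i X_i]=\Theta(m)$, with $m=\log n/\log\log n$.

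The upper bound is immediate: $X_i\le m$ deterministically, so $\max_i X_i\le m$.

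For the lower bound I would show that, with constant probability (in fact with probability tending to $1$), some group has at least $cm$ active agents, for a suitable constant $c\in(0,1)$; for instance $c=1/2$ should work. For a fixed group,
\[
\Prob{X_i\ge cm}\;\ge\;\Prob{X_i=\lceil cm\rceil}\;\ge\;\binom{m}{\lceil cm\rceil}p^{\lceil cm\rceil}(1-p)^{m}\;\ge\; p^{cm+1}\,(1-p)^{m}.
\]
Here $(1-p)^m\ge 1/4$ for large $n$ because $mp=1$, and the binomial coefficient is at least $1$. Since $1/p=m$, we get $p^{cm}=\exp(-cm\log m)$. With $m=\log n/\log\log n$ we have $\log m=\log\log n-\log\log\log n\le\log\log n$, so $cm\log m\le c\log n$. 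Hence
\[
\Prob{X_i\ge cm}\;\ge\;\frac{p}{4}\,n^{-c}.
\]

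By independence across the $k=\lfloor n/m\rfloor\ge n/(2m)$ groups,
\[
\Prob{\max_i X_i< cm}\;\le\;\Bigl(1-\tfrac{p}{4}n^{-c}\Bigr)^{k}\;\le\;\exp\!\Bigl(-\tfrac{p\,n^{1-c}}{8m}\Bigr)=\exp\!\Bigl(-\tfrac{n^{1-c}}{8m^2}\Bigr)\;\to\;0,
\]
because $n^{1-c}$ dominates $m^2=\mathrm{polylog}(n)$ whenever $c<1$. It follows that $\mathbb{E}[\OPT(I)]\ge cm\cdot\Prob{\max_i X_i\ge cm}=\Omega(m)$, which proves the lemma.

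The main obstacle is the tail estimate. The $\log m$ loss in $p^{cm}=m^{-cm}$ must be absorbed exactly by the choice $m=\log n/\log\log n$, so that the per-group success probability is $n^{-c}$ up to polylogarithmic factors rather than something like $n^{-c\cdot\omega(1)}$. I would also check that rounding issues are harmless: $m$ and $cm$ need not be integers, so I would use $\lceil cm\rceil$ and absorb the extra factor of $p$ into the polylogarithmic slack.
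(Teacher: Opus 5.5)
Your proof is correct. The identification $\OPT(I)=\max_{i\in[k]}X_i$ and the lower bound follow the same pattern as the paper. You lower bound the tail of a single $\mathrm{Bin}(m,p)$ by one point mass, then use independence across the $k$ groups to get $\mathbb{P}[\max_i X_i<t]\le \exp(-k\,\mathbb{P}[X_1\ge t])\to 0$.

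The two lower bounds differ only in parametrization. The paper takes $t_1=\lfloor \log k/(3\log\log k)\rfloor$, keeps $\binom{m}{t_1}\ge (m/t_1)^{t_1}$ to cancel $p^{t_1}$, and converts back through $\log k=\Theta(\log n)$. You take $t=\lceil cm\rceil$, drop the binomial coefficient, and use $m\log m\le \log n$ to get $p^{cm}\ge n^{-c}$. This works directly in $n$ and for every constant $c<1$. Your estimate $(1-p)^m\ge 1/4$ is also the correct one. The paper's step $(1-p)^m\ge e^{-mp}$ has the inequality reversed, though only a constant is affected.

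The genuine difference is the upper bound. The paper runs a union bound with the Chernoff tail $(e/t)^t$ to show $\mathbb{E}[M]=O(\log k/\log\log k)$. You simply observe that $X_i\le m$ deterministically, since each group has only $m$ agents. Because the group size was chosen equal to the target order $\log n/\log\log n$, your one-line cap suffices and is much shorter. The paper's extreme-value computation buys only independence from the group size: it would still hold if groups were much larger with $mp=1$, which this lemma does not need. Combined, your two bounds even give $\OPT(I)/m\to 1$ in probability.
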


\begin{proofsketch}
We first characterize the offline optimum.
Since $c(a)=pB$, any feasible solution can select at most $m$ agents.
For any set $T\subseteq N$,
\[
v_I(T)=\max_{i\in[k]}\sum_{a\in S_i\cap T}\Active(a,I)
\le \max_{i\in[k]} X_i.
\]
Conversely, for any group $i$, the offline optimum can select all active agents in $S_i$ (since $X_i\le m$) and obtain value $X_i$.
Hence $\OPT(I)=\max_{i\in[k]} X_i$.

Let $M:=\max_{i\in[k]} X_i$.
A standard extreme-value analysis shows that
$\mathbb{E}[M]=\Theta(\tfrac{\log k}{\log\log k})$. Substituting $k=\Theta(n\log\log n/\log n)$ yields the claimed bound. The full calculation is standard and deferred to Appendix~\ref{proof of xos}. 
\end{proofsketch}

We now proceed to upper bound the expected performance of any online mechanism on this instance.

\begin{lemma}
\label{lem:alg-upper}
For the hard distribution $\mathcal{D}$, every budget-feasible online mechanism satisfies
\[
\mathbb{E}_{I\sim\mathcal{D}}[\ALG(I)] = O(\log\log n).
\]
\end{lemma}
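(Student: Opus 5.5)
Since every agent costs $pB$ and payments must cover costs (individual rationality), any budget-feasible mechanism hires at most $1/p=m$ agents on every realization. Let $H$ be the hired set. The mechanism's value is $v_I(H)=\max_i \mu_i(H)$. For each group $i$ touched by $H$, we split the active agents of $H\cap S_i$ into two parts: the \emph{first} hired agent of $S_i$, and the later ones. This gives
\[
v_I(H)\;\le\; 1+\max_{i:\,H\cap S_i\neq\varnothing} Y_i ,
\]
where $Y_i$ counts the active agents of $S_i$ that are hired strictly after the first hire from $S_i$. Since a later hire can only be counted if it is active, $Y_i\le Z_i$, where $Z_i$ is the number of active agents of $S_i$ among those \emph{not yet revealed} at the moment of the first hire from $S_i$. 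So it suffices to bound $\mathbb{E}[\max_i Z_i]$ over the at most $m$ touched groups.

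The key step is a deferred-decision argument. The mechanism only learns $\Active(a)$ through value queries on revealed agents, so at the time it first hires from $S_i$, the activity bits of the unrevealed agents in $S_i$ are still independent $\mathrm{Bernoulli}(p)$ and independent of the history. Hence, conditioned on the history up to that time, $Z_i$ is stochastically dominated by $\mathrm{Bin}(m,p)$. The difficulty is that the touched groups are chosen adaptively, so the $Z_i$ are not a fixed family.

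To handle this, I would couple the process as follows. Pre-sample a sequence $W_1,\dots,W_m$ of i.i.d.\ $\mathrm{Bin}(m,p)$ variables. When the mechanism touches its $r$-th new group, generate the unrevealed bits of that group so that their count is dominated by $W_r$. This is legitimate because those bits are fresh at that moment. Disjointness of groups keeps the successive fresh randomness independent, so the coupling is valid. Then $\max_i Z_i\le \max_{r\le m} W_r$ pointwise, and
\[
\mathbb{E}[\ALG]\le 1+\mathbb{E}\Big[\max_{r\le m} W_r\Big].
\]
Making this coupling rigorous, by a filtration argument over the first-touch stopping times, is the step I expect to be the main obstacle.

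What remains is the extreme-value bound. Since $\Prob{W_r\ge s}\le \binom{m}{s}p^s\le (emp/s)^s=(e/s)^s$ and $mp=1$, a union bound gives
\[
\Prob{\max_r W_r\ge s}\le m(e/s)^s .
\]
This is tiny once $s\ge C\log m/\log\log m$. Integrating the tail gives $\mathbb{E}[\max_r W_r]=O(\log m)$; the sharper bound $O(\log m/\log\log m)$ also holds but is not needed. With $m=\log n/\log\log n$ we get $\log m=O(\log\log n)$, which proves the claimed $O(\log\log n)$ bound. The same bound also covers randomized mechanisms, by averaging over their internal coins.
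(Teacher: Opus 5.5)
Your proposal is correct and follows essentially the same route as the paper. You bound the value by $1+\max_i R_i$ over the at most $m$ touched groups, use deferred decisions to dominate each remaining-active count by $\mathrm{Bin}(m,p)$, and finish with a union bound and the $(e/q)^q$ tail. The differences are minor: your coupling makes rigorous the union bound over adaptively chosen groups (the paper applies it directly; indexing groups by order of first touch also suffices), and you settle for $O(\log m)$ where the paper gets $O(\log m/\log\log m)$. Either suffices since $\log m = O(\log\log n)$.
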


\begin{proof}
Recall that each agent has cost $c(a)=pB$ with $p=\tfrac{\log\log n}{\log n}$, hence any budget-feasible mechanism can hire at most
$m:=1/p=\tfrac{\log n}{\log\log n}$ agents in total.

Fix a run of the mechanism (arrival order and internal randomness).
For each group $i\in[k]$, define $\tau_i$ to be the first time (arrival index) at which the mechanism hires an agent from group $S_i$; if the mechanism never hires from $S_i$, set $\tau_i=\infty$.
Let $G:=\{i\in[k]: \tau_i<\infty\}$ be the set of groups from which the mechanism hires at least one agent.
Since the mechanism hires at most $m$ agents overall, we have $|G|\le m$.

For each $i\in G$, define the random variable
\[
R_i := \sum_{a\in S_i : \text{$a$ arrives after time }\tau_i} \Active(a,I),
\]
i.e., the number of \emph{remaining active agents} in group $i$ after the mechanism first commits to (hires from) that group.

Let $T$ be the set of hired agents. Since the valuation is $v_I(T)=\max_{i\in[k]} \sum_{a\in S_i\cap T}\Active(a,I)$, the mechanism's value is the number of active hired agents in the best group it hires from.
Fix any group $i\in G$ and consider the mechanism's contribution from that group.
At time $\tau_i$ it hires exactly one agent from $S_i$ for the first time; this single agent contributes at most $1$ to the realized value.
Any additional active hired agents from $S_i$ must arrive \emph{after} time $\tau_i$, and their total number is at most $R_i$.
Therefore, for every realization we have the pointwise bound
\(
\ALG(I) \le 1 + \max_{i\in G} R_i.
\)

Conditional on $\tau_i$ and on the entire history up to time $\tau_i$, the activation bits of agents in $S_i$ that arrive after time $\tau_i$ remain independent $\mathrm{Bernoulli}(p)$, because activations are sampled independently across agents at the start and are not affected by the mechanism's actions.
Moreover, the number of agents of $S_i$ arriving after time $\tau_i$ is at most $m-1$.
Hence, for each $i\in G$, the random variable $R_i$ is stochastically dominated by
$\mathrm{Bin}(m,p)$ (or more tightly by $\mathrm{Bin}(m-1,p)$), regardless of how $\tau_i$ is chosen.

Let $L_m:=2\left\lceil \frac{\log m}{\log\log m}\right\rceil$. Since $|G| \leq m$, it follows that

$$\E{}{\alg(I)} \leq 1 + \sum_{q = 1}^{m} \Prob{\max_{i \in G} R_i \geq  q} \leq 1 + L_m + \sum_{q = L_m}^{m} \Prob{\max_{i \in G} R_i \geq q} $$

By union bound we get:
\begin{equation}
    \E{}{\alg} \leq 1 + L_m + \sum_{q = L_m}^{m} m \cdot \Prob{ R_i \geq q}
\end{equation}
Applying Chernoff's bound gives:
\begin{equation}
        \E{}{\alg} \leq 1 + L_m + \sum_{q = L_m}^{m} m \cdot \lp( \frac{e}{q} \rp)^{q}
\end{equation}
Finally, we need to notice that \begin{equation}
    \lim_{m\to +\infty }\sum_{q = L_m}^{m} m \cdot \lp( \frac{e}{q} \rp)^{q} =0
\end{equation}
Indeed, for $q\ge L_m$, the terms $\left(\frac{e}{q}\right)^q$ decrease geometrically for large $m$, so
\[
\sum_{q=L_m}^{m} m\left(\frac{e}{q}\right)^q
\le
2m\left(\frac{e}{L_m}\right)^{L_m}.
\]
Moreover,
\[
\log\left(m\left(\frac{e}{L_m}\right)^{L_m}\right)
=
\log m-L_m(\log L_m-1)
\le -\Omega(\log m),
\]
because $L_m=\Theta(\log m/\log\log m)$ and $\log L_m=\Theta(\log\log m)$. Hence
\[
\sum_{q=L_m}^{m} m\left(\frac{e}{q}\right)^q=o(1).
\]
Which gives the desired result:
\begin{align*}
        \E{}{\alg} \leq 1 + 2\cdot \left\lceil \frac{\log m}{\log\log m}\right\rceil + o(1) 
\end{align*}

\end{proof}

Combining Lemma~\ref{lem:opt-xos} together with Lemma~\ref{lem:alg-upper} completes the proof of Theorem~\ref{thm:xos-lower-bound}.

\begin{remark}
An alternative model for online access to the valuation function can be derived by \cite[Definition~2.1]{rubi-definition}. Our lower bound extends to that model by replacing each agent $i$ of the original instance with a pair $(U_i,D_i)$. We select \(U_i=\{a_i,b_i\}\), such that only \(a_i\) contributes to the corresponding additive clause, and select \(D_i=a_i\) with probability \(\log\log n/\log n\) (and \(b_i\) otherwise). 
\end{remark}

\section{Conclusion}

We studied online budget-feasible procurement auctions under random-order arrivals, focusing on the power of sequential posted-price mechanisms. Our main technical contribution is the \emph{Repeated Descent} (\RED) subroutine, a deterministic pricing primitive that enforces budget feasibility while dynamically steering the mechanism toward regimes where linear pricing is effective, without ever estimating the optimal pricing scale.

Using \RED, we significantly improve the state of the art for posted-price mechanisms with monotone submodular valuations, reducing the competitive ratio by several orders of magnitude compared to previous work. We further show that constant-competitive guarantees extend beyond monotone objectives: by combining \RED with random subsampling techniques, we obtain the first constant-competitive posted-price mechanism for online procurement with non-monotone submodular valuations. Our positive results are complemented by a logarithmic lower bound for XOS valuations, which applies to all online budget-feasible mechanisms. This impossibility result delineates a sharp boundary between valuation classes that admit constant-competitive online mechanisms and those for which such guarantees are unattainable, even under random-order arrivals.

Finally, an interesting question raised by our work concerns the gap between submodular and XOS valuations in the online setting. While our lower bound rules out constant-competitive guarantees for XOS valuations, it remains open whether one can design an online mechanism with a polylogarithmic competitive ratio, even in the more powerful direct-revelation model. 



\bibliographystyle{ACM-Reference-Format}
\bibliography{sample-bibliography}

@String{Computing = "Computing" }

@String{Computer = "{IEEE} Computer" }

@String{Springer = "Springer-Verlag" }

@article{Sviri04,
  author       = {Maxim Sviridenko},
  title        = {A note on maximizing a submodular set function subject to a knapsack constraint},
  journal      = {Operations Research Letters},
  volume       = {32},
  number       = {1},
  pages        = {41--43},
  year         = {2004},
}

@article{NemhauserWF78,
  author       = {George L. Nemhauser and
                  Laurence A. Wolsey and
                  Marshall L. Fisher},
  title        = {An analysis of approximations for maximizing submodular set functions
                  - {I}},
  journal      = {Mathematical Programing},
  volume       = {14},
  number       = {1},
  pages        = {265--294},
  year         = {1978},
 }

@article{BadaDO19,
  author       = {Ashwinkumar Badanidiyuru and
                  Shahar Dobzinski and
                  Sigal Oren},
  title        = {Optimization with Demand Oracles},
  journal      = {Algorithmica},
  volume       = {81},
  number       = {6},
  pages        = {2244--2269},
  year         = {2019},
 }

@inproceedings{Bada2012,
  author       = {Ashwinkumar Badanidiyuru and                   Robert Kleinberg and Yaron Singer},
  title        = {Learning on a budget: posted price mechanisms for online procurement},
  booktitle    = {Proc.  of the 13th {ACM} Conference on Electronic Commerce ({EC}~2012)},
  pages        = {128--145},
  publisher    = {{ACM}},
  year         = {2012},
}

@article{KhullerMN99,
  author       = {Samir Khuller and Anna Moss and Joseph Naor},
  title        = {The Budgeted Maximum Coverage Problem},
  journal      = {Information Processing Letters},
  volume       = {70},
  number       = {1},
  pages        = {39--45},
  year         = {1999},
}

@inproceedings{Singer10,
  author       = {Yaron Singer},
  title        = {Budget Feasible Mechanisms},
  booktitle    = {Proc. of the 51th {IEEE} Symposium on Foundations of Computer Science ({FOCS}~2010)}, 
  pages        = {765--774},
  publisher    = {{IEEE} Computer Society},
  year         = {2010},
}

@article{Singer13,
  author       = {Yaron Singer},
  title        = {Budget feasible mechanism design},
  journal      = {SIGecom Exchanges},
  volume       = {12},
  number       = {2},
  pages        = {24--31},
  year         = {2013},
}

@inproceedings{ChenGL2011,
  author       = {Ning Chen and
                  Nick Gravin and
                  Pinyan Lu},
  title        = {On the Approximability of Budget Feasible Mechanisms},
  booktitle    = {Proc. of the 22nd {ACM-SIAM} Symposium on Discrete Algorithms ({SODA}~2011)},
  pages        = {685--699},
  publisher    = {{SIAM}},
  year         = {2011},
}

@inproceedings{BeiCGL2012,
  author       = {Xiaohui Bei and
                  Ning Chen and
                  Nick Gravin and
                  Pinyan Lu},
  title        = {Budget feasible mechanism design: from prior-free to bayesian},
  booktitle    = {Proc. of the 44th Symposium on Theory of Computing Conference ({STOC}~2012)},
  pages        = {449--458},
  publisher    = {{ACM}},
  year         = {2012},
}

@article{GravinJLZ2020,
  author       = {Nick Gravin and
                  Yaonan Jin and
                  Pinyan Lu and
                  Chenhao Zhang},
  title        = {Optimal Budget-Feasible Mechanisms for Additive Valuations},
  journal      = {{ACM} Transactions on Economics and Computation},
  volume       = {8},
  number       = {4},
  pages        = {21:1--21:15},
  year         = {2020},
}

@inproceedings{AnariGN2014,
  author       = {Nima Anari and
                  Gagan Goel and
                  Afshin Nikzad},
  title        = {Mechanism Design for Crowdsourcing: An Optimal $1-1/e$ Competitive Budget-Feasible
                  Mechanism for Large Markets},
  booktitle    = {Proc. of the 55th {IEEE} Symposium on Foundations of Computer Science ({FOCS}~2014)},
  pages        = {266--275},
  publisher    = {{IEEE} Computer Society},
  year         = {2014},
}

@inproceedings{LCLW2024_survey,
  author       = {Xiang Liu and
                  Hau Chan and
                  Minming Li and
                  Weiwei Wu},
  title        = {Budget Feasible Mechanisms: {A} Survey},
  booktitle    = {Proc. of the 33rd International Joint Conference on Artificial Intelligence ({IJCAI}~2024)},
  pages        = {8132--8141},
  publisher    = {ijcai.org},
  year         = {2024},
}

@article{JalalyT2021,
  author       = {Pooya Jalaly and
                  {\'{E}}va Tardos},
  title        = {Simple and Efficient Budget Feasible Mechanisms for Monotone Submodular Valuations},
  journal      = {{ACM} Transactions on Economics and Computation},
  volume       = {9},
  number       = {1},
  pages        = {4:1--4:20},
  year         = {2021},
 }

@inproceedings{BalkanskiGGST2022,
  author       = {Eric Balkanski and
                  Pranav Garimidi and
                  Vasilis Gkatzelis and
                  Daniel Schoepflin and
                  Xizhi Tan},
  title        = {Deterministic Budget-Feasible Clock Auctions},
  booktitle    = {Proce. of the 2022 {ACM-SIAM} Symposium on Discrete Algorithms ({SODA}~2022)},
  pages        = {2940--2963},
  publisher    = {{SIAM}},
  year         = {2022},
}

@inproceedings{HanWHC23,
  author       = {Kai Han and
                  You Wu and
                  He Huang and
                  Shuang Cui},
  title        = {Triple Eagle: Simple, Fast and Practical Budget-Feasible Mechanisms},
  booktitle    = {Proc. of the 37th Conference on Neural Information Processing Systems ({NeurIPS}~2023)},
  year         = {2023},
  numpages     = {18},
}

@article{AmanatidisKS22,
  author       = {Georgios Amanatidis and
                  Pieter Kleer and
                  Guido Sch{\"{a}}fer},
  title        = {Budget-Feasible Mechanism Design for Non-monotone Submodular Objectives:
                  Offline and Online},
  journal      = {Mathematics of Operations Research},
  volume       = {47},
  number       = {3},
  pages        = {2286--2309},
  year         = {2022},
}

@inproceedings{AmanatidisBM17,
  author       = {Georgios Amanatidis and
                  Georgios Birmpas and
                  Evangelos Markakis},
  title        = {On Budget-Feasible Mechanism Design for Symmetric Submodular Objectives},
  booktitle    = {Proc. of the 13th Conference on Web and Internet Economics ({WINE}~2017)},
  series       = {Lecture Notes in Computer Science},
  volume       = {10660},
  pages        = {1--15},
  publisher    = {Springer},
  year         = {2017},
}

@inproceedings{DobzinskiPS11,
  author       = {Shahar Dobzinski and
                  Christos H. Papadimitriou and
                  Yaron Singer},
  title        = {Mechanisms for complement-free procurement},
  booktitle    = {Proc. of the 12th {ACM} Conference on Electronic Commerce ({EC}~2011)},
  pages        = {273--282},
  publisher    = {{ACM}},
  year         = {2011},
}

@inproceedings{HuangHC023,
  author       = {He Huang and
                  Kai Han and
                  Shuang Cui and
                  Jing Tang},
  title        = {Randomized Pricing with Deferred Acceptance for Revenue Maximization
                  with Submodular Objectives},
  booktitle    = {Proc. of the 32nd {ACM} Web Conference 2023, ({WWW}~2023)},
  pages        = {3530--3540},
  publisher    = {{ACM}},
  year         = {2023},
}

@inproceedings{swamy-subadditive,
  title={An O (log log n)-approximate budget feasible mechanism for subadditive valuations},
  author={Neogi, Rian and Pashkovich, Kanstantsin and Swamy, Chaitanya},
  booktitle={Proceedings of the 26th ACM Conference on Economics and Computation},
  pages={599--599},
  year={2025}
}

@inproceedings{FeldmanNS11,
  author       = {Moran Feldman and
                  Joseph Naor and
                  Roy Schwartz},
  title        = {Improved Competitive Ratios for Submodular Secretary Problems (Extended
                  Abstract)},
  booktitle    = {Proc. of the 14th Conference on Approximation Algorithms for Combinatorial Optimization Problems ({APPROX}~2011) and of the 15th Conference on on Randomization and Computation ({RANDOM}~2011)}, 
  series       = {Lecture Notes in Computer Science},
  volume       = {6845},
  pages        = {218--229},
  publisher    = {Springer},
  year         = {2011},
 }

@article{BateniHZ13,
  author       = {MohammadHossein Bateni and
                  Mohammad Taghi Hajiaghayi and
                  Morteza Zadimoghaddam},
  title        = {Submodular secretary problem and extensions},
  journal      = {{ACM} Transactions on Algorithms},
  volume       = {9},
  number       = {4},
  pages        = {32:1--32:23},
  year         = {2013},
}

@inproceedings{KesselheimT17,
  author       = {Thomas Kesselheim and
                  Andreas T{\"{o}}nnis},
  title        = {Submodular Secretary Problems: Cardinality, Matching, and Linear Constraints},
  booktitle    = {Proc. of the 20th Conference on Approximation Algorithms for Combinatorial Optimization Problems ({APPROX}~2017) and of the 21st Conference on on Randomization and Computation ({RANDOM}~2017)}, 
  series       = {LIPIcs},
  volume       = {81},
  pages        = {16:1--16:22},
  publisher    = {Schloss Dagstuhl - Leibniz-Zentrum f{\"{u}}r Informatik},
  year         = {2017},
 }

@inproceedings{SingerM13,
  author       = {Yaron Singer and
                  Manas Mittal},
  title        = {Pricing mechanisms for crowdsourcing markets},
  booktitle    = {Proc. of the 22nd International World Wide Web Conference ({WWW}~2013)}, 
  pages        = {1157--1166},
  publisher    = {{ACM}},
  year         = {2013},
}

@article{chen2017bernstein,
  title={A Bernstein type inequality for NOD random variables and applications},
  author={Chen, PY and Sung, Soo Hak},
  journal={J. Math. Inequal},
  volume={11},
  pages={455--467},
  year={2017}
}

@misc{negass,
    author = {David Wajc},
    title = {Negative association: definition, properties, and applications} ,
    year = {2017},
    link = {https://www.cs.cmu.edu/~dwajc/notes/Negative%20Association.pdf}
}

@inbook{CFPT2025,
author = {Charalampopoulos, Andreas and Fotakis, Dimitris and Patsilinakos, Panagiotis and Tolias, Thanos},
title = {A Competitive Posted-Price Mechanism for Online Budget-Feasible Auctions},
year = {2025},
isbn = {9798400719431},
publisher = {Association for Computing Machinery},
address = {New York, NY, USA},
url = {https://doi.org/10.1145/3736252.3742668},
booktitle = {Proceedings of the 26th ACM Conference on Economics and Computation},
pages = {1046–1075},
numpages = {30}
}

@inproceedings{babaioff,
author = {Babaioff, Moshe and Immorlica, Nicole and Kleinberg, Robert},
title = {Matroids, secretary problems, and online mechanisms},
year = {2007},
isbn = {9780898716245},
publisher = {Society for Industrial and Applied Mathematics},
address = {USA},
booktitle = {Proceedings of the Eighteenth Annual ACM-SIAM Symposium on Discrete Algorithms},
pages = {434–443},
numpages = {10},
location = {New Orleans, Louisiana},
series = {SODA '07}
}

@article{Feige2011,
author = {Feige, Uriel and Mirrokni, Vahab S. and Vondr\'{a}k, Jan},
title = {Maximizing Non-monotone Submodular Functions},
journal = {SIAM Journal on Computing},
volume = {40},
number = {4},
pages = {1133-1153},
year = {2011},
doi = {10.1137/090779346},

URL = { 
    
        https://doi.org/10.1137/090779346
    
    

},
eprint = { 
    
        https://doi.org/10.1137/090779346
    
    

}
}

@inproceedings{rubi-definition,
author = {Rubinstein, Aviad and Singla, Sahil},
title = {Combinatorial prophet inequalities},
year = {2017},
publisher = {Society for Industrial and Applied Mathematics},
address = {USA},
pages = {1671–1687},
numpages = {17},
location = {Barcelona, Spain},
series = {SODA '17}
}

\newpage
\appendix

\section{Application of Bernstein's inequality: Proof of Lemma~\ref{lemma:Bern-appl}}\label{bern}

Fix a round $j$ and recall the random variables $(X_i^j)_{i\in[\ell]}$ defined in \ref{sub:rounds-conc}: for each position $i$ of round $j$,
\[
X_i^j =
\begin{cases}
w(a) & \text{if agent } a \in C^*\setminus\{a_{\max}\} \text{ occupies position } i,\\
0 & \text{otherwise}.
\end{cases}
\]
Let $ X^j := \sum_{i=1}^{\ell} X_i^j$ and define $g(C^*) := \sum_{a\in C^*\setminus\{a_{\max}\}} w(a).$\\

Conditioned on event $\mathcal{E}$ (that $a_{\max}$ lies in the first half), the second half of the sequence is a uniformly random permutation of a uniformly random subset of size $n/2$ out of the $n-1$ agents in $N\setminus\{a_{\max}\}$. Hence, for every position $i$ in the second half,
\[
\mu_i := \mathbb{E}[X_i^j \mid \mathcal{E}] = \frac{g(C^*)}{n-1}.
\]
Moreover, since $0\le X_i^j \le v_{\max}$, we have
\[
X_i^j - \mu_i \le v_{\max} - \mu_i
\quad\Rightarrow\quad
b := v_{\max}-\frac{g(C^*)}{n-1}.
\]
The expectation of $X^j$ is
\[
\mu := \mathbb{E}[X^j \mid \mathcal{E}]
= \sum_{i=1}^{\ell} \mu_i
= \frac{\ell}{n-1}\,g(C^*).
\]
Using $0\le X_i^j \le v_{\max}$, we have $(X_i^j)^2 \le v_{\max}X_i^j$, and thus
\[
\mathbb{E}[(X_i^j)^2 \mid \mathcal{E}] \le v_{\max} \cdot \mu_i.
\]
Therefore,
\[
\mathrm{Var}(X_i^j \mid \mathcal{E})
= \mathbb{E}[(X_i^j-\mu_i)^2\mid \mathcal{E}]
\le v_{\max}\mu_i - \mu_i^2
= \frac{g(C^*)}{n-1}\left(v_{\max}-\frac{g(C^*)}{n-1}\right).
\]
Summing over $i\in[\ell]$ yields
\[
V := \sum_{i=1}^{\ell} \mathrm{Var}(X_i^j \mid \mathcal{E})
\le \frac{\ell}{n-1} g(C^*)\left(v_{\max}-\frac{g(C^*)}{n-1}\right).
\]

\begin{lemma}\label{lemma:Bern}
For every round $j$,
\[
\Prob{X^j \le 160\,v_{\max}\,\middle|\,\mathcal{E}} \le 0.1.
\]
\end{lemma}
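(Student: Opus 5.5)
We plug the quantities computed above into Bernstein's inequality (Theorem~\ref{thm:bern-ineq}), which applies conditionally on $\mathcal{E}$ because the $X_i^j$ remain negatively associated under the conditioned permutation distribution. Write $g := g(C^*)$.

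\textbf{Step 1: lower bound on $\mu$.} Lemma~\ref{lemma:Svalue} gives $v(C^*) \ge (\OPT - \vmax)/2$. Removing $a_{\max}$ costs at most $\vmax$, since $0 \le w(a_{\max}) \le \vmax$. Hence $g \ge (\OPT - 3\vmax)/2$. With $\ell = 50\,\vmax n/t'$ and $t' < \OPT/8$ we get $\ell > 400\,\vmax n/\OPT$. Therefore
\[
\mu = \frac{\ell}{n-1}\, g \;\ge\; 400\,\vmax\cdot\frac{\OPT - 3\vmax}{2\,\OPT} \;\ge\; 199\,\vmax,
\]
using the large-market assumption $\OPT \ge 8000\,\vmax$. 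The resulting constant is only slightly below $200\,\vmax$.

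\textbf{Step 2: variance and increment bounds.} From the computation above, $V \le \mu\,\vmax$ and $b \le \vmax$.

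\textbf{Step 3: apply Bernstein.} Set $t := \mu - 160\,\vmax \ge 39\,\vmax$. Then
\[
\Prob{X^j \le 160\,\vmax \mid \mathcal{E}} \le \exp\!\left(-\frac{t^2}{2\mu\vmax + \tfrac{2}{3}\vmax t}\right).
\]
Since $\mu = t + 160\,\vmax$, the exponent equals $t^2 / \bigl(\vmax(\tfrac{8}{3}t + 320\,\vmax)\bigr)$. This is increasing in $t$, so we may evaluate it at the smallest allowed value of $t$. At $t = 39\,\vmax$ it is $1521/(104+320) \approx 3.59$, and $e^{-3.59} \approx 0.028 < 0.1$. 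This proves the claim.

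\textbf{Main obstacle.} The delicate point is Step~1, that is, getting $\mu$ comfortably above $160\,\vmax$. One must handle the exclusion of $a_{\max}$, the $(\OPT - \vmax)/2$ loss from Lemma~\ref{lemma:Svalue}, and the factor-of-two slack from choosing $t'$ within $[\OPT/16, \OPT/8)$. The upper end $t' < \OPT/8$ is the one that matters, since it lower-bounds $\ell$. We also need that rounds contain integer numbers of agents and that $n/(n-1) \ge 1$, both of which only help. If the constants came out tighter than expected, we would fall back on the cruder bound $V \le \ell\,\vmax \cdot g/(n-1)$, which is still of order $\mu\vmax$. Moving $t$ closer to $\mu$ changes only the numeric exponent, which has ample margin.
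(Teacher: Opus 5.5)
Your proof is correct and follows essentially the paper's route: Bernstein's inequality under negative association with $V \le \mu\,v_{\max}$ and $b \le v_{\max}$, plus $\mu \gtrsim 200\,v_{\max}$ obtained from $\ell > 400\,v_{\max} n/\OPT$ and $g(C^*) \approx \OPT/2$. The only difference is the choice of deviation. You take it to be $\mu - 160\,v_{\max}$ and use monotonicity of the exponent, whereas the paper takes it to be $\mu/6$ and checks $5\mu/6 \ge 160\,v_{\max}$ separately; the paper's route gives a weaker but still sufficient constant, while yours gives the sharper bound $e^{-3.59}$.
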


\begin{proof}
We apply Bernstein's inequality under negative association. Set $t := \mu/6$, so that $\mu-t = 5\mu/6$. It suffices to show $5\mu/6 \ge 160\,v_{\max}$ and then bound
$\Prob{X^j \le \mu-t \mid \mathcal{E}}$.

First,
\[
\frac{5\mu}{6}
= \frac{5}{6}\cdot \frac{\ell}{n-1}\,g(C^*)
= \frac{5}{6}\cdot \frac{50n}{n-1}\cdot \frac{v_{\max}}{t'}\,g(C^*)
\ge 160\,v_{\max},
\]
where the last inequality uses the large-market consequences $\ell = 50\frac{v_{\max}}{t'}n \ge 400\frac{v_{\max}}{\OPT}n$ and $g(C^*)\ge \frac{44}{90}\OPT$. Bernstein's inequality gives
\[
\Prob{X^j \le \mu - t \,\middle|\, \mathcal{E}}
\le \exp\!\left(
-\frac{t^2}{2V+\frac{2bt}{3}}
\right)
=
\exp\!\left(
-\frac{\mu^2/36}{2V+\frac{b\mu}{9}}
\right).
\]
Substituting the bounds on $V$ and $b$ and simplifying yields
\[
\Prob{X^j \le \mu - t \,\middle|\, \mathcal{E}}
\le
\exp\!\left(
-\frac{\frac{\ell}{n-1}g(C^*)}
{76\left(v_{\max}-\frac{g(C^*)}{n-1}\right)}
\right)
\le
\exp\!\left(
-\frac{\ell\,g(C^*)}{76\,n\,v_{\max}}
\right),
\]
where we used $v_{\max}-\frac{g(C^*)}{n-1}\le v_{\max}$ and $n-1\ge n/2$ for large $n$.\\

Finally, using $\ell \ge 400\frac{v_{\max}}{\OPT}n$ and $g(C^*)\ge \frac{44}{90}\OPT$ gives an exponent at most $-2.31$, hence
\[
\Prob{X^j \le 160\,v_{\max}\,\middle|\,\mathcal{E}}
\le e^{-2.31} \le 0.1.
\]
\end{proof}

\section{Proof of Theorem~\ref{thm:mediummarket}}\label{sec:mediummarket}

\textsc{MediumMarket} uses an initial part of the input sequence to obtain a rough estimate of $\OPT$ (it satisfies $\vmax \leq \OPT \leq n\cdot \vmax$). Then, it randomly selects a multiplier from a carefully chosen set of multipliers and uses it to post linear prices.

\begin{algorithm}[ht]
\caption{\textsc{MediumMarket}}
\KwInput{Set of agents $N$, budget $B$}

Sample $\tau \sim \mathrm{Bin}(|N|,1/2)$, $T \gets \varnothing$\;
Offer price $0$ to the first $\tau$ arriving agents and observe $\widehat\vmax$\;
Choose a multiplier $m$ uniformly at random from
$\mathcal T=\{5.89,\,18.76,\,59.00,\,184.85,\,578.62,\,1809.01\}$\;
Set threshold $\hat t \gets m\cdot \widehat \vmax$\;

\ForEach{remaining agent $a$ in arrival order}{
    \If{$\frac{v(a\mid T)\cdot B}{\hat t} \le B\; $}{
    Offer price $p \gets \frac{v(a\mid T)\cdot B}{\hat t}$ to agent $a$\;
    \If{$p\ge c(a)$}{
        $T \gets T \cup \{a\}$, $B \gets B - p$\;
    }}
    $N \gets N \setminus \{a\}$
}
\KwRet{$T$}
\end{algorithm}

\begin{theorem}\label{thm:mediummarket}
If $29 \cdot v_{\max} \le \OPT \le 8000 \cdot v_{\max}$,
then \textsc{MediumMarket} is $326$-competitive.
\end{theorem}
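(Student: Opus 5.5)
I will analyze \textsc{MediumMarket} through three events: the sampling phase captures the maximum-value agent, the random multiplier lands at the right scale, and linear pricing at that scale then recovers a constant fraction of $\OPT$ from the second phase. Since $\tau\sim\mathrm{Bin}(|N|,1/2)$, each agent lands in the sample independently with probability $1/2$. So with probability $1/2$ the agent $a_{\max}$ is sampled, and then $\widehat\vmax=\vmax$.

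Conditioned on that event, write $\rho:=\OPT/\vmax\in[29,8000]$. The multipliers in $\mathcal T$ form a geometric grid with ratio about $\pi$, spanning $5.89$ to $1809.01$. For every $\rho$ in the range, some $m\in\mathcal T$ has $\hat t=m\vmax$ inside a window $[\alpha\OPT,\beta\OPT]$, where $\beta/\alpha\approx\pi$ and $\alpha\approx 0.2$. Since $m$ is uniform over six values, this happens with probability at least $1/6$. This determines the constant: $2\cdot 6\cdot(\text{per-scale loss})\approx 326$, so the per-scale guarantee I need is roughly $\OPT/27$.

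For the per-scale guarantee I use the standard linear-pricing dichotomy of \citet{Bada2012}, applied to the unsampled agents. Let $C$ be an optimal set restricted to the second phase. Each $a\in C$ is present independently with probability $1/2$, so by submodularity and a Feige-type bound its expected value is at least $\OPT/2$ minus $a_{\max}$'s contribution, which is at most $\vmax\le\OPT/29$. Let $T$ be the final set. There are two cases.
\begin{itemize}
\item \emph{The budget nearly runs out}, i.e.\ some offer is skipped or $\Brem$ falls low. Since every payment equals $\frac{B}{\hat t}v(a\mid T)$, spending $B-\frac{B}{\hat t}\vmax$ certifies $v(T)\ge\hat t-\vmax$.
\item \emph{Otherwise}, every $a\in C\setminus T$ rejected its offer, so $v(a\mid T)<\hat t\,c(a)/B$. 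By monotonicity and submodularity,
\[
v(C)-v(T)\le\sum_{a\in C\setminus T}v(a\mid T)\le\hat t\,c(C)/B\le\hat t .
\]
This gives $v(T)\ge v(C)-\hat t$.
\end{itemize}
Hence $v(T)\ge\min\{\hat t-\vmax,\;v(C)-\hat t\}$. With $\hat t$ in the window and $\vmax\le\OPT/29$, this is a constant fraction of $\OPT$.

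Removing the conditioning gives $\E{}{\ALG}\ge\frac12\cdot\frac16\cdot c\,\OPT$ for that constant $c$. The main obstacle is the constant bookkeeping. The grid in $\mathcal T$ must be checked so that every $\rho\in[29,8000]$ sees a multiplier where $\min\{m-1,\;\rho/2-\rho/29-m\}\cdot\vmax$, taken in expectation over sampling, is at least $\OPT/27$. That means optimizing the window endpoints against $\rho$, using the exact expected value of the sampled optimum rather than the crude halving. If the naive halving bound is too weak, I would replace it by the bound $\E{}{v(C\setminus\text{sample})}\ge\OPT/2$, which holds for monotone submodular $v$ because each element of $\OPT$ survives with probability $1/2$ and submodular functions are concave along random subsets.
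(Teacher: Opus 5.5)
Your pointwise dichotomy is exactly the paper's Lemma~\ref{lem:mm-fixed-m}: $v(T)\ge\max\{0,\min\{\hat t-\vmax,\;v(C)-\hat t\}\}$. The gap is in how you pass to expectations. Your only control on $v(C)$ is a bound on its mean, $\mathbb{E}[v(C)]\ge(\OPT-\vmax)/2$. You then plan to evaluate $\min\{m-1,\;\rho/2-\rho/29-m\}$ ``in expectation over sampling'', i.e.\ to substitute the mean of $v(C)$ inside the minimum. Since $y\mapsto\min\{\hat t-\vmax,\,y-\hat t\}$ is concave, Jensen gives $\mathbb{E}\bigl[\min\{\hat t-\vmax,\,v(C)-\hat t\}\bigr]\le\min\{\hat t-\vmax,\,\mathbb{E}[v(C)]-\hat t\}$, which is the wrong direction. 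Nothing in your argument rules out $v(C)<\hat t$ with substantial probability, and switching to the sharper mean bound $\OPT/2$ does not change this. So your claim that the bound ``is a constant fraction of $\OPT$'' on the good events is unsupported, and so is the conclusion $\mathbb{E}[\ALG]\ge\frac12\cdot\frac16\cdot c\,\OPT$.

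The paper fills this gap with concentration. Since each marginal satisfies $w(a_i)\le\vmax\le\OPT/29$, Bernstein's inequality gives $v(S_2^*)\ge\OPT/4$ with probability at least $0.93$. A union bound with the sampling of $a_{\max}$ yields an event of probability at least $0.36$. The multiplier net is laid over the \emph{realized} ratio $v(S_2^*)/\vmax\in[29/4,8000]$, so the guarantee $\gamma\,v(S_2^*)\ge\gamma\,\OPT/4$ holds pointwise on that event.

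You could instead repair your own route by linearizing. Because $0\le v(C)\le\OPT$ and a concave function dominates its chord, you get pointwise $v(T)\ge\frac{\min\{\hat t-\vmax,\,\OPT-\hat t\}}{\OPT-\hat t}\,\bigl(v(C)-\hat t\bigr)$. This is linear in $v(C)$, so it may be averaged. With the paper's grid it gives about $0.045\,\OPT$ per scale in the worst case (near $\OPT\approx 44\,\vmax$), which exceeds the $\OPT/27$ you need.

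Your window is also miscalibrated. With $\alpha\approx0.2$ and $\beta\approx\pi\alpha\approx0.63$, thresholds near the top exceed $v(C)$. For additive $v$, $v(C)$ concentrates near $\OPT/2$, so the rejection term is vacuous there. The window must sit lower, roughly $[0.13,0.43]\cdot\OPT$. Finally, your target $\OPT/27$ was back-solved from $326$ rather than derived.
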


The proof follows the framework of ~\cite{Bada2012}, adapted to our randomized-threshold implementation and constant-size multiplier net. We include it for completeness.

Let $S^*=\{a_1,a_2,\ldots,a_k\}$ be an optimal solution, ordered so that agents
appear in non-increasing order of marginal contribution. Define
\[
w(a_i)
:= v(\{a_1,\ldots,a_i\}) - v(\{a_1,\ldots,a_{i-1}\}),
\]
so that $\sum_{i=1}^k w(a_i)=\OPT$ and $0\le w(a_i)\le v_{\max}$ for all $i$.\\

The mechanism splits the agents into an exploration phase and an exploitation phase, with each agent independently assigned to each phase with probability $1/2$.

Let $\mathcal E_{\max}$ be the event that at least one agent attaining value $v_{\max}$
appears in the exploration phase. Since the split is uniform,
$\Prob{\mathcal E_{\max}}=1/2$, and on this event the mechanism learns the correct scale $v_{\max}$.

For each $a_i\in S^*$ define the random variable
\[
X_i :=
\begin{cases}
w(a_i) & \text{if } a_i \text{ belongs to the exploitation phase},\\
0 & \text{otherwise}.
\end{cases}
\]
The variables $(X_i)_i$ are independent. By submodularity, the value of the set of agents that belong in the optimal solution and appear in the exploitation phase, denoted $S_2^*$, satisfies
\[
v(S_2^*) \;\ge\; \sum_{i=1}^k X_i.
\]

\begin{lemma}\label{lem:mm-split}
Assume $\OPT \ge 29\,v_{\max}$. Then
\[
\Prob{\sum_{i=1}^k X_i \le \frac{\OPT}{4}} \;\le\; 0.07.
\]
\end{lemma}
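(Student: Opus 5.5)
The plan is to apply Bernstein's inequality (Theorem~\ref{thm:bern-ineq}) directly to the sum $X:=\sum_{i=1}^k X_i$. Here the summands are genuinely independent, because each agent is assigned to the exploitation phase independently with probability $1/2$. This means no negative-association argument is needed.

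First I would compute the moments.
\begin{itemize}
\item We have $\mathbb{E}[X_i]=w(a_i)/2$, so $\mu:=\mathbb{E}[X]=\OPT/2$.
\item Since $X_i\in\{0,w(a_i)\}$, the centered variable satisfies $X_i-\mathbb{E}[X_i]\le w(a_i)/2\le \vmax/2$, so we can take $b=\vmax/2$.
\item The variance of each term is $w(a_i)^2/4\le \vmax\, w(a_i)/4$, using $0\le w(a_i)\le\vmax$. Summing gives $V\le \vmax\OPT/4$.
\end{itemize}

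Next I would set the deviation to $t:=\mu-\OPT/4=\OPT/4$. The event $\{X\le \OPT/4\}$ is exactly $\{X\le\mu-t\}$, so Bernstein yields
\[
\Prob{X\le \tfrac{\OPT}{4}}\le \exp\!\left(-\frac{\OPT^2/16}{\vmax\OPT/2+\vmax\OPT/12}\right)=\exp\!\left(-\frac{12}{112}\cdot\frac{\OPT}{\vmax}\right)=\exp\!\left(-\frac{3}{28}\cdot\frac{\OPT}{\vmax}\right).
\]

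Finally I would invoke the hypothesis $\OPT\ge 29\vmax$. The exponent is then at most $-87/28\approx -3.107$, and $e^{-3.107}\approx 0.0447\le 0.07$, which gives the claim with some room to spare.

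The only step needing care is the variance bound. A cruder estimate such as $V\le k\vmax^2/4$ would be useless, since $k$ can be large. The bound must instead be written in terms of $\sum_i w(a_i)=\OPT$ via $w(a_i)^2\le \vmax w(a_i)$, which is the same trick used in Appendix~\ref{bern}. With that bound in hand, the rest is arithmetic.

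I would also double-check that the inequality direction matches the lower-tail form of Theorem~\ref{thm:bern-ineq}. The lower-tail form requires an upper bound on $X_i-\mu_i$. Our bound $X_i-\mu_i\le \vmax/2$ is stated for that quantity, so the application is valid as written.
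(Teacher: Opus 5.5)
Your proof is correct and follows the paper's argument: Bernstein's inequality on the independent $X_i$ with $\mu=\OPT/2$, deviation $\OPT/4$, and $V\le \tfrac14\vmax\OPT$ via $w(a_i)^2\le \vmax\, w(a_i)$. The only difference is that you take $b=\vmax/2$ where the paper takes $b=\vmax$, which gives exponent $87/28$ rather than $87/32$; both clear $0.07$, since $e^{-87/32}\approx 0.066$. One minor point: the standard lower-tail Bernstein bound needs $\mu_i-X_i\le b$, not $X_i-\mu_i\le b$ as both you and the paper's statement of Theorem~\ref{thm:bern-ineq} phrase it, but here $|X_i-\mu_i|=w(a_i)/2\le \vmax/2$, so your choice of $b$ is valid either way.
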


\begin{proof}
Each $X_i$ takes values in $[0,v_{\max}]$ and satisfies $\E{}{X_i}=\tfrac12 w(a_i)$.
Let $\mu=\E{}{\sum_i X_i}=\OPT/2$.
Applying Bernstein’s inequality (Appendix~\ref{bern}) with $b=v_{\max}$ and
\[
V
=\sum_i \E{}{(X_i-\E{}{X_i})^2}
\le \tfrac14\sum_i w(a_i)^2
\le \tfrac14\,v_{\max}\OPT
\le \tfrac{\OPT^2}{4\cdot 29},
\]
yields the stated bound after substitution.
\end{proof}

By symmetry, the same bound holds for the exploration phase.
Let $\mathcal E_{\text{split}}$ denote the event that both phases contain optimal value at least $\OPT/4$.
By a union bound,
\[
\Prob{\mathcal E_{\text{split}}} \ge 1-2\cdot 0.07 = 0.86.
\]
Combining with $\mathcal E_{\max}$ and union bounding bad events gives
\[
\Prob{\mathcal E_{\max}\cap \mathcal E_{\text{split}}} \ge 1-0.5-0.14 = 0.36.
\]

Condition on $\mathcal E_{\max}\cap \mathcal E_{\text{split}}$. Then $v(S_2^*) \ge \OPT/4$
and the mechanism uses the correct scale $v_{\max}$. Suppose
\[
v(S_2^*) \in [\ell\,v_{\max},\,u\,v_{\max})
\]
for some $\ell,u>0$, and the mechanism posts linear prices with threshold
$t = m\cdot v_{\max}$.

\begin{lemma}\label{lem:mm-fixed-m}
Under the above conditions, the mechanism collects value at least
\[
\min\!\left\{\frac{\ell-m}{\ell},\ \frac{m-1}{u}\right\}\cdot v(S_2^*).
\]
\end{lemma}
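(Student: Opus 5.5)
The argument splits on whether the mechanism ever hits the budget-unavailability check in the exploitation phase. Throughout, $T$ is the final accepted set and $t=m\,v_{\max}$ is the threshold. Every accepted agent $a$ is paid $p_a=\frac{B}{t}v(a\mid T_a)$, where $T_a$ is the accepted set just before $a$ arrives. Summing these marginals telescopes, so the total payment equals $\frac{B}{t}v(T)$.

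\emph{Case 1: some agent is skipped because its price exceeds the remaining budget.} Let $a$ be such an agent and $T_a$ the accepted set at that moment. Then $\frac{B}{t}v(a\mid T_a) > B-\frac{B}{t}v(T_a)$, so $v(T_a)+v(a\mid T_a)>t$. Since $v(a\mid T_a)\le v(a)\le v_{\max}$ and $v$ is monotone,
\[
v(T)\ \ge\ v(T_a)\ >\ t-v_{\max}=(m-1)v_{\max}.
\]
Using $v(S_2^*)<u\,v_{\max}$, this gives $v(T)\ge \frac{m-1}{u}\,v(S_2^*)$.

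\emph{Case 2: the budget check never binds.} Then every agent of $S_2^*\setminus T$ received the linear offer $\frac{B}{t}v(a\mid T_a)$ and rejected it, so $c(a)>\frac{B}{t}v(a\mid T_a)\ge\frac{B}{t}v(a\mid T)$, the last step by submodularity. Monotonicity and submodularity give
\[
v(S_2^*)-v(T)\ \le\ \sum_{a\in S_2^*\setminus T}v(a\mid T)\ <\ \frac{t}{B}\,c(S_2^*)\ \le\ t=m\,v_{\max}.
\]
Here $c(S_2^*)\le c(S^*)\le B$ because $S_2^*\subseteq S^*$ and $S^*$ is feasible. Combining with $v(S_2^*)\ge \ell\,v_{\max}$, we get $m\,v_{\max}\le \frac{m}{\ell}v(S_2^*)$, hence
\[
v(T)\ \ge\ v(S_2^*)-\frac{m}{\ell}v(S_2^*)=\frac{\ell-m}{\ell}\,v(S_2^*).
\]

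Taking the minimum over the two cases gives the lemma. The step needing the most care is the bookkeeping that the threshold uses the learned $\widehat{v}_{\max}$. On the conditioned event $\mathcal E_{\max}$ it equals $v_{\max}$, so $t=m\,v_{\max}$ exactly and both cases go through as written.
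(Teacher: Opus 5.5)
Your proof is correct and follows the same two-case split as the paper's: when the budget check binds, the telescoping payment identity gives $v(T)>(m-1)v_{\max}$, and when it never binds, every agent of $S_2^*\setminus T$ rejected, so submodularity and the feasibility of $S^*$ give $v(S_2^*)-v(T)\le m\,v_{\max}$. Your finish of the second case, $v(T)\ge v(S_2^*)-\frac{m}{\ell}v(S_2^*)$, is the correct way to conclude; the paper's written chain $v(T)\ge(\ell-m)v_{\max}=\frac{\ell-m}{\ell}v(S_2^*)$ is sloppy, because in fact $(\ell-m)v_{\max}\le\frac{\ell-m}{\ell}v(S_2^*)$.
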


\begin{proof}
If all the agents in $S_2^*$ accept, the mechanism collects $v(S_2^*)$. Otherwise, a buyer in $S_2^*$ is excluded from the solution. This can occur due to either "budget exhaustion" or "offer rejection". We analyze the two conditions separately.  

\begin{enumerate}
    \item \emph{Budget exhaustion.} The remaining budget is less than the price of the next agent, which is at most $v_{\max}B/t = B/m$. Hence at least $(m-1)B/m$ budget has been spent. Since all purchases were made at threshold $t$, the collected value is at least $(m-1)v_{\max} \ge \frac{m-1}{u}v(S_2^*)$.

    \item \emph{Rejections.} If some agents in $S_2^*$ reject, then for each rejected agent $a$, $c(a) > v(a\mid T)\cdot B/t$. Summing over rejected agents and using submodularity yields
    \[ v(S_2^*) - v(T) \le m\,v_{\max}, \]
    and therefore
    \[ v(T) \ge (\ell-m)v_{\max} = \frac{\ell-m}{\ell}v(S_2^*). \]
\end{enumerate}
\end{proof}

\paragraph{Multiplier net for the range $[29/4,\,8000]$.} Let $\ell_1=29/4$ and let
\(
r := \left(\frac{8000}{\ell_1}\right)^{1/6}.
\) Define intervals
\[
[\ell_i,u_i) \quad\text{with}\quad
\ell_i=\ell_1 r^{i-1},\; u_i=r\ell_i,
\qquad i=1,\dots,6.
\]
These intervals cover the range $v(S_2^*)/v_{\max} \in [29/4,\,8000]$.

For each interval define
\[
m_i := \frac{\ell_i(u_i+1)}{\ell_i+u_i},
\]
which equalizes the two terms in Lemma~\ref{lem:mm-fixed-m}.
Let $\mathcal T=\{m_1,\dots,m_6\}$; numerically,
\[
\mathcal T=\{5.89,\,18.76,\,59.00,\,184.85,\,578.62,\,1809.01\}.
\]

Fix any interval $[\ell_i,u_i)$. Plugging $m=m_i$ into Lemma~\ref{lem:mm-fixed-m} gives
\[
\min\!\left\{\frac{\ell_i-m_i}{\ell_i},\ \frac{m_i-1}{u_i}\right\}
= \frac{\ell_i-1}{\ell_i+u_i}
= \frac{1-\frac1{\ell_i}}{1+r}
\;\ge\; \frac{1-\frac1{\ell_1}}{1+r}.
\]
Thus, conditioned on selecting the correct multiplier, the mechanism recovers at least
\[
\gamma \cdot v(S_2^*)
\quad\text{where}\quad
\gamma := \frac{1-\frac1{\ell_1}}{1+r}.
\]

\paragraph{Putting the pieces together.}
On the event $\mathcal E_{\max}\cap\mathcal E_{\text{split}}$ (probability at least $0.36$),
we have $v(S_2^*)\ge \OPT/4$.
Moreover, the correct multiplier is chosen with probability $1/6$.
Therefore,
\[
\E{}{\ALG}
\;\ge\;
0.36 \cdot \frac{1}{6} \cdot \gamma \cdot \frac{\OPT}{4}.
\]
With the above choice of $r$, the resulting constant satisfies
\[
0.36\cdot \frac{1}{6}\cdot \gamma \cdot \frac{1}{4} \;\ge\; \frac{1}{326},
\]
and hence $\E{}{\ALG} \ge \OPT/326$, proving that \textsc{MediumMarket} is $326$-competitive
for $29\,v_{\max}\le \OPT \le 8000\,v_{\max}$.

\section{Proof of Lemma \ref{lem:amanat}}\label{sec: amanat}

Recall that \textsc{NM-ReD} is always budget feasible by design, therefore every agent $a \notin (S_1 \cup S_2)$ has rejected an offer by the mechanism, i.e. $c(a) \geq p_a$. 

Recall that $C$ is a budget feasible set of agents such that each $a \in C$ was offered a price computed with some threshold $t \leq t'$. We partition $C$ into three sets: $C_1 = C \cap S_1$, $C_2 = C \cap S_2$ and $C_3 = C \setminus (C_1 \cup C_2)$. By subadditivity, it holds
$$ v(C) \leq v(C_1) + v(C_2) + v(C_3).$$

To upper bound the value of $C_1$ and $C_2$ we use Theorem \ref{thm:Feige}. This implies that for $j \in \{1,2\}$
$$ v(C_j) \leq 4 \E{}{v(T_j)}.$$

We proceed to upper bound the value of $v(C_3)$. By non negativity and submodularity, it holds
$$ v(C_3) \leq v(C_3) + v(C_3 \cup S_1 \cup S_2) \leq v(C_3 \cup S_1) + v(C_3 \cup S_2).$$

We continue by bounding $v(C_3 \cup S_1)$, by submodularity it holds
$$ v(C_3 \cup S_1) \leq v(S_1) + \sum_{a \in C_3 \setminus S_1} v(a \mid S_1).$$

We next use that each agent $a \in C_3 \setminus S_1$ rejected an offer with threshold at most $t'$ and marginal contribution at least $v(a \mid S_1)$. This leads to 
$$ \sum_{a \in C_3 \setminus S_1} v(a \mid S_1) \leq \frac{t'}{B} \sum_{a \in C_3 \setminus S_1} c(a) \leq t',$$
where the last inequality holds due to the budget feasibility of $C$. 

Therefore we get $v(C_3 \cup S_1) \leq v(S_1) + t'$ and with identical proof $v(C_3 \cup S_2) \leq v(S_2) + t'.$

Putting everything together concludes the proof of the Lemma.

\section{Proof of Lemma \ref{lem:opt-xos}}\label{proof of xos}

Fix $m:=\tfrac{\log n}{\log\log n}$ and $p:=\tfrac{\log\log n}{\log n}$, and set $k:=\lfloor n/m\rfloor$. For a realization $I\sim\mathcal{D}$, define $X_i:=\sum_{a\in S_i}active(a,I)$, so that $X_i\sim \mathrm{Bin}(m,p)$, the variables $(X_i)_{i\in[k]}$ are independent, and $\E{}{X_i}=mp=1$.

Each agent has cost $c(a)=pB$, hence the budget allows purchasing at most $B/c(a)=1/p=m$ agents. For any set $T\subseteq N$ and realization $I$, the valuation is $v_I(T)=\max_{i\in[k]} \mu_i(T)$ where $\mu_i(T)=\sum_{a\in S_i\cap T} \Active(a,I)$. Therefore, for any feasible $T$ with $c(T)\le B$, \[ v_I(T)=\max_{i\in[k]} \sum_{a\in S_i\cap T}\Active(a,I) \le \max_{i\in[k]} \sum_{a\in S_i}\Active(a,I) = \max_{i\in[k]} X_i. \] On the other hand, for every $i$ the offline optimum can pick exactly the active agents inside $S_i$ (or all of them, since $X_i\le m$) and obtain value $X_i$ while staying within budget. Hence \[ \OPT(I) \geq \max_{i\in[k]} X_i. \] 

Let $M:=\max_{i\in[k]} X_i$. We first show $\,\E{}{M}=O(\tfrac{\log k}{\log\log k})$. For any integer $t\ge 1$, by the union bound, \[ \Prob{M\ge t}\le k\cdot \Prob{X_1\ge t}. \] Since $X_1\sim\mathrm{Bin}(m,p)$ has mean $1$, a standard Chernoff bound yields $\Prob{X_1\ge t}\le (e/t)^t$ for all $t\ge 1$. Thus $\Prob{M\ge t}\le k(e/t)^t$ and therefore \[ \E{}{M}=\sum_{t\ge 1}\Prob{M\ge t} \le t_0 + \sum_{t>t_0} k\left(\frac{e}{t}\right)^t, \] where $t_0:=\left\lceil 3\frac{\log k}{\log\log k}\right\rceil$. For all sufficiently large $k$ and all $t\ge t_0$, the term $k(e/t)^t$ is at most $k^{-2}$, so the tail contributes $O(1)$ and hence \[ \E{}{M}=O\!\left(\frac{\log k}{\log\log k}\right). \] For the matching lower bound, set $t_1:=\left\lfloor \frac{\log k}{3\log\log k}\right\rfloor$. We lower bound $\Prob{X_1\ge t_1}$ by considering the event that a fixed subset of $t_1$ agents in $S_1$ are active: \[ \Prob{X_1\ge t_1}\ge \binom{m}{t_1} p^{t_1} (1-p)^{m-t_1} \ge \left(\frac{m}{t_1}\right)^{t_1} p^{t_1} \cdot (1-p)^m. \] Using $mp=1$ and $(1-p)^m\ge e^{-mp}=e^{-1}$, we get \[ \Prob{X_1\ge t_1}\ge e^{-1}\left(\frac{1}{t_1}\right)^{t_1}. \] Since $t_1=\Theta(\tfrac{\log k}{\log\log k})$, we have $\left(\tfrac{1}{t_1}\right)^{t_1}\ge k^{-1/3}$ for all sufficiently large $k$, and thus $\Prob{X_1\ge t_1}\ge k^{-1/3}/e$. By independence, \[ \Prob{M< t_1}=\left(1-\Prob{X_1\ge t_1}\right)^k \le \exp\!\left(-k\cdot \Prob{X_1\ge t_1}\right) \le \exp(-k^{2/3}/e), \] so $\Prob{M\ge t_1}\ge 1-\exp(-k^{2/3}/e)$ and therefore \[ \E{}{M}\ge t_1\cdot \Prob{M\ge t_1} =\Omega\!\left(\frac{\log k}{\log\log k}\right). \] Combining the two bounds yields $\E{}{M}=\Theta(\tfrac{\log k}{\log\log k})$. Finally, since $k=\lfloor n/m\rfloor=\Theta(n\frac{\log\log n}{\log n})$, we have $\log k=\Theta(\log n)$ and $\log\log k=\Theta(\log\log n)$, implying \[ \E{I\sim\mathcal{D}}{\OPT(I)} =\E{}{M} =\Theta\!\left(\frac{\log n}{\log\log n}\right), \] as claimed.

\end{document}